\newtheorem{theorem}{Theorem}
\newtheorem{lemma}[theorem]{Lemma}								%
\newtheorem{proposition}[theorem]{Proposition}	
\newtheorem{corollary}[theorem]{Corollary}	
\newtheorem{assumption}[theorem]{Assumption}	
\newtheorem{remark}{Remark}
\numberwithin{equation}{section}	
\numberwithin{theorem}{section}
\newcommand{\blue}[1]{{\color{blue}{{#1}}}}
\renewcommand{\blue}[1]{{#1}}
\renewcommand{\vec}[1]{\mathbf{#1}}
\def\E{\mathbb{E}}			
\def\P{\mathbb{P}}										
\def\EP{\E}
\def\Q{\mathbb{Q}}
\def\R{\mathbb{R}}
\def\X{\mathcal{X}}
\def\Rplus{\mathbb{R}_{++}}
\def\Zhat{\hat{Z}}
\def\bbd{\mathbb{D}}
\def\D{\bbd}
\def\ncal{\mathcal{N}}
\def\eps{\varepsilon}
\def\om{\omega}
\def \Y {\mathcal{Y}}
\def\d{\partial}		
\def\ind{\mathbb{I}}
\def \abs#1{\left| #1 \right| }
\newcommand{\e}[1]{\operatorname{e}^{#1}}
\newcommand{\cl}{\operatorname{cl}}
\newcommand{\dom}{\operatorname{dom}}
\def\define{:=}
\renewcommand{\vec}[1]{\mathbf{#1}}	
\def\f{\vec{f}}
\def\s{\vec{s}}
\def\q{\vec{q}}
\def\T{\top}
\DeclareMathOperator*{\argmax}{arg\,max}
\DeclareMathOperator*{\esssup}{ess\,sup}
\DeclareMathOperator*{\essinf}{ess\,inf}
\DeclareMathOperator*{\FIX}{FIX}
\newtheorem{example}[theorem]{Example}
\long\def\symbolfootnote[#1]#2{\begingroup\def\thefootnote{\fnsymbol{footnote}}\footnote[#1]{#2}\endgroup}
\begin{document}

\title{Endogenous inverse demand functions} 

\author{
Maxim Bichuch
\thanks{
Department of Applied Mathematics and Statistics,
Johns Hopkins University
3400 North Charles Street, 
Baltimore, MD 21218. 
{\tt mbichuch@jhu.edu}. Work  is partially supported by NSF grant DMS-1736414. Research is partially supported by the Acheson J. Duncan Fund for the Advancement of Research in Statistics.}
\and  Zachary Feinstein
\thanks{
School of Business,
Stevens Institute of Technology,
Hoboken, NJ 07030, USA,
{\tt  zfeinste@stevens.edu}. }
}
\date{\today}
\maketitle

\begin{abstract}
In this work we present an equilibrium formulation for price impacts. This is motivated by the B\"uhlmann equilibrium in which assets are sold into a system of market participants, e.g.\ a fire sale in systemic risk, and can be viewed as a generalization of the Esscher premium.  Existence and uniqueness of clearing prices for the liquidation of a portfolio are studied.  We also investigate other desired portfolio properties including monotonicity and concavity. 
Price per portfolio unit sold is also calculated.  In special cases, we study price impacts generated by market participants who follow the exponential utility and power utility.
\end{abstract}


\section{Introduction}\label{sec:intro}
Buying or selling assets in a financial market impact the prices upward or downward.  Quantifying these price impacts \blue{is} fundamental to many problems within finance, e.g., optimal liquidation and systemic risk.  
\blue{In the price-mediated contagion literature (see, e.g.,~\cite{CFS05,GLT15,AFM16,braouezec2017strategic}),} these price impacts from liquidating assets are modeled by an ``inverse demand function'' which maps the number of assets sold into market prices.  \blue{The ``demand'' nomenclature in the inverse demand function denotes the broader market demand for any liquidated assets as the broader market must take the other side of these transactions.}
In \blue{the price-mediated contagion literature referenced above}, the inverse demand function is typically chosen to follow simple analytical forms for tractability rather than for some financial meaning.  Two classical inverse demand functions -- the linear \cite{GLT15} and exponential \cite{CFS05} -- do indeed have simple financial interpretation (constant absolute and relative price impacts respectively)\blue{, however, these interpretations are provided without any economic justification for why these types of price impacts hold.}  

\blue{
In contrast to these exogenous forms, the goal of this work is to find the fair price of liquidated assets (from, e.g., fire sales) when sold into a market of agents to endogenize the price impacts.
We construct this equilibrium market as a generalization of the risk sharing economies of \cite{arrow1954existence,buhlmann1980economic,buhlmann1984general}.  Hereafter we refer to the equilibrium risk sharing frameworks of \cite{arrow1954existence,buhlmann1980economic,buhlmann1984general} as the B\"uhlmann equilibrium setting, which is formally presented in Section~\ref{sec:buhlmann}.  Briefly, we consider a market of utility maximizing agents. Each agent starts with some (risky) endowment and can trade with other agents so that a market clearing condition (i.e., conservation of total market risk) is satisfied.  
An equilibrium, if it exists and is achieved, provides a set of (asset) transfers as well as a probability measure which is used to price any claim.
In contrast to prior works, we modify the market clearing condition so that the market participants need to, on net, purchase any liquidated claim.  In this way, our modification of the B\"uhlmann equilibrium setting constructs an equilibrium pricing measure allowing us to define a fair price for the liquidated claim.
}

\blue{
In this modified B\"uhlmann equilibrium setting, we investigate the properties of the resulting pricing function.  First, we study the existence and uniqueness of the fair price of a liquidated claim.  Under suitable assumptions, further properties can be placed on this pricing function such as continuity, monotonicity, and decreasing marginal returns.
}
In addition, we investigate the resulting inverse demand function\blue{s (i.e., the mappings of the number of shares of assets or a portfolio being liquidated into a price per unit)} and derive \blue{their} properties.  \blue{These inverse demand functions are the result of an equilibrium, and 
{are thus constructed} \emph{endogenously}} in contrast to the \emph{exogenous} inverse demand functions prevalent in the systemic risk literature when studying price-mediate\blue{d} contagion.

As will be provided in Section~\ref{sec:cs}, the \blue{two aforementioned} classical inverse demand functions \blue{(linear and exponential)} can in fact be obtained in our equilibrium setting.  Importantly, the equilibrium setting considered herein relates the form of the inverse demand function to the underlying assumptions of the state of the market and returns of the traded asset(s).
\blue{In fact,} not every inverse demand function can be obtained from a given market setting; \blue{this is briefly discussed with an example in the introduction of Section~\ref{sec:cs}.} Therefore, special attention needs to be taken when exogenous forms are assumed for the inverse demand function.
Furthermore, when considering fire sales of multiple illiquid assets, it is often assumed that the liquidation of one asset does not impact prices of the other assets (see, e.g., \cite{GLT15,CS17,cont2019monitoring}).  Herein, we find that the equilibrium price impacts may not generally satisfy such a condition. \blue{We construct an example within Section~\ref{sec:cs-power}, where these inverse demand functions generate price cross-impacts even for statistically independent assets.}

\blue{Before continuing to the main body of this work, we wish to highlight two closely related fields of the literature for pricing claims in an equilibrium setting.
First,} \cite{greenwald1991transactional,bernardo2004liquidity} consider trading between two types of agents in order to derive the equilibrium trading price; those works consider market makers who value assets in a similar manner to the utility indifference price \blue{(see, e.g.,~\cite{hodges1989optimal,carmona2008indifference})}.  
\blue{Second,} \cite{gollier1996optimum,dana2007optimal,jouini2008optimal} consider a risk sharing problem between an insured agent and an insurer to determine the equilibrium premium payment.  \blue{These} works find a Pareto optimal transfer of risk and the price of that trade between two agents -- the insured and insurer.  
\blue{We wish to remind the reader that, in contrast to the two agent Pareto transfer problem of~\cite{gollier1996optimum,dana2007optimal,jouini2008optimal},} the B\"uhlmann equilibrium setting considered herein looks for a Nash equilibrium with an arbitrary number of agents.

The organization of this paper is as follows.  First, in Section~\ref{sec:buhlmann}, we will introduce the B\"uhlmann equilibrium problem and how we modify that problem in order to present the general financial setting which we will utilize throughout this work.  The main results are presented in Section~\ref{sec:main}; these results include necessary and sufficient conditions for the existence and uniqueness of clearing prices for asset liquidations.  We also find sufficient conditions for, e.g., the monotonicity and concavity of the value obtained from liquidations.  In Section~\ref{sec:cs}, we demonstrate the form and properties of our endogenous pricing functions under two special cases: when all market participants maximize the exponential utility function and when they all maximize the power utility. The proofs for all results are provided in an Online Appendix.

\section{B\"uhlmann equilibrium setup}\label{sec:buhlmann}
We are motivated in our study by the notion of the B\"uhlmann equilibrium \cite{buhlmann1980economic,buhlmann1984general} over a probability space $(\Omega,\mathcal{F},\P)$. 
Such an equilibrium endogenizes the price impacts of market behavior in a system of $n$ market participants. Each participant $1\le i\le n$, is endowed with a twice differentiable, strictly increasing and concave utility function $u_i$, \blue{with $u_i''>0$} and initial endowments $X_i \in L^\infty$ (such that $X_i \in \bbd$ a.s.) of risky payoffs. 
\blue{In this paper, we will consider two classical settings, when the utility functions are defined on the half line and on the entire real line, i.e., $\dom u_i=\Rplus := (0,\infty)$ and $\dom u_i=\R$ respectively. Throughout this work we will denote this domain as $\bbd$, i.e., $\bbd := \dom u_i$.  Additionally, we will continue to use the notation that $\Rplus := (0,\infty)$ to be the strictly positive real line while we will denote $\R_+ := [0,\infty)$ to be the nonnegative real line.
In case $\D = \Rplus$ we extend $u_i(0) = \lim\limits_{x\searrow0} u_i(x)$ in the broad sense (i.e., allowing for the possibility of $-\infty$) and $u_i(x) = -\infty$ for $x<0$.} 
Recall that the  absolute risk aversion of agent $i$ is $\rho_i(z) := -u_i''(z)/u_i'(z)\blue{>0},~z\in\bbd$. 
Throughout this work, we will denote the expectation under $\P$ as $\E := \E^{\P}$.

\begin{assumption}\label{ass:X}
\blue{Assume $X_i \in L^\infty$ such that $X_i \in \D$ a.s.\ for every market participant $i = 1,...,n$.}  Let $\X = \sum_{i = 1}^n X_i \blue{\in L^\infty},~X_i\in \D$ a.s.\ \blue{and assume} that $\essinf\X\in\bbd$.
\end{assumption}
For simplicity of exposition, we also assume a zero risk-free rate $r=0$. \blue{We use the same static setting as in B\"uhlmann \cite{buhlmann1980economic,buhlmann1984general}, where an equilibrium is solved at the initial time, and the only other time considered is}  
some future terminal time at which all randomness is resolved.  
Each market participant is assumed to be a rational agent insofar as each market participant wishes to maximize her expected utility.  
More specifically, given that each agent is endowed with (risky) endowment $X_i,~1\le i\le n$, she may choose to trade quantities $Y_i$ to reduce her risk and maximize her utility. 
For an equilibrium, the market must clear and every agent must be ``happy'' with the trade. As such, the goal is to find the clearing prices of these trades $Y_i,~1\le 1\le n$. In other words, to find the pricing measure $\Q$. This defines the solution to the B\"uhlmann equilibrium problem as a pair $(Y,\Q)$ satisfying:
\begin{enumerate}
\item \textbf{Utility maximizing}: $Y_i \in \argmax\limits_{\hat Y_i \in L^\infty}  \left\{\EP\left[u_i\left(X_i + \hat Y_i - \E^\Q[\hat Y_i]\right)\right]\right\}$ \blue{with $\EP\left[u_i\left(X_i + Y_i - \E^\Q[Y_i]\right)\right] \in \R$} for every $i \in \{1,2,...,n\}$; and
\item \textbf{Equilibrium transfers}: $\sum_{i = 1}^n Y_i = 0$.
\end{enumerate}
The measure $\Q$ is the endogenously defined probability measure which provides the price of the claims, i.e., the value of $Y_i$ at time $0$ is $\E^\Q[Y_i]$.  Notably, the pricing measure $\Q$ will (generally) differ from $\P$. 
\begin{remark}\label{rem:arrow}
We wish to note that, in the static and finite probability space setting, the B\"uhlmann equilibrium coincides with the Arrow-Debreu equilibrium~\cite{arrow1954existence} (see, e.g.,~\cite{anthropelos2017equilibrium}). We choose to utilize the B\"uhlmann equilibrium setup as it readily allows for general probability spaces without the need to consider an infinite number of commodities (and thus reducing mathematical technicalities). 
This is related to a Nash equilibrium in a pure exchange economy (see, e.g.,~\cite{duffie2010dynamic}).  While that equilibrium setup is very similar to the B\"uhlmann equilibrium problem, herein we explicitly consider the representative agent (see~\eqref{eq:rho} below); this allows us to investigate the properties of the resulting inverse demand function through a modification of the market clearing condition presented below.
\end{remark}
\begin{theorem}\label{thm:buhlmann}
There exists a unique B\"uhlmann equilibrium if:
\begin{enumerate}
\item $\bbd = \R$ and the absolute risk aversions $z \mapsto \rho_i(z) > 0$ are Lipschitz continuous, $i=1, ...,n$; or 
\item $\bbd = \Rplus$, the Inada conditions are satisfied (i.e., $\lim_{z \to 0} u_i'(z) = \infty$ and $\lim_{z \to \infty} u_i'(z) = 0$), and $z \mapsto z u_i'(z)~i=1, ...,n$ are nondecreasing.
\end{enumerate}
\end{theorem}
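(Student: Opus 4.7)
The plan is to reduce the equilibrium problem to a finite-dimensional fixed-point problem via first-order conditions, then exploit strict concavity of each $u_i$ for uniqueness. For each agent, $\hat Y_i \mapsto \EP\bigl[u_i(X_i + \hat Y_i - \E^\Q[\hat Y_i])\bigr]$ is strictly concave, so its optimum $Y_i$ is characterized by a Lagrange multiplier $\lambda_i > 0$ with
\begin{equation}
u_i'\bigl(X_i + Y_i - \E^\Q[Y_i]\bigr) = \lambda_i M, \qquad M := \tfrac{d\Q}{d\P}.
\end{equation}
Writing $Z_i := X_i + Y_i - \E^\Q[Y_i]$ and combining $\sum_i Y_i = 0$ with $\sum_i \E^\Q[Y_i] = 0$ yields $\sum_i Z_i = \X$ and the budget identities $\E^\Q[Z_i] = \E^\Q[X_i]$, one of which is redundant after summation.

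Setting $f_i := (u_i')^{-1}$, which is a strictly decreasing continuous bijection onto $\bbd$ (via Inada in case 2, or via Lipschitz control of $\rho_i$ in case 1 that forces $u_i'(\R) = (0,\infty)$), the equilibrium is encoded by the single pointwise equation
\begin{equation}\label{eq:sketch1}
\sum_{i=1}^n f_i\bigl(\lambda_i M(\omega)\bigr) = \X(\omega) \quad \text{a.s.}
\end{equation}
together with the $n-1$ independent integral constraints $\E[M f_i(\lambda_i M)] = \E[M X_i]$ for $i \ge 2$ and the normalization $\E[M] = 1$. Since the left side of~\eqref{eq:sketch1} is strictly decreasing and continuous in $M$, for each $\lambda \in \Rplus^n$ this equation defines a unique measurable $M_\lambda > 0$; integrability of $M_\lambda$ and $f_i(\lambda_i M_\lambda)$ then follows from $\X \in L^\infty$, $\essinf \X \in \bbd$, and the structural growth bounds of the respective hypothesis.

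It remains to choose $\lambda$ so that the excess-demand map
\begin{equation}
\Phi(\lambda) := \Bigl(\E[M_\lambda]-1,\; \bigl\{\E[M_\lambda(f_i(\lambda_i M_\lambda) - X_i)]\bigr\}_{i=2}^n\Bigr)
\end{equation}
vanishes. Continuity of $\lambda \mapsto M_\lambda$ is a standard consequence of the implicit function theorem applied pointwise and dominated convergence. I would then establish the correct boundary behaviour of $\Phi$ as any $\lambda_i \to 0$ or $\lambda_i \to \infty$ and invoke a Brouwer or degree-theoretic argument on a compact box in $\Rplus^n$. Uniqueness follows because strict monotonicity of $f_i$ (from strict concavity of $u_i$) propagates to strict monotonicity of $\Phi$ in the $i$-th coordinate. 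Finally, $Y_i = Z_i - X_i + \E^\Q[Z_i - X_i]$ lies in $L^\infty$ because $X_i \in L^\infty$ and $Z_i = f_i(\lambda_i M_\lambda)$ is pointwise dominated by $\X - (n-1)\essinf \X$, which is bounded.

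The principal obstacle is the boundary analysis of $\Phi$, and this is precisely where the two hypotheses play their distinct roles. In case~1, Lipschitz control of $\rho_i > 0$ yields two-sided exponential-type bounds on $u_i'$ which translate into affine controls on $f_i$; these force $\Phi_i \to \pm\infty$ with definite signs as $\lambda_i \to 0,\infty$. In case~2, the Inada conditions combined with $z \mapsto z u_i'(z)$ nondecreasing (equivalent to relative risk aversion at most one) provide the crucial estimate $M_\lambda f_i(\lambda_i M_\lambda) \le C\, M_\lambda \X$ needed both for dominated convergence and for producing the required sign changes of $\Phi_i$ at the boundary of the positive orthant; the nondecreasingness of $z u_i'(z)$ is also what rules out degenerate behaviour of $\E[M_\lambda]$ and ensures that the normalization component of $\Phi$ can be satisfied.
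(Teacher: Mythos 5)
The paper does not prove this theorem at all: it simply cites B\"uhlmann (1984) for $\bbd=\R$ and Aase (1993) for $\bbd=\Rplus$, whose arguments run through the Borch characterization of Pareto optima and the resulting differential system for the allocations $\Y_i$ (reproduced in Appendix~A), with the Lipschitz hypothesis on $\rho_i$ used for Picard--Lindel\"of uniqueness of that system. Your welfare-weights/excess-demand route is a legitimate alternative in principle, but as written it has two genuine gaps.

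First, the uniqueness step fails as stated. You argue that strict monotonicity of each $f_i=(u_i')^{-1}$ ``propagates to strict monotonicity of $\Phi$ in the $i$-th coordinate'' and that this gives a unique zero. In dimension $n\ge 2$ a continuous map whose $i$-th component is strictly monotone in its own variable can still have many zeros: perturbing $\lambda_i$ changes $M_\lambda$ pointwise through the aggregate equation $\sum_j f_j(\lambda_j M_\lambda)=\X$, and hence moves every component $\Phi_j$, $j\ne i$. Controlling these cross-effects is exactly where the hypotheses are needed (relative risk aversion at most one, i.e.\ $z\mapsto z u_i'(z)$ nondecreasing, is the classical gross-substitutes-type condition for uniqueness; the Lipschitz condition on $\rho_i$ plays the analogous role in the $\bbd=\R$ case), and you instead assign those hypotheses to boundary estimates and integrability. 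Without an argument that $\Phi$ is, say, a strictly monotone operator or satisfies gross substitutes, uniqueness is not established. Second, in case~1 your pointwise inversion presupposes that $u_i'$ maps $\R$ onto $(0,\infty)$, and you claim this follows from ``Lipschitz control of $\rho_i>0$.'' It does not: $\rho_i(z)=e^{-z^2}$ is positive and Lipschitz, yet $\int_\R \rho_i<\infty$, so $u_i'(\R)$ is a bounded subinterval of $(0,\infty)$ and the first-order condition $u_i'(Z_i)=\lambda_i M$ need not be solvable for all values of $\lambda_i M$. Together with the fact that the existence half (boundary behaviour of $\Phi$ plus a degree argument) is only announced rather than carried out, the proposal is a plan with the hardest steps missing rather than a proof.
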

\begin{proof}
This is proven in \cite{buhlmann1984general} if $\bbd = \R$ and in \cite{aase1993equilibrium} if $\bbd = \Rplus$.
\end{proof}
We recall additional detail of the construction of  B\"uhlmann equilibrium in Appendix~\ref{sec:buhlmann-appendix}.

Consider the same market of $n$ participants with utility functions $u_i$ and endowments $X_i$, but now with some \emph{external} portfolio $Z \in L^\infty$ being liquidated into the market.  That is, we consider the modified B\"uhlmann equilibrium problem of determining the pair $(Y,\Q)$ satisfying:
\begin{enumerate}
\item \textbf{Utility maximizing}: $Y_i \in \argmax_{\hat Y_i} \EP\left[u_i\left(X_i + \hat Y_i - \E^\Q[\hat Y_i]\right)\right]$ \blue{with $\EP\left[u_i\left(X_i + Y_i - \E^\Q[Y_i]\right)\right] \in \R$} for every $i \in \{1,2,...,n\}$; and
\item \textbf{Equilibrium transfers}: $\sum_{i = 1}^n Y_i = Z$ for \emph{externally} sold position $Z \in L^\infty$.
\end{enumerate}
Note that if $Z \equiv 0$ then this modified equilibrium coincides exactly with the typical B\"uhlmann equilibrium.

Assume $Z \in L^\infty$ so that the modified equilibrium $(Y,\Q)$ exists (see Section 3.1 below for some discussion on this question). \blue{Implicitly, as an equilibrium solution, $\X + Z - \E^\Q[Z] \in \cl\D$.  For simplicity of exposition, we will assume that $\X + Z - \E^\Q[Z]\in\D$ a.s.\footnote{This condition can be relaxed to $\X + Z - \E^\Q[Z] \in \cl\D$ in case when $\D=\Rplus$, but requires a lengthy technical analysis with multiple clauses.}}

As presented in~\cite{buhlmann1984general} and also detailed in Appendix~\ref{sec:buhlmann-appendix}, the equilibrium probability measure $\Q$ must satisfy the fixed point problem:
\begin{equation}\label{eq:Q}
\frac{d\Q}{d\P}(\omega) = \frac{\exp\left(-\frac{1}{n}\int_{\essinf\X}^{\X(\omega) + Z(\omega) - \E^\Q[Z]} \rho(\gamma)d\gamma\right)}{\EP\left[\exp\left(-\frac{1}{n}\int_{\essinf\X}^{\X + Z - \E^\Q[Z]} \rho(\gamma)d\gamma\right)\right]}.
\end{equation}
Within the construction of the pricing measure $\Q$ from~\eqref{eq:Q} we, implicitly, consider $\rho\blue{>0}$ to be the harmonic average of risk aversions $\rho_i\blue{>0},~1\le i\le n$, i.e.,
\begin{equation}\label{eq:rho}
\rho(\gamma) = n\left(\sum_{i = 1}^n -\frac{u_i'(\Y_i(\gamma))}{u_i''(\Y_i(\gamma))}\right)^{-1} = n\left(\sum_{i = 1}^n \frac{1}{\rho_i(\Y_i(\gamma))}\right)^{-1},
\end{equation}
where $\Y_i,~1\le i\le \blue{n}$ solve a differential system with equilibrium initial conditions.  
\blue{Implicitly, as an equilibrium solution, we have that $\E\left[\exp\left(-\frac{1}{n}\int_{\essinf\X}^{\X + Z - \E^\Q[Z]} \rho(\gamma)d\gamma\right)\right] > 0$ {(noting that, in the case of $\D = \R_{++}$, $\int_{\essinf\X}^0 \rho(\gamma)d\gamma := \lim\limits_{z\searrow 0} \int_{\essinf\X}^z \rho(\gamma)d\gamma \in \R \cup \{-\infty\}$ by $\rho > 0$)}.}
We refer to Appendix~\ref{sec:buhlmann-appendix} and~\cite{buhlmann1984general,aase1993equilibrium,iwaki2001economic} for details of these constructs as well as the individual risk transfers $Y_i$. We will refer to $\rho$ as the risk aversion of the {\it harmonic representative agent}.

\begin{proposition}\label{prop:representative}
Assume that the (utility) functions $u_i: \bbd \to \R,~1\le i\le n$ are
twice differentiable, strictly increasing and concave.
If $\rho_i(z)=  -u_i''(z)/u_i'(z),~z\in\bbd$, is nonincreasing for every $1\le i \le n$, then $\rho: \bbd \to \R_+$ given in~\eqref{eq:rho} is nonincreasing.
\end{proposition}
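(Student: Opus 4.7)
The plan is to reduce the monotonicity of $\rho$ to the monotonicity of the risk-sharing rules $\mathcal{Y}_i$, and then to invoke the classical Borch/B\"uhlmann sharing-rule derivative formula. First, I would recall from the construction recapped in Appendix~\ref{sec:buhlmann-appendix} (see also~\cite{buhlmann1984general,aase1993equilibrium,iwaki2001economic}) that the functions $\mathcal{Y}_i:\bbd\to\bbd$ arise from Pareto-optimal risk sharing of the aggregate $\gamma = \sum_i \mathcal{Y}_i(\gamma)$ among agents with utilities $u_i$, and that they solve the ODE system
\begin{equation}
\mathcal{Y}_i'(\gamma) \;=\; \frac{1/\rho_i(\mathcal{Y}_i(\gamma))}{\sum_{j=1}^n 1/\rho_j(\mathcal{Y}_j(\gamma))}, \qquad i=1,\dots,n,
\end{equation}
together with the budget identity $\sum_i \mathcal{Y}_i(\gamma) = \gamma$ and the equilibrium initial conditions. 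Since each $\rho_j > 0$ on $\bbd$, this immediately yields $\mathcal{Y}_i'(\gamma) \in (0,1]$, so every $\mathcal{Y}_i$ is nondecreasing in $\gamma$.

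The rest is a short monotonicity chain. Fix $i$: by hypothesis $\rho_i$ is nonincreasing on $\bbd$, and by the previous step $\mathcal{Y}_i$ is nondecreasing on $\bbd$, so the composition $\gamma \mapsto \rho_i(\mathcal{Y}_i(\gamma))$ is nonincreasing and strictly positive. Taking reciprocals, $\gamma \mapsto 1/\rho_i(\mathcal{Y}_i(\gamma))$ is nondecreasing. Summing over $i$ preserves monotonicity, so $\gamma \mapsto \sum_{i=1}^n 1/\rho_i(\mathcal{Y}_i(\gamma))$ is nondecreasing and positive, and therefore its reciprocal, multiplied by $n$, is nonincreasing. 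This is exactly $\rho(\gamma)$ as defined in~\eqref{eq:rho}, completing the proof.

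The only nontrivial step is the first one: one must actually have the ODE representation of the sharing rule, or at least the monotonicity $\mathcal{Y}_i'\ge 0$. Thus the ``hard part'' is essentially bookkeeping -- verifying that the sharing rules from the B\"uhlmann construction really do have nonnegative derivatives -- and this is standard from Borch's theorem applied to the first-order conditions $\lambda_i u_i'(\mathcal{Y}_i(\gamma)) \equiv \lambda_1 u_1'(\mathcal{Y}_1(\gamma))$; differentiating in $\gamma$ and using $\sum_i \mathcal{Y}_i'(\gamma)=1$ yields the displayed ODE. Once this is in hand, the rest is the elementary monotonicity argument above, so the proposition follows with no further analytic machinery needed.
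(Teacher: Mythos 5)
Your proof is correct and follows essentially the same route as the paper: both establish $\mathcal{Y}_i' \geq 0$ from the sharing-rule ODE~\eqref{eq:ycal} (your displayed ODE is the same equation, just with $\rho(\gamma)$ written out as the harmonic mean), and then run the identical monotonicity chain through the composition $\rho_i \circ \mathcal{Y}_i$, the reciprocals, and the sum. No gaps.
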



As will be investigated in greater detail below, we are interested in pricing these liquidated contingent claims $Z \in L^\infty$ through the B\"uhlmann market mechanism.  By construction, the value of this contingent claim is given by $\E^\Q[Z]$ where $\Q$ is the B\"uhlmann pricing measure.  Utilizing~\eqref{eq:Q}, this price can be seen to satisfy the fixed point condition
\begin{equation}\label{eq:buhlmann-price}
\E^\Q[Z] = \frac{\EP\left[Z \exp\left(-\frac{1}{n}\int_{\essinf\X}^{\X+Z-\E^\Q[Z]} \rho(\gamma)d\gamma\right)\right]}{\EP\left[\exp\left(-\frac{1}{n}\int_{\essinf\X}^{\X+Z-\E^\Q[Z]} \rho(\gamma)d\gamma\right)\right]}.
\end{equation}
It is this fixed point problem, and variations of it, \blue{that} are the primary focus of this work.

\section{Equilibrium market impacts}\label{sec:main}

Motivated by the B\"uhlmann equilibrium setup, let $R: \bbd \to \R$ be 
the integral of the absolute risk aversion of the harmonic representative agent (up to a multiplicative constant), i.e., 
\[R(z) = \frac{1}{n}\int_{\essinf\X}^z \rho(\gamma)d\gamma.\]
\blue{Similar to the extensions taken above, if $\D=\Rplus,$ we extend $R(0) = \lim\limits_{z\searrow0} R(z)$ in the broader sense of the limit.}
In this way, the value $\E^\Q[Z]$ of the claim $Z \in L^\infty$ originally provided in the fixed point condition~\eqref{eq:buhlmann-price} can be written as:
\begin{equation}\label{eq:H}
\E^\Q[Z] = \frac{\EP\left[Z \exp\left(-R(\X+Z-\E^\Q[Z])\right)\right]}{\EP\left[\exp\left(-R(\X+Z-\E^\Q[Z])\right)\right]}.
\end{equation}
In fact, the function $R$ is bijective with the set of strictly increasing and concave utility functions $u$ for the harmonic representative agent (with equivalence class defined up to multiplicative and additive constants): 
\[u(z) = \int_0^z \exp(-nR(\essinf\X+y))dy.\]
We also note that, using the fact that harmonic mean is bounded from below by its minimum, $R$ is concave if the harmonic representative agent has a nonincreasing absolute risk aversion (in wealth).
In fact, Proposition~\ref{prop:representative} provides conditions on the set of $n$ market participants that guarantees the concavity of $R$.  The basic properties of $R$ are encoded within the following assumption we impose for the remainder of this work.
\begin{assumption}\label{assump:R}
For the rest of this paper we will assume that $R: \bbd \to \R$ is strictly increasing, differentiable, and concave. 
\end{assumption}

For the remainder of this work we will focus on and utilize this generalized function $R$ to encode the financial market and the harmonic representative agent's utility $u$.

The specific study of this work, rather than the modified B\"uhlmann equilibrium itself, is to determine the price and value of a contingent claim $Z \in L^\infty$:
\begin{align}
\label{eq:V} 
V(Z) &= \FIX_v \left\{H_Z(v) := \frac{\EP\left[Z \exp\left(-R(\X + Z - v)\right)\right]}{\EP\left[\exp\left(-R(\X + Z - v)\right)\right]}\right\} \blue{:= \left\{v \in \blue{\dom H_Z}\; | \; v = H_Z(v)\right\}}.
\end{align}
\blue{Here we set the domain of $H_Z$ to be 
\begin{equation*}
\dom H_Z = \begin{cases} \R &\text{if } \D = \R \\ \Big\{v \in \mathbb{R} \; \Big| \; \P(\X+Z \ge v) = 1, \; \E[\exp(-R(\X+Z-v))] > 0\Big\} &\text{if } \D = \Rplus \end{cases} 
\end{equation*}
in order to be consistent with the above formulation. }

This pricing function $V$ can be seen as satisfying the fixed point condition of~\eqref{eq:buhlmann-price}. We will also study two special cases ($R(x) = \alpha (x-\essinf\X)$ and $R(x) = \eta \log(x/\essinf\X)$ \blue{corresponding to the exponential and power utility settings respectively}) in Section~\ref{sec:cs} below which correspond directly with the modified B\"uhlmann setup under specific choices of utility functions.

In addition to the pricing function $V$, we also wish to consider the inverse demand functions generated by this market.  That is, given a portfolio $q$ being liquidated in the market, we wish to find the marginal price $f^q(s)$ for the $s^{th}$ unit sold and the average price for those same units $\bar f^q(s)$.  Such inverse demand functions satisfy the relation:
\begin{equation}\label{eq:Vf}
V(sq) = \int_0^s f^q(\gamma)d\gamma = s \bar f^q(s).
\end{equation}
It is these inverse demand functions that are often introduced and presented in the literature with $V$ derived through the relations of~\eqref{eq:Vf}.  For instance, we refer to~\cite{AFM16} as an important work on fire sales in systemic risk which provides sufficient results on the uniqueness of (external) system liquidations through the application of monotonicity conditions on the volume weighted average price $\bar f^q$ and $s \mapsto s\bar f^q(s)$.  However, in this equilibrium setup of market impacts, we find the construction of the pricing function $V$ from~\eqref{eq:V} to be more natural; in Section~\ref{sec:idf}, we study the inverse demand functions $f^q$ and $\bar f^q$ derived from $V$.


\subsection{Pricing function}\label{sec:value}
Consider a generalized structure from the B\"uhlmann setup in~\eqref{eq:V}. We first investigate the existence of the unique solution to the fixed point problem \eqref{eq:V}.  That is, we study the conditions so that the market is capable of providing a well-defined price for a contingent claim.  This is presented in Theorem~\ref{thm:unique} and expanded in Corollary~\ref{cor:Rplus}.  The equilibrium pricing problem~\eqref{eq:V} endogenizes the market impacts due to the limited liquidity and preferences of the market participants; this is in contrast to the exogenous valuation taken in\blue{, e.g.,~\cite{CFS05,GLT15,AFM16,braouezec2017strategic}} (via assumed forms of the inverse demand function). 

In order to approach the problem of endogenous pricing, we first present a general result on the existence of a fixed point of $H_Z$.
\begin{theorem}\label{thm:exists}
Assume $R$ satisfies Assumption \ref{assump:R} and let $Z \in L^\infty$.
Then there exists a solution \blue{to \eqref{eq:V}, i.e.\  $V(Z) \ne \emptyset$}
\blue{if one of the following conditions hold}:
\begin{enumerate}
\item $\D = \R$;

\item $\D = \Rplus$ and \blue{if there exists $v_0 \in \dom H_Z$ such that $H_Z(v_0) \le v_0$. This latter property holds if either} 
\begin{itemize}
\item 
$H_Z(\essinf[\X+Z]) \leq \essinf[\X+Z]$, when \blue{$\essinf[\X+Z]\in \dom H_Z$}, 
\\ \blue{or} 
\item $\liminf_{v \nearrow \essinf [\X + Z]} H_Z(v) < \essinf[\X + Z]$ \blue{otherwise (i.e., if $\essinf[\X+Z]\not\in \dom H_Z$)}. 
\end{itemize}
\end{enumerate}
\blue{Moreover, the set $V(Z)$ is compact.} 
\end{theorem}

As Theorem~\ref{thm:exists} provides simple conditions for the existence of a fixed point, we now wish to consider the question of uniqueness.  Notably, and not surprisingly, uniqueness of the fixed point requires stronger conditions than existence.  We refer the interested reader to Example~\ref{ex:nonunique} presented in the Online Appendix which provides a simple example in which there exist\blue{s} a multiplicity of equilibria.  Theorem~\ref{thm:unique}, however, provides sufficient conditions for the uniqueness of an equilibrium valuation.
\begin{theorem}\label{thm:unique}
Assume $R$ satisfies Assumption \ref{assump:R} and let $Z \in L^\infty$. \blue{Let either $\bbd = \R$ or $\bbd = \Rplus$.} There exists at most one solution \blue{to \eqref{eq:V}, i.e.\  $\abs{V(Z)}\le 1$} 
if any of the following conditions is satisfied:
\begin{enumerate}[(a)]
\item\label{thm:unique-comonotone} $Z$ and $\X+Z$ are comonotonic;
\item\label{thm:unique-linear} $R$ is linear;
\item\label{thm:unique-monotone} $z\in\Rplus \mapsto z \exp(-R(\X + z))$ is almost surely non-decreasing;
\item\label{thm:unique-concave} $z\in\Rplus \mapsto z \exp(-R(\X + z))$ is almost surely concave.
\end{enumerate}
\end{theorem}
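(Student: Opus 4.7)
The plan is to reformulate the fixed-point problem $V = H_Z(V)$ as a zero-finding problem for the scalar function
\[
\Psi(v) := \E\!\left[(Z-v)\,\exp\bigl(-R(\X+Z-v)\bigr)\right],
\]
since the denominator of $H_Z$ is strictly positive and hence $V = H_Z(V) \iff \Psi(V)=0$. Two identities will drive the analysis. Differentiating under the expectation yields
\[
\Psi'(v) = \E\!\left[\bigl((Z-v)R'(\X+Z-v)-1\bigr)\exp\bigl(-R(\X+Z-v)\bigr)\right],
\]
and, letting $\Pt_v$ denote the probability measure with density proportional to $\exp(-R(\X+Z-v))$,
\[
H_Z'(v) = \mathrm{Cov}^{\Pt_v}\!\bigl(Z,\;R'(\X+Z-v)\bigr).
\]
Throughout I use Assumption~\ref{assump:R}, which provides $R'>0$ and $R'$ nonincreasing.

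For \textbf{(a)}, comonotonicity lets me write $Z$ as a nondecreasing function of $\X+Z$, while $R'(\X+Z-v)$ is a nonincreasing function of $\X+Z$ by concavity of $R$. They are thus antitone functions of a common random variable, so their covariance under any probability measure is nonpositive; hence $H_Z'(v)\leq 0$, and $v\mapsto v-H_Z(v)$ has derivative $\geq 1$, yielding at most one zero. For \textbf{(b)}, a linear $R(z)=\alpha z+\beta$ lets the factor $e^{\alpha v}$ cancel in the numerator and denominator of $H_Z$, so $H_Z(v)\equiv c$ and the equation $V=c$ has a unique solution. For \textbf{(c)}, differentiating $h(z):=z\exp(-R(\X+z))$ shows the hypothesis is equivalent to $zR'(\X+z)\leq 1$ for $z\geq 0$ a.s.; for $z<0$ the same bound holds trivially because $R'>0$, so $(Z-v)R'(\X+Z-v)\leq 1$ a.s.\ and $\Psi'(v)\leq 0$. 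If two zeros $V_1<V_2$ coexisted, monotonicity of $\Psi$ would force $\Psi\equiv 0$ on $[V_1,V_2]$; then $\Psi'\equiv 0$ there forces the nonpositive integrand to vanish a.s., requiring $Z>v$ a.s.\ for all $v\in(V_1,V_2)$. Passing to a supremum over a countable dense subset gives $Z\geq V_2$ a.s., and substituting into $\Psi(V_2)=0$ with $(Z-V_2)\exp(-R(\cdot))\geq 0$ forces $Z\equiv V_2$ a.s., contradicting $\Psi(V_1)=(V_2-V_1)\E[\exp(-R(\X+V_2-V_1))]>0$.

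Case \textbf{(d)} is the main obstacle. Concavity of $h$ on $\Rplus$ makes $v\mapsto h(Z-v)$ concave (composition with the affine map $v\mapsto Z-v$), so $\Psi$ is concave on the range of $v$ for which $Z-v\in\Rplus$ a.s. Concavity alone, however, does not preclude two isolated zeros separated by a positive bump; to close the argument I plan to combine (i) concavity of $\Psi$, (ii) the sign of $\Psi$ at the boundary of its admissible domain (e.g.\ as $v\to\essinf(\X+Z)$, where $\Psi$ is pinned by integrability), and (iii) the identity $\Psi'(V)=g(V)(H_Z'(V)-1)$ with $g(v):=\E[\exp(-R(\X+Z-v))]$ at a fixed point, which at two distinct fixed points would force $H_Z'(V_1)\geq 1\geq H_Z'(V_2)$ and, combined with the strict concavity of $\Psi$ inherited from strict concavity of $h$ on a typical range of $Z-v$, produces a contradiction. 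Tracking the concavity domain carefully, especially when $Z-v$ can dip outside $\Rplus$ where concavity of $h$ is not assumed, is where I expect the proof to require the most work.
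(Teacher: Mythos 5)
Your treatment of cases \eqref{thm:unique-comonotone}--\eqref{thm:unique-monotone} is sound and essentially matches the paper. For \eqref{thm:unique-comonotone} your covariance formulation $H_Z'(v)=\mathrm{Cov}^{\Pt_v}(Z,R'(\X+Z-v))\le 0$ for antitone functions of $\X+Z$ is the same inequality the paper obtains by symmetrizing the double integral; for \eqref{thm:unique-linear} your observation that the factor $e^{\alpha v}$ cancels so that $H_Z$ is \emph{constant} in $v$ is actually slicker than the paper's (which just notes $R'$ constant kills the covariance); and your argument for \eqref{thm:unique-monotone} via $\Psi'\le 0$ plus the endgame ruling out $\Psi\equiv 0$ on an interval is a valid variant of the paper's (which instead notes $\Theta_Z(\essinf Z)>0$ forces $\P(Z\le v^*)>0$ at any root, whence $\Theta_Z'<0$ strictly from the first root onward).

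Case \eqref{thm:unique-concave}, however, has a genuine gap, and the plan you sketch does not point at the right closing step. The missing observation is the sign of $\Psi$ at the \emph{left} endpoint of the feasible interval: $\Psi(\essinf Z)=\E[(Z-\essinf Z)\exp(-R(\X+Z-\essinf Z))]>0$ whenever $Z$ is non-constant (the constant case being trivial). Given concavity of $\Psi$, if $v^*$ is the minimal root then the secant from $\essinf Z$ to $v^*$ has strictly negative slope, so $\Psi'(v^*)<0$; since $\Psi'$ is nonincreasing by concavity, $\Psi'(v)\le\Psi'(v^*)<0$ for all feasible $v\ge v^*$, and $\Psi$ cannot return to zero. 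This is exactly how the paper closes the argument, and it requires no appeal to strict concavity of $h$, to the behavior as $v\nearrow\essinf[\X+Z]$ (your ingredient (ii) looks at the wrong boundary --- integrability there is irrelevant to uniqueness), or to the identity relating $\Psi'$ and $H_Z'$ at fixed points. Your concern that a concave $\Psi$ could have two isolated zeros is correct in the abstract, but the strict positivity at $v=\essinf Z$ is precisely what excludes it. Finally, the issue you flag about $Z-v$ leaving $\Rplus$ is real but resolved by a short separate lemma (the paper's Proposition~\ref{prop:monotone-concave}): for $z\le 0$ the derivative $[1-zR'(\X+z)]\exp(-R(\X+z))$ is automatically positive and strictly decreasing (using $R'>0$ and $R'$ nonincreasing), so monotonicity and concavity on $\Rplus$ propagate to all of $\D-\X$; this same lemma is also needed to make your case \eqref{thm:unique-monotone} bound $(Z-v)R'(\X+Z-v)\le 1$ rigorous on the whole domain.
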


\blue{
\begin{remark}\label{rem:deterministic}
Theorem~\ref{thm:unique}\eqref{thm:unique-comonotone} implies that there exists at most one solution  \blue{to \eqref{eq:V} for any $Z \in L^\infty$, i.e.\  $\abs{V(Z)}\le 1$} 
 if the aggregate holding $\X \in \bbd$ \blue{is} \emph{deterministic}.
\end{remark}
}

\begin{remark}\label{rem:R2}
Recall that $R$ \blue{is} the integral of the absolute risk aversion of the harmonic representative agent \blue{up to a multiplicative constant}. Condition~\eqref{thm:unique-monotone} of Theorem~\ref{thm:unique} can, thus, be viewed with respect to the risk aversion of the agent if $\essinf\X > 0$.  That is, $z\in\Rplus \mapsto z\exp(-R(\X+z))$ is a.s.\ nondecreasing if and only if $zR'(\X+z) \leq 1$ a.s.  In particular if $\essinf\X > 0$, by concavity \blue{of $R$}, this is true if $zR'(z) \leq 1$ for every $z > 0$.  Therefore, uniqueness of $V$ is guaranteed if the relative risk aversion of the harmonic representative agent is bounded from above by $\blue{n}$.
\end{remark}

In fact, if any of the conditions \eqref{thm:unique-comonotone}-\eqref{thm:unique-concave} of the above theorem is satisfied, we can strengthen the existence result condition of Theorem \ref{thm:exists}. Namely, in case $\D = \Rplus$, the sufficient condition for existence presented in Theorem~\ref{thm:exists} is also necessary for existence. 
\begin{corollary}\label{cor:exist-cond}
Assume $R$ satisfies Assumption \ref{assump:R} with $\D = \Rplus$ and let $Z \in L^\infty$.  If any of the conditions of Theorem~\ref{thm:unique} is satisfied then there exists a unique solution \blue{to \eqref{eq:V} , i.e.\  $\abs{V(Z)}= 1$}
if and only if $H_Z(\essinf[\X+Z]) \leq \essinf[\X+Z]$, when \blue{$\essinf[\X+Z]\in\dom H_Z$,} or $\liminf\limits_{v \nearrow \essinf[\X+Z]} H_Z(v) < \essinf[\X+Z]$ if otherwise  \blue{$\essinf[\X+Z]\not\in\dom H_Z$}. Furthermore, this solution is bounded from above by $\essinf[\X+Z]$, i.e.\ $V(Z) \le \essinf[\X+Z].$
\end{corollary}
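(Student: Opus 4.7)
The plan is to combine Theorems~\ref{thm:exists} and~\ref{thm:unique} for the forward direction, and then to argue necessity of the boundary condition by contrapositive, exploiting the structural property of $v \mapsto H_Z(v) - v$ that each of the conditions (a)--(d) of Theorem~\ref{thm:unique} imposes.

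The forward direction is immediate: the stated hypothesis is exactly the sufficient condition in Theorem~\ref{thm:exists}(2), which delivers a fixed point $V(Z) = H_Z(V(Z))$, and uniqueness then follows from Theorem~\ref{thm:unique}. The upper bound $V(Z) \le \essinf[\X+Z]$ is a structural consequence of $\D = \Rplus$: since $R$ is only defined on $\Rplus$, the equation $V(Z) = H_Z(V(Z))$ requires $\X+Z-V(Z) \ge 0$ almost surely, hence $V(Z) \le \essinf[\X+Z]$.

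For the converse I would argue the contrapositive. Write $\bar v := \essinf[\X+Z]$, $g(v) := H_Z(v) - v$, and $F(v) := \psi(v)\, g(v)$ where $\psi(v) := \EP[\exp(-R(\X+Z-v))] > 0$, so that $F$ and $g$ share signs. Since $|H_Z(v)| \le \|Z\|_\infty$, we have $g(v) \to +\infty$ as $v \to -\infty$; in particular there exists $v_0 < \bar v$ with $g > 0$ (hence $F > 0$) on $(-\infty, v_0]$. Assume toward contradiction that $g(\bar v) > 0$ (when $H_Z(\bar v)$ is defined) or $\liminf_{v \nearrow \bar v} H_Z(v) \ge \bar v$ (otherwise). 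Under each of conditions (a)--(c) of Theorem~\ref{thm:unique}, I would show by a direct monotonicity argument that $F > 0$ throughout $(-\infty, \bar v]$, precluding any fixed point: in (a), a comonotonicity/FKG covariance bound, together with the non-increasingness of $R'$ by concavity of $R$, yields $H_Z'(v) \le 0$ and hence $g$ strictly decreasing; in (b), linearity of $R$ makes $H_Z$ identically constant and $g$ affinely strictly decreasing; in (c), the a.s.\ monotonicity of $z \mapsto z \exp(-R(\X+z))$ passes pointwise through the expectation to force $v \mapsto F(v)$ to be non-increasing, so that $F(\bar v) > 0$ propagates backward.

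The main obstacle is case (d), where neither $g$ nor $F$ need be monotone. Here the a.s.\ concavity of $z \mapsto z\exp(-R(\X+z))$, together with the affinity of $v \mapsto Z-v$, propagates under the expectation to give concavity of $v \mapsto F(v)$. The key observation is the classical fact that a concave function strictly positive at two points is strictly positive on the entire segment between them: applied to $v_0$ (where $F(v_0) > 0$ by the left asymptotics) and to $\bar v$ (where $F(\bar v) > 0$ under the negated boundary condition), this forces $F > 0$ on $[v_0, \bar v]$, and thus on all of $(-\infty, \bar v]$, again precluding any fixed point and contradicting the assumed existence of $V(Z)$. The $\liminf$ variant, in which $H_Z$ is undefined at $\bar v$, is handled in parallel by taking left limits in place of boundary values and adjusting strict versus non-strict inequalities as required by the statement.
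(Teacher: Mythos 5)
Your proposal is correct and takes essentially the same route as the paper: sufficiency comes from combining Theorems~\ref{thm:exists} and~\ref{thm:unique}, and necessity is shown by proving that $F(v)=\EP[(Z-v)\exp(-R(\X+Z-v))]$ (the paper's $\Theta_Z$) stays strictly positive on the whole feasible range whenever the boundary condition fails --- via monotonicity under conditions (a)--(c) and concavity of $F$ under (d). Your ``concave and positive at both endpoints'' packaging of case (d) is equivalent to the paper's minimal-root argument, and the $\liminf$ variant is handled at the same level of detail in both.
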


Rather than checking the value of $H_Z(\essinf[\X+Z])$ for existence of a fixed point for \eqref{eq:V} (and therefore the pricing measure $\Q$) if $\bbd = \Rplus$, we wish to provide an alternative sufficient condition for existence.
\begin{corollary}\label{cor:Rplus}
Assume $R$ satisfies Assumption \ref{assump:R} with $\bbd = \Rplus$ and let $Z \in L^\infty$.
If 
\begin{align}
\EP\left[\exp\left(-R(\X+Z - \essinf [\X+Z])\right)\right]=\infty
\label{eq:infty-cond}
\end{align}
 then $Z\in\dom V$. 
\end{corollary}
This condition can be used to guarantee existence of a clearing price in many examples with $\bbd = \Rplus$.  Indeed, if $\P(\X + Z = \essinf[\X+Z]) > 0$ and $\lim_{z \searrow 0} R(z) = -\infty$ then $Z \in \dom V$.  As will be considered in Corollary~\ref{cor:bernoulli} below, the case in which $\X + Z$ attains its essential infimum with positive probability can be thought of as an extreme systemic shock scenario.  Specifically, under a systemic shock, all assets would attain their worst case scenario, i.e., their essential infimum.  Corollary~\ref{cor:Rplus} can be used to demonstrate that there exists a clearing price if a systemic shock is possible.

\subsection{Extension and Selection of $V$}\label{sec:extension}
We have now successfully defined $V$ on $L^\infty$ in case $\D=\R$, but have additional requirements if $\D=\Rplus$ to ensure existence -- specifically, we must at least have that either $\liminf_{v \nearrow \essinf [\X + Z]} H_Z(v) < \essinf [\X + Z]$ or, provided \blue{$\essinf[\X+Z] \in \dom H_Z$}, $H_Z(\essinf [\X + Z]) \leq \essinf [\X + Z]$ in order to define $V$. 
Additionally, except under the conditions of Theorem~\ref{thm:unique}, we cannot guarantee that the price of a claim $Z \in \dom V$ is uniquely defined.
Therefore, in this section we look for a way to extend the definition of $V$ to the entire space $L^{\infty}$ even in the case when $\D=\Rplus$ and to select the appropriate price of $Z \in \dom V$ when there exists a multiplicity of prices. \blue{Mathematically, this extension is done for convenience so that the domain in all cases (whether $\D=\R$ or $\D=\Rplus$) can be the entire space $L^{\infty}$.}

Conceptually, this extension and selection problem are fundamentally distinct cases.  The former occurs for claims $Z \in L^\infty$ such that $V(Z) = \emptyset$; the latter occurs for claims $Z \in L^\infty$ such that $V(Z)$ has cardinality at least 2.  The general notion for determining the \emph{unique} offered price for $Z$ in any case follows from two notions:
\begin{itemize}
\item market participants compete for underpriced claims which can be used to improve utility and
\item market participants prefer paying less to more.
\end{itemize}
Though the case of $Z \not\in\dom V$ may not yield a ``fair'' price in the B\"uhlmann sense, we assume, as in the fire sale literature~\cite{CFS05,AFM16,feinstein2015illiquid}, that the external seller of $Z$ is forced to complete the liquidation and thus must accept the value provided by the market.  Similarly, if there exist multiple equilibrium prices for the claim, the external seller must accept whichever fair price is provided by the market.

Mathematically, this extension and selection of the pricing function $V$ is given by $\bar V: L^\infty \to \R$ such that, for payoff $Z\in L^\infty$, 
\begin{equation}
\label{eq:barV} \bar V(Z) := \begin{cases} \min V(Z) &\text{if } Z \in \dom V, \\ \essinf[\X+Z] &\text{if } Z \not\in \dom V. \end{cases}
\end{equation}
\blue{We now argue that the way this extension and selection must be set as in \eqref{eq:barV} to be financially meaningful. This is extended further within Lemma~\ref{lemma:barV} below to demonstrate that $\bar V$ satisfies some desired mathematical and financial properties.}
This definition encodes exactly the conceptual notions provided above.  If multiple clearing prices are available, the minimal such price is selected as that is preferred by all market participants;\footnote{The minimum $\min V(Z)$ is well defined since $V(Z)$ is compact for any $Z \in L^\infty$ as follows from Theorem \ref{thm:exists}.} this is similar to an English auction, with the price starting at $\essinf Z$, and rising to $\min V(Z)$ which is the smallest equilibrium, and where it therefore remains. Note that this is the lowest price that no market participant finds the claim underpriced.  On the other hand, if no ``fair'' price exists (which can only occur if $\D = \Rplus$ by Theorem~\ref{thm:exists}) then the maximum price the market participants are both willing and are able \blue{to} pay is offered; more specifically, if $Z\not\in\dom V$ then the market finds that it does not have sufficient liquidity to pay a fair (clearing) value for the claim $Z$ as $H_Z(v) > v$ for every $v \in [\essinf Z,\essinf[\X+Z])$\blue{, as follows from Theorem \ref{thm:exists}}.  Since the market is only limited by the \blue{risk-free capital} available to it, this means the market will, instead, offer $\essinf[\X+Z]$ for the claim $Z$ (which is a good deal for the market participants and will be accepted due to the assumption that the external liquidation of the claim is forced on the market).
Finally, and trivially, if $Z\in\dom V$ such that $V(Z)$ is a singleton (e.g., under the conditions of Theorem~\ref{thm:unique}), then $\bar V(Z)$ provides exactly this price.

With this definition of the extension and selection $\bar V$ of $V$, we now wish to consider some intuitive properties of $\bar V$.  Namely, we prove that $\bar V$ is bounded, law invariant, cash translative, and (lower semi)continuous. We also formulate additional properties for $\bar V$ under certain conditions on $R$.  Specifically, we provide conditions for the monotonicity and concavity of $\bar V$; that is, respectively, greater liquidations provide a larger value and the marginal increase in value is decreasing.  We note that, taken together, these properties of $\bar V$ construct a monetary risk measure (with modification up to negative signs) which may be of interest for future study.
\begin{lemma}\label{lemma:barV}
Assume $R$ satisfies Assumption~\ref{assump:R}.
\begin{enumerate}
\item\label{lemma:barV-bound} $\bar V(Z) \in [\essinf Z, \esssup Z]$ for any $Z \in L^\infty$. 
\item\label{lemma:barV-law} $\bar V(Z_1) = \bar V(Z_2)$ if $(Z_1,\X+Z_1) \overset{(d)}{=} (Z_2,\X+Z_2)$, i.e.\ equality in distribution.
\item\label{lemma:barV-translative} $\bar V(Z+z) = \bar V(Z)+z$ for any $Z \in L^\infty$ and $z \in \R$. 
\item\label{lemma:barV-cont} $\bar V$ is lower semicontinuous in the strong topology.\footnote{The strong topology is the normed topology with norm $\|\cdot\|_{\infty}$, i.e., $(Y_m)_{m \in \mathbb{N}} \to Y \in L^\infty$, if $\lim_{m \to \infty} \|Y_m - Y\|_{\infty} = 0$.}  If any of the conditions of Theorem~\ref{thm:unique} hold then $\bar V$ is continuous in the strong topology.
\item\label{lemma:barV-monotone} \blue{If $z \in \Rplus \mapsto z\exp(-R(\X+z))$ is a.s.\ nondecreasing then $\bar V(Z_1) \geq \bar V(Z_2)$ for $Z_1 \geq Z_2$ a.s.\ and $\bar V$ is Lipschitz continuous with Lipschitz constant $1$ with respect to the maximum norm.}
\item\label{lemma:barV-concave} \blue{If $z \in \Rplus \mapsto z \exp(-R(\X+z))$ is a.s.\ concave then $\bar V$ is concave and upper semicontinuous in the weak* topology.}\footnote{The weak* topology is the coarsest topology such that $Y \in L^\infty \mapsto \EP[Y^* Y]$ is continuous for any integrable random variable $Y^* \in L^1$, i.e., $(Y_j)_{j \in J} \to Y \in L^\infty$ (for index set $J$) if $\lim_{j \in J}\EP[Y^* Y_j] = \EP[Y^* Y]$ for every $Y^* \in L^1$.}
\end{enumerate}
\end{lemma}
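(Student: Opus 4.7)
The plan is to handle each of the six assertions largely independently, exploiting the two-branch definition of $\bar V$ in \eqref{eq:barV} together with the fact that $H_Z(v)$ is a ratio of expectations with strictly positive weights. For (1)--(3), on $\{Z \in \dom V\}$ any fixed point $v$ of $H_Z$ is an expectation of $Z$ under the probability measure with density proportional to $e^{-R(\X+Z-v)}$, which immediately places $\min V(Z)$ into $[\essinf Z, \esssup Z]$; on $\{Z \notin \dom V\}$ (which by Theorem~\ref{thm:exists} requires $\bbd = \Rplus$), the upper bound follows because the failure of the existence condition forces $H_Z(v) > v$ for $v$ arbitrarily close to $\essinf[\X+Z]$ while $H_Z$ is dominated by $\esssup Z$, and the lower bound follows from $\X \ge 0$. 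Both $H_Z$ and $\essinf[\X+Z]$ depend on $Z$ only through the joint law of $(Z,\X+Z)$, which gives (2). For (3) a direct substitution yields $H_{Z+z}(v) = z + H_Z(v-z)$, so $V(Z+z) = V(Z) + z$ and both branches of \eqref{eq:barV} shift accordingly.

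For (4), fix $Z_m \to Z$ in $\|\cdot\|_\infty$ and set $v_m := \bar V(Z_m)$. By (1) the sequence $(v_m)$ is bounded, so it suffices to verify $v^* \ge \bar V(Z)$ along any subsequential limit $v_{m_k} \to v^*$. A dominated convergence argument in the equality $v_{m_k} = H_{Z_{m_k}}(v_{m_k})$, based on the continuity of $R$ and the uniform convergence of $\X + Z_m - v_m$, lets me pass to the limit whenever $Z_{m_k} \in \dom V$ and $v^* < \essinf[\X+Z]$, yielding $v^* \in V(Z)$ and hence $v^* \ge \min V(Z) = \bar V(Z)$. The complementary cases rest on $\essinf[\X+Z_{m_k}] \to \essinf[\X+Z]$ together with the observation that, when $\bbd = \Rplus$, every fixed point of $H_Z$ is bounded by $\essinf[\X+Z]$; in particular, if $Z \notin \dom V$ then any limit $v^* < \essinf[\X+Z]$ of fixed points $v_{m_k}$ would produce a fixed point of $H_Z$ by the same dominated-convergence argument, contradicting $Z \notin \dom V$. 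Under any of the uniqueness conditions of Theorem~\ref{thm:unique}, $V(Z)$ is a singleton whenever nonempty, so every subsequential limit equals $\bar V(Z)$ and lower semicontinuity upgrades to continuity.

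For (5) and (6), the assumed structural hypothesis together with Theorem~\ref{thm:unique} and Corollary~\ref{cor:exist-cond} renders $V(Z)$ a nonempty singleton for every $Z \in L^\infty$, so $\bar V$ is everywhere single-valued. For the monotonicity part of (5), I would compare the fixed-point equations at a common $v$: the nondecreasing hypothesis on $z \mapsto z e^{-R(\X+z)}$ should translate into $H_{Z_1}(v) \ge H_{Z_2}(v)$ for $Z_1 \ge Z_2$ evaluated at the fixed point $v$ of $H_{Z_2}$, after which the fact that $v \mapsto H_{Z_1}(v) - v$ crosses zero from above at its unique root gives $\bar V(Z_1) \ge \bar V(Z_2)$. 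The Lipschitz bound with constant $1$ is then immediate from (3) and the sandwich $Z_2 - \|Z_1 - Z_2\|_\infty \le Z_1 \le Z_2 + \|Z_1 - Z_2\|_\infty$. For concavity in (6), the a.s.\ concavity of $z \mapsto z e^{-R(\X+z)}$ will be used to show that for $Z_t := t Z_1 + (1-t) Z_2$ and $v_t := t \bar V(Z_1) + (1-t) \bar V(Z_2)$ one has $H_{Z_t}(v_t) \ge v_t$, from which the crossing property again gives $\bar V(Z_t) \ge v_t$. The weak* upper semicontinuity then follows from the classical observation that a concave, monotone, cash-translative, norm-continuous functional on $L^\infty$ has the Fatou property and is therefore weak* upper semicontinuous. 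I expect the main obstacle to be the pointwise comparison of $H_Z$ across different $Z$ in (5) and (6): the ratio structure of $H_Z$ makes a naive pointwise argument fail, and I anticipate needing an FKG-type correlation inequality or a direct Jensen-type manipulation of the numerator and denominator weights in order to convert the assumed monotonicity or concavity of the integrand into the required property of $H_Z$.
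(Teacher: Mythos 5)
Parts \eqref{lemma:barV-bound}--\eqref{lemma:barV-translative} and the lower semicontinuity in \eqref{lemma:barV-cont} are essentially fine; your sequential compactness argument for lower semicontinuity is a legitimate, more elementary alternative to the paper's set-valued fixed-point machinery. The substantive problems are in \eqref{lemma:barV-monotone} and \eqref{lemma:barV-concave}. The obstacle you flag at the end --- that the ratio structure of $H_Z$ blocks a pointwise comparison and might require an FKG-type inequality --- is resolved by a one-line reformulation that you have not found and that the paper isolates as Proposition~\ref{prop:opt-rep}: since the denominator of $H_Z(v)$ is strictly positive, $H_Z(v)\geq v$ if and only if $\Theta_Z(v):=\EP[(Z-v)\exp(-R(\X+Z-v))]\geq 0$, and $\Theta_Z(v)$ is a \emph{single} expectation of the map $z\mapsto z\exp(-R(\X+z))$ evaluated at $z=Z-v$, so the assumed pointwise monotonicity (resp.\ concavity) of that map transfers directly. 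Without this step your comparison ``$H_{Z_1}(v)\geq H_{Z_2}(v)$'' is unsubstantiated (and unnecessary: only the sign of $H_{Z_1}(v)-v$ at $v=\bar V(Z_2)$ is needed). Relatedly, your premise that the uniqueness conditions together with Corollary~\ref{cor:exist-cond} make $V(Z)$ nonempty for every $Z\in L^\infty$ is false when $\bbd=\Rplus$: Corollary~\ref{cor:exist-cond} is an if-and-only-if criterion, and claims $Z\notin\dom V$ persist, so the branch $\bar V(Z)=\essinf[\X+Z]$ must be carried through the monotonicity and concavity arguments (the paper does this uniformly via the sup-representation $\bar V(Z)=\sup\{v\geq\essinf Z:\Theta_Z(v)\geq 0,\ \essinf[\X+Z]-v\in\bbd\}$, with the constraint $\essinf[\X+Z]-v\in\bbd$ handled by superadditivity of the essential infimum). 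The same issue infects your continuity upgrade in \eqref{lemma:barV-cont}: a subsequential limit $v^*$ arising from $Z_{m_k}\notin\dom V$ equals $\essinf[\X+Z]$, and you must separately show this forces $\bar V(Z)=\essinf[\X+Z]$ (by passing $\Theta_{Z_{m_k}}(v)>0$ to the limit for each fixed $v<\essinf[\X+Z]$); the paper instead runs the Berge maximum theorem on the constraint correspondence.

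The weak* upper semicontinuity argument in \eqref{lemma:barV-concave} rests on a principle that is not a theorem: a concave, monotone, translative, norm-continuous functional on $L^\infty$ need \emph{not} have the Fatou property or be weak* upper semicontinuous. A Banach limit (equivalently, integration against a purely finitely additive measure) is linear, hence concave, monotone, translative and norm-continuous, yet fails weak* semicontinuity precisely because it admits no $L^1$ density. The Fatou property is an additional hypothesis in the risk-measure literature, not a consequence of the other axioms. The paper's route is to use concavity together with the Krein--\v{S}mulian-type criterion of \cite[Proposition 5.5.1]{KS09}, reducing weak* closedness of the superlevel sets to closedness in probability on norm-bounded sets, and then to verify the latter by hand from the representation $\{\bar V(Z)\geq v\}=\{\Theta_Z(v)\geq 0,\ \X+Z-v\geq 0 \text{ a.s.}\}$ (with an $\epsilon$-perturbation to handle the boundary case $v=\essinf[\X+Z]$ when $\bbd=\Rplus$). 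You would need to supply an argument of this kind; the ``classical observation'' you invoke does not exist.
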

\begin{remark}
Note that the extra conditions of Lemma~\ref{lemma:barV}\eqref{lemma:barV-monotone} and~\eqref{lemma:barV-concave} guarantee that there exists at most a single fair price $V(Z)$ for any claim $Z \in L^\infty$ as proven in Theorem~\ref{thm:unique}.
\end{remark}
It turns out the same extension $\bar V$ can also be achieved using a wholly different notion provided $\bbd = \Rplus$ and $\lim\limits_{z \to 0} R(z) = -\infty$ under our conditions for uniqueness from Theorem~\ref{thm:unique}. 
\blue{For the rest of this subsection,} we wish to highlight the dependency of $V$ on $\X$, and therefore write $V(Z;\X)$. 
Let $B[p]\sim Bern(p)$ be a Bernoulli random variable representing the ruin (with probability $1-p$) of the banking system. \blue{It is natural to assume that in case of ruin the assets will pay their minimum, and in particular the claim will pay $\essinf Z$ and the aggregate endowment will be worth $\essinf \X$.}
It follows that the position $Z\in L^{\infty}$ becomes $B[p](Z-\essinf Z)+\essinf Z$ as it is only payable if the system has not defaulted; similarly the assets of all market participants would be subject to the same systemic stress, i.e., $\X$ becomes $B[p](\X-\essinf\X)+\essinf\X$.  As $p$ tends towards 1, i.e., the probability of systemic ruin tends towards 0, the claim $Z$ and aggregate assets $\X$ are again recovered. 
\blue{Indeed, recall our assumption that $R(0)\define \lim\limits_{z \to 0} R(z) = -\infty.$ 
Therefore the condition~\eqref{eq:infty-cond} of Corollary~\ref{cor:Rplus} holds for any $p\in(0,1)$.  That is, under these systemic shocks $B[p](Z-\essinf Z)+\essinf Z \in \dom V(\cdot ; B[p](\X-\essinf\X)+\essinf\X)$ for any $p \in (0,1)$.
}
Armed with this observation, we formulate an alternative definition for $\bar V$ under the uniqueness conditions of Theorem~\ref{thm:unique}:
\begin{equation}
\label{eq:barV-bernoulli} \hat V(Z;\X) = \lim_{p \nearrow 1} V(B[p](Z-\essinf Z) + \essinf Z; B[p](\X-\essinf\X) +\essinf\X).
\end{equation}
That is, up to modification via the essential infimum, the (extended) price $\hat V(Z;\X)$ of $Z$ is provided by the limiting behavior of the price of $B[p]Z$ under market assets $B[p]\X$ as the probability of systemic ruin tends towards 0.  \blue{The limit within~\eqref{eq:barV-bernoulli} is guaranteed to exist by monotone convergence; this result is formalized within the proof of Corollary~\ref{cor:bernoulli}.}
The following result shows that $\hat V = \bar V$.  Thus, this new extension $\hat V$ provides the interpretation that the market prices $Z$ as if the probability of systemic ruin is negligibly small rather than the explicitly setting the probability to 0. 

\begin{corollary}\label{cor:bernoulli}
Assume $R$ satisfies Assumption \ref{assump:R}. Assume also that $\D=\Rplus$ and $\lim_{z \to 0} R(z) = -\infty$. 
Let $Z \in L^\infty$ and assume that $B[p]$ in the construction \eqref{eq:barV-bernoulli} is independent of $\X,Z$ for every $p \in (0,1)$. 
If any of the conditions of Theorem~\ref{thm:unique} is satisfied and $\essinf[\X+Z] = \essinf\X + \essinf Z$, then $\hat V(Z;\X)= \bar V(Z;\X)$. 
\end{corollary}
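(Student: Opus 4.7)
Fix $p\in(0,1)$ and set $q=1-p$, $s=\essinf\X+\essinf Z$, $\X_p=B[p](\X-\essinf\X)+\essinf\X$, and $Z_p=B[p](Z-\essinf Z)+\essinf Z$. The standing hypothesis $\essinf[\X+Z]=s$ gives $\essinf[\X_p+Z_p]=s$ and $\P(\X_p+Z_p=s)\ge q>0$, so Corollary~\ref{cor:Rplus} (combined with $\lim_{z\to 0}R(z)=-\infty$) yields $Z_p\in\dom V(\cdot;\X_p)$. I would then verify that whichever hypothesis of Theorem~\ref{thm:unique} was assumed for $(Z,\X)$ transfers to $(Z_p,\X_p)$: condition~(b) depends only on $R$; for~(a), a direct case check on the two values of $B[p]$ shows $(Z_p,\X_p+Z_p)$ is comonotonic; for~(c)/(d), the condition on the constant realisation $\essinf\X$ is inherited via concavity of $R$ and continuity of $R'$. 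Consequently $v_p:=V(Z_p;\X_p)$ is a singleton with $v_p\in[\essinf Z,s]$ by Corollary~\ref{cor:exist-cond} and Lemma~\ref{lemma:barV}.

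The heart of the argument is a single algebraic identity. Writing $D_1(v):=\EP[\e{-R(\X+Z-v)}]$ and $N_1(v):=\EP[Z\e{-R(\X+Z-v)}]$, so that $H_Z(v)=N_1(v)/D_1(v)$, I would condition on $B[p]$ in $v_p=H_{Z_p}(v_p)$ and rearrange to obtain
\begin{equation*}
p\,D_1(v_p)\bigl[H_Z(v_p)-v_p\bigr] \;=\; q\,(v_p-\essinf Z)\,\e{-R(s-v_p)},
\end{equation*}
with both sides non-negative; in particular $v_p\le H_Z(v_p)$.

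To extract $\lim_{p\nearrow 1}v_p$ I would pass to an arbitrary subsequential limit $\tilde v=\lim_k v_{p_k}\in[\essinf Z,s]$. If $\tilde v<s$, then $D_1$ and $\e{-R(s-\cdot)}$ are continuous at $\tilde v$ (the integrand in $D_1$ is uniformly bounded in a neighbourhood of $\tilde v$ since $\X+Z-v$ stays bounded away from $0$), so the right-hand side of the identity tends to $0$ while the left-hand side tends to $D_1(\tilde v)[H_Z(\tilde v)-\tilde v]$; hence $\tilde v=H_Z(\tilde v)\in V(Z)$. If $Z\notin\dom V$, this case is impossible, so every accumulation point equals $s$ and $v_p\to s=\essinf[\X+Z]=\bar V(Z;\X)$. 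If $V(Z)=\{v^*\}$ with $v^*=s$, both alternatives yield $\tilde v=v^*$. If $V(Z)=\{v^*\}$ with $v^*<s$, I would use the sign structure underlying the proof of Theorem~\ref{thm:unique}, namely $H_Z(v)<v$ on $(v^*,s)$, together with $v_p\le H_Z(v_p)$, to conclude $v_p\le v^*$; this precludes any subsequence escaping to $s$ and forces $v_p\to v^*=\bar V(Z;\X)$. In every case $\hat V(Z;\X)=\lim_{p\nearrow 1}v_p=\bar V(Z;\X)$.

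The main obstacle is the final step: upgrading the uniqueness assertion of Theorem~\ref{thm:unique} to the strict inequality $H_Z(v)<v$ on $(v^*,s)$. I expect this to fall out of the same monotonicity/sign arguments used to prove uniqueness under each of (a)--(d), combined with Corollary~\ref{cor:exist-cond}, which fixes the sign of $H_Z-\mathrm{id}$ near the upper endpoint $s$; but it would require revisiting the proof of Theorem~\ref{thm:unique} rather than invoking only its statement. A cleaner alternative is to strengthen the hypothesis slightly (e.g.\ by directly assuming condition~\eqref{thm:unique-monotone} or~\eqref{thm:unique-concave} of Theorem~\ref{thm:unique}, under which $v\mapsto v-H_Z(v)$ is shown to be strictly monotone), in which case the needed sign property is immediate.
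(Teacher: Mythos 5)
Your proposal is correct, and its core mechanism is genuinely different from the paper's. The paper proves that $p \mapsto V(Z_p;\X_p)$ is \emph{nondecreasing} by implicitly differentiating the fixed-point equation in $p$ (which requires justifying that the derivative exists via a nondegeneracy condition on $\EP[(1-(Z_p-V)R'(\X_p+Z_p-V))\e{-R(\X_p+Z_p-V)}]$), invokes monotone convergence to get existence of the limit, and then identifies the limit separately: for $Z\in\dom V$ via joint continuity of the parametrized fixed-point map $\bar H(p,v)$, and for $Z\notin\dom V$ via an explicit $p_\epsilon$ solving $V(Z_{p_\epsilon};\X_{p_\epsilon})=\essinf[\X+Z]-\epsilon$. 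Your exact identity $p\,D_1(v_p)[H_Z(v_p)-v_p]=(1-p)(v_p-\essinf Z)\e{-R(s-v_p)}$ replaces all of this: it gives $v_p\le H_Z(v_p)$ for free, and the subsequential-limit dichotomy (accumulation points below $s$ are fixed points of $H_Z$; points in $(v^*,s)$ are excluded by the sign of $H_Z-\mathrm{id}$) identifies the limit in every case without differentiating in $p$ or establishing monotonicity at all. The obstacle you flag — upgrading Theorem~\ref{thm:unique} to the strict inequality $H_Z(v)<v$ on $(v^*,s)$ — is not a real gap: the paper's proofs of Theorem~\ref{thm:unique} and Corollary~\ref{cor:exist-cond} establish exactly this sign structure ($\frac{\d}{\d v}(H_Z(v)-v)\le -1$ under conditions (a)--(b); $\Theta_Z$ strictly decreasing past $v^*$ under (c)--(d)), so citing those proofs suffices. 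Two minor points: your justification that conditions (c)/(d) transfer to the realisation $\essinf\X$ should rest on approximating $\essinf\X$ by values $\X(\omega_k)$ in the full-measure set and continuity of $R'$ (concavity of $R$ alone gives the comparison in the wrong direction), and you should note $D_1(v_p)<\infty$ because $v_p<s$ forces $\X+Z-v_p\ge s-v_p>0$; both are easily supplied. On balance your route is shorter and avoids the differentiability-in-$p$ issue that the paper handles only informally.
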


\begin{remark}\label{rem:essinf}
Within Corollary~\ref{cor:bernoulli}, we introduced the additional condition that $\essinf[\X+Z] = \essinf\X + \essinf Z$.  This condition can be considered as the limiting condition for the systemic ruin encoded in the Bernoulli random variables as the probability of ruin tends to $0$.  That is, the worst case for the market ($\essinf\X$) and for the claim $Z$ ($\essinf Z$) coincide as both are stressed by the same systemic shock.
\end{remark}

\subsection{Inverse demand functions}\label{sec:idf}
Now that we have a good definition of a unique price $\bar V$ that is valid over the entire space $L^\infty$, we are finally able to rigorously define the inverse demand functions. There is no unique way to do so, and we choose to demonstrate how this can be done following the example set in \cite{banerjee2019price,bichuch2020repo}: using the \emph{order book density} and the \emph{volume weighted average price} (\emph{VWAP}) function. For the former, we set $f^q: \R_+ \to \R$, so that $\bar V(sq) = \int_0^s f^q(t)dt$, i.e., $f^q(s) = \frac{\d}{\d s}\bar V(sq)$. For the latter, we define $\bar f^q: \R_+ \to \R$ by 
\begin{align}
\bar f^q(s) = \begin{cases} \frac{\EP[q\exp(-R(\X))]}{\EP[\exp(-R(\X))]} &\text{if } s = 0, \\ \bar V(sq)/s &\text{if } s > 0. \end{cases}
\label{eq:VWAP}
\end{align}
The order book density function $f^q$ provides the price of the next marginal unit of the portfolio $q$ given the total number of units already sold; in this way we can encode a dynamic notion of pricing in a single period framework.  In contrast, the VWAP function $\bar f^q$ provides the average price per unit of liquidated portfolio; this construction is implicitly utilized in much of the fire sale literature, see, e.g.,~\cite{CFS05,AFM16,feinstein2015illiquid}.

As discussed previously, these inverse demand functions ($f^q$ or $\bar f^q$) are often the objects introduced exogenously. 
Such an approach, though valid, does not necessarily follow from a financial equilibrium. 
By first studying the value of arbitrary portfolios, we are able to consequently talk about the price per unit of any asset or portfolio.  
\blue{As presented in Remark \ref{rem:multi-asset}, we can consider the cross-impacts that liquidating one asset can have on the price of another; this is in contrast to the typical, simplifying, assumption that there are no cross-impacts as taken in, e.g.,~\cite{GLT15}. This is presented explicitly} in the special cases of
markets generated by the exponential or power utility functions (as detailed in Section~\ref{sec:cs} below). 

\begin{assumption}\label{ass:essinf}
Throughout this section, we assume $\X,q$ satisfy the joint systemic ruin condition $\essinf[\X+q] = \essinf\X + \essinf q$ (so that $X,sq$ satisfy this condition for every $s \geq 0$) as introduced in Corollary~\ref{cor:bernoulli} and discussed in Remark~\ref{rem:essinf}.
\end{assumption}

We first show that the order book density function $f^q$ is well defined.  That is, we can meaningfully discuss the price of the next \emph{marginal} unit of the portfolio $q$.
\begin{lemma}\label{lemma:f} 
Assume $R$ satisfies Assumption~\ref{assump:R} and any of the conditions of Theorem~\ref{thm:unique} hold.  Consider the setting in which a single portfolio is being liquidated proportionally, i.e., $f^q: \R_+ \to \R$ defined by
\[f^q(s) = \begin{cases} \frac{\EP[q(1-[sq - \bar V(sq)]R'(\X+sq-\bar V(sq)))\exp(-R(\X+sq-\bar V(sq)))]}{\EP[(1-[sq - \bar V(sq)]R'(\X+sq-\bar V(sq)))\exp(-R(\X+sq-\bar V(sq)))]} &\text{if } sq \in \dom V \\ \essinf q &\text{else}. \end{cases}\]
\begin{enumerate}
\item\label{lemma:f-exist} $\bar V(sq) = \int_0^s f^q(t)dt$ for any $s \in \R_+$.
\item\label{lemma:f-bound} If $z\in\Rplus \mapsto z\exp(-R(\X+z))$ is nondecreasing then $f^q(s) \geq \essinf q$ for every $s \in \R_+$ and if, additionally, $z \mapsto z\exp(-R(\X+z))$ is strictly increasing and $\P(q > \essinf q) > 0$ then $f^q(s) > \essinf q$ for every $s \in \R_+$ such that $sq \in \dom V$.
\item\label{lemma:f-cont} $f^q$ is continuous on $\operatorname{int}\{s \in \R_+ \; | \; sq \in \dom V\}$ if $R$ is continuously differentiable.
\item\label{lemma:f-monotone} $f^q$ is nonincreasing if $z\in\Rplus \mapsto z\exp(-R(\X+z))$ is nondecreasing and concave.
\end{enumerate}
\end{lemma}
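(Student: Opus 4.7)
The strategy is to derive the formula for $f^q$ by implicit differentiation of the fixed point equation $\bar V(sq) = H_{sq}(\bar V(sq))$ at points where $sq\in\dom V$, to handle the complementary range using Assumption~\ref{ass:essinf}, and then to read off each of the four claims from regularity properties of $\bar V$ already established in Lemma~\ref{lemma:barV}.

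For part~\eqref{lemma:f-exist}, fix $s$ with $sq \in \dom V$ and let $v(s) = \bar V(sq)$. Writing $N(s,v) = \EP[sq\exp(-R(\X+sq-v))]$ and $D(s,v) = \EP[\exp(-R(\X+sq-v))]$, the fixed point relation reads $N(s,v(s)) - v(s)D(s,v(s)) = 0$. I would differentiate this identity in $s$, exchanging derivative and expectation by dominated convergence (valid since $q \in L^\infty$, $R$ is differentiable with $R'$ nonincreasing and hence locally bounded, and $\exp(-R(\cdot))$ is continuous); solving for $v'(s)$ and using $N/D = v$ at the fixed point collapses the result to
\begin{equation*}
v'(s) = \frac{\EP[q(1-(sq-v)R'(\X+sq-v))\exp(-R(\X+sq-v))]}{\EP[(1-(sq-v)R'(\X+sq-v))\exp(-R(\X+sq-v))]},
\end{equation*}
which is precisely the claimed formula for $f^q(s)$. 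For $s$ with $sq \notin \dom V$, Assumption~\ref{ass:essinf} gives $\bar V(sq) = \essinf[\X+sq] = \essinf\X + s\essinf q$, whose derivative in $s$ is $\essinf q$, matching the definition. The integral identity $\bar V(sq) = \int_0^s f^q(t)dt$ then follows from the fundamental theorem of calculus once $s \mapsto \bar V(sq)$ is known to be absolutely continuous; this is immediate from Lemma~\ref{lemma:barV}\eqref{lemma:barV-monotone} under Theorem~\ref{thm:unique}\eqref{thm:unique-monotone}, and from local Lipschitzness (of a concave or affinely parametrized continuous function) under the remaining uniqueness hypotheses.

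For part~\eqref{lemma:f-bound}, let $g := (1-(sq-v)R'(\X+sq-v))\exp(-R(\X+sq-v))$. I would establish $g \geq 0$ a.s.\ pointwise: on $\{sq-v \geq 0\}$, $g$ is the derivative of the a.s.\ nondecreasing map $z\mapsto z\exp(-R(\X+z))$ evaluated at a nonnegative argument; on $\{sq-v < 0\}$, positivity is automatic since $R' > 0$ forces $1-(sq-v)R'(\X+sq-v) > 1$. Consequently $f^q(s) - \essinf q = \EP[(q-\essinf q)g]/\EP[g] \geq 0$, with strict inequality exactly when the strict monotonicity of $z\mapsto z\exp(-R(\X+z))$ upgrades $g > 0$ a.s.\ and $\P(q > \essinf q) > 0$.

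Part~\eqref{lemma:f-cont} follows from strong continuity of $\bar V$ (Lemma~\ref{lemma:barV}\eqref{lemma:barV-cont}), continuity of $R'$, and dominated convergence applied to the explicit formula for $f^q$. Part~\eqref{lemma:f-monotone} is essentially immediate: Lemma~\ref{lemma:barV}\eqref{lemma:barV-concave} yields $\bar V$ concave on $L^\infty$, hence $s \mapsto \bar V(sq)$ is concave in $s$, and a concave function of one variable has a nonincreasing derivative. The main obstacle throughout is justifying the implicit function theorem in part~\eqref{lemma:f-exist}, specifically the nondegeneracy $\EP[g] \neq 0$ needed to write $v'(s)$ explicitly; I would obtain this from the strict positivity of $g$ on a set of positive measure, which is available from the argument of part~\eqref{lemma:f-bound} under hypothesis~\eqref{thm:unique-monotone} and from direct sign analysis of $g$ under the linear or concavity hypotheses of Theorem~\ref{thm:unique}.
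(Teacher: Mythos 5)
Your route is the same as the paper's: implicit differentiation of the fixed-point identity on the interior of $\{s \in \R_+ \,:\, sq\in\dom V\}$, Assumption~\ref{ass:essinf} to get $\frac{\d}{\d s}\bar V(sq)=\essinf q$ on the complement, continuity of $s\mapsto\bar V(sq)$ together with $\bar V(0)=0$ to glue the pieces into the integral identity, and Lemma~\ref{lemma:barV} for the sign, continuity, and monotonicity claims. Your part~\eqref{lemma:f-bound} (including the extension of the nonnegativity of the integrand to the event $\{sq-\bar V(sq)<0\}$) is exactly the paper's argument via Proposition~\ref{prop:monotone-concave}, and parts~\eqref{lemma:f-cont} and~\eqref{lemma:f-monotone} match as well.

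The one genuine soft spot is the nondegeneracy of the denominator, $\EP[g]>0$ with $g:=(1-(sq-\bar V(sq))R'(\X+sq-\bar V(sq)))\exp(-R(\X+sq-\bar V(sq)))$, under conditions~\eqref{thm:unique-comonotone} and~\eqref{thm:unique-linear} of Theorem~\ref{thm:unique}. You propose ``direct sign analysis of $g$,'' but $g$ is \emph{not} a.s.\ nonnegative under those hypotheses: already for linear $R(x)=\alpha(x-\X)$ one has $g=(1-\alpha(sq-\bar V(sq)))e^{-\alpha(\cdot)}<0$ wherever $sq-\bar V(sq)>1/\alpha$, which occurs with positive probability for $s$ large. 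The paper's argument (carried out in the proof of Corollary~\ref{cor:bernoulli}) instead uses the fixed-point identity $\bar V(sq)=\EP[sq\,e^{-R}]/\EP[e^{-R}]$ to rewrite
\begin{equation}
\EP[g] \;=\; \EP\left[e^{-R}\right]\left(1 - H_{sq}'(\bar V(sq))\right),
\end{equation}
and then invokes $H_{sq}'\le 0$, which is precisely the covariance inequality established in the proof of Theorem~\ref{thm:unique} under~\eqref{thm:unique-comonotone} and~\eqref{thm:unique-linear}. So the positivity comes from the global monotonicity of $H_{sq}$, not from pointwise signs of the integrand; with that substitution (and your own part~\eqref{lemma:f-bound} argument covering cases~\eqref{thm:unique-monotone}--\eqref{thm:unique-concave}) the proof is complete.
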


Now we want to consider the volume weighted average price $\bar f^q$.  As shown below, this pricing function satisfies the expected conditions automatically in contrast to the order book density function $f^q$.  This VWAP function provides exactly the average price obtained by the seller per unit of the portfolio $q$, i.e., $\bar f^q(s) = \frac{1}{s}\bar V(sq)$ for $s > 0$. 
\begin{lemma}\label{lemma:barf} 
Assume $R$ satisfies Assumption~\ref{assump:R} and any of the conditions of Theorem~\ref{thm:unique}.  Consider the setting in which a single portfolio $q$ is being liquidated proportionally, i.e., $\bar f^q: \R_+ \to \R$ is given by \eqref{eq:VWAP}.
\begin{enumerate}
\item\label{lemma:barf-bound} $\bar f^q(s) \geq \essinf q$ for every $s \in \R_+$ and if, additionally, $\P(q > \essinf q) > 0$ then $\bar f^q(s) > \essinf q$ for every $s \in \R_+$.
\item\label{lemma:barf-cont} $\bar f^q$ is continuous.
\item\label{lemma:barf-monotone} $\bar f^q$ is nonincreasing.
\end{enumerate}
\end{lemma}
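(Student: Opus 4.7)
The three parts share a common engine: when $s > 0$ and $sq \in \dom V$, the fixed-point equation~\eqref{eq:V} gives $\bar f^q(s) = \bar V(sq)/s = \E^{\Q_s}[q]$, where $\Q_s$ is the equilibrium pricing measure with $d\Q_s/d\P \propto \exp(-R(\X+sq-\bar V(sq)))$; in particular $\bar f^q(0)$ from~\eqref{eq:VWAP} is exactly $\E^{\Q_0}[q]$ with $d\Q_0/d\P \propto \exp(-R(\X))$.

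For Part~\eqref{lemma:barf-bound}, since $\Q_s$ has strictly positive $\P$-density, $\bar f^q(s) = \E^{\Q_s}[q] \in [\essinf q,\esssup q]$ with strict lower inequality precisely when $\P(q > \essinf q) > 0$. The remaining case $s > 0$ with $sq \notin \dom V$ can occur only when $\D = \Rplus$; then the selection rule~\eqref{eq:barV} combined with Assumption~\ref{ass:essinf} gives $\bar V(sq) = \essinf\X + s\essinf q$, so $\bar f^q(s) = \essinf\X/s + \essinf q > \essinf q$ since $\essinf\X > 0$ in the $\Rplus$ setting.

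For Part~\eqref{lemma:barf-cont}, on $\{s > 0 : sq \in \dom V\}$ continuity follows from the strong-topology continuity of $\bar V$ (Lemma~\ref{lemma:barV}\eqref{lemma:barV-cont} under the uniqueness conditions) together with continuity of $s \mapsto sq$ into $L^\infty$; on the complementary set $\bar V(sq) = \essinf\X + s\essinf q$ is visibly continuous. At $s = 0$ the ratio $\bar V(sq)/s$ is indeterminate, but concavity plus differentiability of $R$ forces $R$ to be continuously differentiable, so Lemma~\ref{lemma:f}\eqref{lemma:f-cont} yields continuity of $f^q$ near $0$; Lemma~\ref{lemma:f}\eqref{lemma:f-exist} then gives $\bar V(sq)/s \to f^q(0)$, and substituting $s=0$, $\bar V(0) = 0$ into the defining formula for $f^q$ reproduces $\bar f^q(0)$ as in~\eqref{eq:VWAP}. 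Across any boundary between $\{sq \in \dom V\}$ and its complement, Corollary~\ref{cor:exist-cond} provides $\bar V(sq) \leq \essinf[\X+sq]$, so the two pieces agree at the transition.

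Part~\eqref{lemma:barf-monotone} is the main obstacle. Using Lemma~\ref{lemma:f}\eqref{lemma:f-exist} to write $\bar f^q(s) = \tfrac{1}{s}\int_0^s f^q(t)\,dt$ and differentiating, one obtains $\tfrac{d}{ds}\bar f^q(s) = (f^q(s) - \bar f^q(s))/s$, so the claim reduces to the pointwise inequality $f^q(s) \leq \bar f^q(s)$. Setting $\psi = \bar f^q(s)$ and $R'_s = R'(\X+sq-\bar V(sq))$, algebraic manipulation of the formula of Lemma~\ref{lemma:f} reduces the difference to
\[\bar f^q(s) - f^q(s) = \frac{s\,\E^{\Q_s}\!\left[(q-\psi)^2 R'_s\right]}{\E^{\Q_s}\!\left[1 - (sq-\bar V(sq))R'_s\right]}.\]
The numerator is nonnegative because $R' \geq 0$ by Assumption~\ref{assump:R}, and the denominator is positive because the implicit-function derivation underlying Lemma~\ref{lemma:f} requires $\partial_v H_{sq}(\bar V(sq)) < 1$, which is precisely this denominator's positivity. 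On the complement $\{s > 0 : sq \notin \dom V\}$, $\bar f^q(s) = \essinf\X/s + \essinf q$ is strictly decreasing in $s$. Finally, continuity from Part~\eqref{lemma:barf-cont} together with the saturation $\bar V(sq) \to \essinf[\X+sq]$ at any $\dom V$-boundary from Corollary~\ref{cor:exist-cond} ensures that the two regimes meet without an upward jump, delivering global nonincreasingness of $\bar f^q$.
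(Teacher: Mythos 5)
Your parts \eqref{lemma:barf-bound} and \eqref{lemma:barf-cont} are essentially sound. Part \eqref{lemma:barf-bound} matches the paper's argument. For part \eqref{lemma:barf-cont} you route the limit at $s=0$ through Lemma~\ref{lemma:f} (using $\bar V(sq)=\int_0^s f^q$ and continuity of $f^q$ at $0$), whereas the paper computes $\lim_{s\searrow 0}\bar V(sq)/s$ directly from the fixed-point representation after translating by $s\essinf q$; your route is fine, but note it silently uses that $sq\in\dom V$ for all small $s$ when $\D=\Rplus$ (the paper proves this via $\|Z-\essinf Z\|_\infty<\essinf\X/2\Rightarrow Z\in\dom V$), and that $0$ lies in the relevant interior so Lemma~\ref{lemma:f}\eqref{lemma:f-cont} applies.

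Part \eqref{lemma:barf-monotone} is where you genuinely diverge from the paper, and where there is a gap. Your identity $\bar f^q(s)-f^q(s)= s\,\E^{\Q_s}[(q-\psi)^2R'_s]\big/\E^{\Q_s}[1-(sq-\bar V(sq))R'_s]$ is correct (I checked the algebra; the denominator's strict positivity is indeed established in the paper in the proof of Lemma~\ref{lemma:f}\eqref{lemma:f-exist}, via the proofs of Theorem~\ref{thm:unique} and Corollary~\ref{cor:bernoulli}, not merely by an implicit-function heuristic), and it yields $(\bar f^q)'\leq 0$ on $\operatorname{int}\{s:sq\in\dom V\}$; the explicit strictly decreasing form handles $\operatorname{int}\{s:sq\notin\dom V\}$. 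The gap is the globalization: a function that is continuous on $\R_+$ and nonincreasing on each of two open sets whose union misses the boundary $\partial\{s: sq\in\dom V\}$ need not be nonincreasing unless that boundary is negligible (countable) or the two regimes decompose $\R_+$ into intervals. You assert ``the two regimes meet without an upward jump,'' but you have not shown that $\{s:sq\in\dom V\}$ is an interval (it is in the power-utility example, but nothing you cite establishes it in general), so the patching step is incomplete. The paper avoids this entirely by using the representation $\bar f^q(s)=\essinf q+\sup\{p\geq 0: \EP[(q-\essinf q-p)\exp(-R(\X+s(q-\essinf q-p)))]\geq 0,\ \essinf\X-sp\in\bbd\}$ from Proposition~\ref{prop:opt-rep} and showing the feasible set is nested nonincreasing in $s$ (via $\frac{\d}{\d s}\EP[(q-p)\exp(-R(\X+s(q-p)))]\leq 0$); this argument is uniform across both regimes and needs no patching. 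To repair your proof, either prove $\{s:sq\in\dom V\}$ is an interval $[0,s^*]$, or switch to the sup-representation for the global statement; your pointwise inequality $f^q\leq\bar f^q$ remains a nice by-product either way.
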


\begin{remark}\label{rem:idf}
The order book density function $f^q$ and VWAP function $\bar f^q$ are related through:
\[\bar V(sq) = \int_0^s f^q(t)dt = s\bar f^q(s).\]
\end{remark}

\begin{remark}\label{rem:liquidity} 
As defined in, e.g.,~\cite{feinstein2019leverage,banerjee2019price}, the liquidity of the inverse demand function near 0 is defined as the velocity that prices are impacted by a small, additional, liquidation.  In those works, the liquidity was used to calibrate risk weights for studying fire sales subject to capital adequacy requirements.
Notably, by construction, the inverse demand functions coincide with the expectation of $q$ under the measure with Radon-Nikodym derivative $\exp(-R(\X))/\EP[\exp(-R(\X))]$ when no assets are being sold, i.e., $f^q(0) = \bar f^q(0) = \EP[q\exp(-R(\X))]/\EP[\exp(-R(\X))]$.  The liquidity of the market near 0, i.e., the impact of a small liquidation, can be provided explicitly by:
\begin{align}
(f^q)'(0) =& -2\frac{\EP[q^2R'(\X)\exp(-R(\X))]}{\EP[\exp(-R(\X))]}  + 4 \frac{\EP[q\exp(-R(\X))]\EP[qR'(\X)\exp(-R(\X))] }{\EP[\exp(-R(\X))]^2} \nonumber\\
&\quad-2\frac{ \EP[q\exp(-R(\X)) ]^2\EP[R'(\X)\exp(-R(\X))]}{\EP[\exp(-R(\X))]^3} , \nonumber\\
(\bar f^q)'(0) =& \frac12(f^q)'(0).\label{eq:bar-f-deriv}
\end{align}
%
These notions of market liquidity \blue{can be simplified} significantly if $\X\in\D$ deterministic. Under such a setting the order book density and the VWAP functions provide respectively the initial liquidity values of 
\begin{align}
(f^q)'(0) =2(\bar f^q)'(0)= -2R'(\X)\operatorname{Var}(q).
\label{eq:X-determin-ex}
\end{align}
 That is, the market clearing price drops no matter how it is measured when the first (marginal) unit is sold (unless $q$ is deterministic) and is proportional to both the absolute risk aversion of the harmonic representative agent at the market wealth $\X$ and to the variance of portfolio $q$.
\end{remark}


\blue{
Before continuing to the analytical results in Section~\ref{sec:cs}, we wish to briefly discuss the multi-asset inverse demand functions.
\begin{remark}\label{rem:multi-asset}
Comparable to the definition of the order book density and VWAP functions in the single portfolio setting, these inverse demand functions are defined for a setting with $m$ assets (with vector payoffs $\q = (q_1,...,q_m)^\T$) such that $\bar V(\s^\T \q) = \int_{0}^1 \f^{\q}(\vec{r}(t))\vec{\dot r}^\T(t)dt = \s^\T \bar\f^{\q}(\s)$ for order book density function $\f^{\q}: \R^m_+ \to \R^m$ and VWAP $\bar\f^{\q}: \R^m_+ \to \R^m$ and such that $\vec{r}\colon[0,1]\to \R^m$ denotes an arbitrary curve from $\vec r(0)=\vec{0}$ to $\vec r(1)=\s$.  
This curve $\vec{r}$ describes the order of sales over ``time''. However, in our single-period setting, the timing of financial transactions does not matter; it is for this reason that the curve $\vec{r}$ is arbitrary.  Without loss of generality, we take the curve $\vec r(t) := \s t,~t \in [0,1]$.  Our aim herein is to match these definitions of the order book density function $\f^{\q}$ and VWAP $\bar\f^{\q}$ to the original definitions used throughout this section.  
Specifically, we 
define
$\f^{\q}(\s) := \nabla_s \bar V(\s^\T\q)$ to be the gradient of $\bar V$ with respect to $\s$.  In this way, $\bar V(\s^\T \q) = \s^\T \int_0^1 \f^{\q}(\s t)dt$; this representation provides a specific choice for the VWAP $\bar\f^{\q}$ that is consistent with the order book density function, i.e., $\bar\f^{\q}(\s) := \int_0^1 \f^{\q}(\s t)dt$.  On the domain of $V$, these inverse demand functions can be provided more explicitly as:
\begin{align*}
\f^{\q}(\s) &= \frac{\E[\q (1 - [\s^\T \q + \bar V(\s^\T \q)] R'(\X + \s^\T \q - \bar V(\s^\T \q))) \exp(-R(\X + \s^\T \q - \bar V(\s^\T \q)))]}{\E[(1 - [\s^\T \q + \bar V(\s^\T \q)] R'(\X + \s^\T \q - \bar V(\s^\T \q))) \exp(-R(\X + \s^\T \q - \bar V(\s^\T \q)))]}\\
\bar\f^{\q}(\s) &= \frac{\E[\q \exp(-R(\X + \s^\T \q - \bar V(\s^\T \q)))]}{\E[\exp(-R(\X + \s^\T \q - \bar V(\s^\T \q)))]}
\end{align*}
for $\s \in \R^m_+$ such that $\s^\T \q \in \dom V$.
\end{remark}
}

\section{Special cases}\label{sec:cs} 

We now specialize the generic framework of the previous section to consider specific examples of the pricing and inverse demand functions. 
We highlight that these functions are the result of the modified B\"uhlmann equilibrium setting presented in Section~\ref{sec:buhlmann} with specific choices for the utility functions of the market participants.
Specifically, we consider two settings in the modified B\"uhlmann framework: exponential utility maximizers and power (or logarithmic) utility maximizers.
With these utility maximizing settings, we illustrate how asset and portfolio prices can be obtained as a result of the equilibrium problem.  In this way, in Example~\ref{ex:exponential} we are able to recover two classical inverse demand functions -- the linear and exponential inverse demand functions -- \blue{that} are commonly used.  This is significant as it makes explicit a number of assumptions that contribute to these (and other) specific pricing functions. 
Furthermore, not every inverse demand function can be achieved as an equilibrium from a specific set of market participants (as encoded in their utility functions and aggregate holdings $\X$).  \blue{For example, in case of deterministic $\X$, a power inverse demand function $f(s) = 1-s^p,~0<p<1$ cannot be achieved, as otherwise, $\lim\limits_{s\searrow0} f'(s) =- \infty$, which contradicts \eqref{eq:X-determin-ex} and \eqref{eq:bar-f-deriv} in the  order book density and VWAP cases respectively.
}

As such, extra care needs to be taken when using a specific, exogenous, inverse demand function, as it may not be achievable with the set of market participants or, potentially, from the specific distribution of the returns.

\begin{assumption}
For \blue{mathematical} simplicity, we will consider \emph{deterministic} aggregate holdings $\X \in \D$ only throughout this section.  Note this setting satisfies the uniqueness condition of Theorem~\ref{thm:unique}\eqref{thm:unique-comonotone} for any $R$ satisfying Assumption~\ref{assump:R} and any $Z \in L^\infty$.
\end{assumption}

\subsection{Exponential utility}
Consider the B\"uhlmann equilibrium construction from Section~\ref{sec:buhlmann} in which every market participant has exponential utility function $u_i(x) := 1 - \exp(-\alpha_i x)$ with risk aversion $\alpha_i > 0$.  Let $\alpha := \left(\sum_{i = 1}^n \frac{1}{\alpha_i}\right)^{-1}$. Then $\alpha$, up to a multiplication by $n$, is the harmonic average of the risk aversions.  As the absolute risk aversion in this case is constant, it immediately follows that we can construct a payment system with the function:
\begin{align}
R(x) &= \alpha (x-\X) \mbox{ with }\bbd = \R, \label{eq:R-lin}\\
\label{eq:V-exponential} V\left(Z;{\alpha}\right) &= \frac{\EP[Z\exp(-{\alpha} Z)]}{\EP[\exp(-{\alpha} Z)]}.
\end{align}
As $\bbd = \R$, it immediately follows that $\bar V \equiv V$.  We further wish to note that $V$ is defined as the Esscher premium \cite{godin2019general}.  
\begin{proposition}\label{prop:V-exponential}
Let the pricing function $V: L^\infty \to \R$ be defined as in~\eqref{eq:V-exponential}.  As more market participants enter the system, the market becomes more liquid (liquidation value goes up), i.e., $V(Z;\alpha_1) \leq V(Z;\alpha_2)$ for $\alpha_1 \geq \alpha_2 > 0$.
\end{proposition}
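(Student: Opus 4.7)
The plan is to view $V(Z;\alpha)$ as a smooth function of the scalar parameter $\alpha \in (0,\infty)$ (with $Z \in L^\infty$ fixed) and to show that it is nonincreasing in $\alpha$ by computing its derivative. This is cleaner than a direct inequality argument because the Esscher-type ratio structure of $V$ suggests recognizing it as an expectation under a tilted probability measure. Specifically, for each $\alpha > 0$, define the probability measure $\Q_\alpha$ by
\[
\frac{d\Q_\alpha}{d\P} = \frac{\exp(-\alpha Z)}{\EP[\exp(-\alpha Z)]},
\]
which is well-defined and equivalent to $\P$ since $Z \in L^\infty$. Under this change of measure, $V(Z;\alpha) = \E^{\Q_\alpha}[Z]$.

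Next, I would differentiate in $\alpha$. Writing $N(\alpha) = \EP[Z\exp(-\alpha Z)]$ and $D(\alpha) = \EP[\exp(-\alpha Z)]$, the boundedness of $Z$ allows unrestricted differentiation under the expectation, giving $N'(\alpha) = -\EP[Z^2\exp(-\alpha Z)]$ and $D'(\alpha) = -\EP[Z\exp(-\alpha Z)]$. Then
\[
\frac{d}{d\alpha} V(Z;\alpha) = \frac{N'(\alpha)D(\alpha) - N(\alpha)D'(\alpha)}{D(\alpha)^2} = -\bigl(\E^{\Q_\alpha}[Z^2] - (\E^{\Q_\alpha}[Z])^2\bigr) = -\operatorname{Var}^{\Q_\alpha}(Z) \leq 0.
\]
Hence $\alpha \mapsto V(Z;\alpha)$ is nonincreasing on $(0,\infty)$, and in particular $\alpha_1 \geq \alpha_2 > 0$ implies $V(Z;\alpha_1) \leq V(Z;\alpha_2)$, which is the claimed inequality.

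The only point that requires any care is the interpretation: recall that $\alpha = (\sum_i 1/\alpha_i)^{-1}$ is the reciprocal of a sum of positive terms, so adjoining additional market participants (or decreasing any individual $\alpha_i$) strictly decreases $\alpha$. Thus ``more participants'' translates into a smaller $\alpha$, and the monotonicity above is exactly the statement that the liquidation value $V(Z;\alpha)$ rises as participants enter. There is essentially no technical obstacle here since $Z \in L^\infty$ makes all moments finite and differentiation under the expectation is trivially justified; the whole proof reduces to recognizing the Esscher-tilted variance representation of $dV/d\alpha$.
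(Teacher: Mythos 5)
Your proposal is correct and follows essentially the same route as the paper: both differentiate $V(Z;\alpha)$ in $\alpha$ and show the derivative is nonpositive, the paper via a Cauchy--Schwarz step which is precisely your observation that the numerator is $-\operatorname{Var}^{\Q_\alpha}(Z)$ under the Esscher-tilted measure. The variance interpretation is a slightly cleaner packaging of the identical computation.
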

For a vector of assets $\q = (q_1,...,q_m)^\T$, the order book density function and VWAP inverse demand function are consequently defined as
\begin{align}
\label{eq:odf-exponential}
& \f^\q(\s) = \frac{\EP[\q\exp(-\alpha \s^\T \q)]}{\EP[\exp(-\alpha \s^\T \q)]} +\\
& \qquad\qquad\alpha\frac{\EP[(\s^\T\q)\exp(-\alpha\s^\T\q)]\EP[\q\exp(-\alpha\s^\T\q)] - \EP[\exp(-\alpha\s^\T\q)]\EP[(\s^\T\q)\q\exp(-\alpha\s^\T\q)]}{\EP[\exp(-\alpha\s^\T\q)]^2},\\
\label{eq:vwap-exponential} 
&\bar\f^\q(\s) = \frac{\EP[\q\exp(-\alpha \s^\T \q)]}{\EP[\exp(-\alpha \s^\T \q)]}.
\end{align}
As highlighted in the below proposition, these inverse demand functions exhibit no cross-impacts on the prices as is often assumed (see, e.g.,~\cite{GLT15,CS17,cont2019monitoring}) so long as the components of $\q$ are pairwise independent.
\begin{proposition}\label{prop:idf-exponential}
Let $\q$ be a $m$-dimensional random vector.  Let the order book density function $\f^\q: \R^m_+ \to \R^m$ and VWAP inverse demand function $\bar\f^\q: \R^m_+ \to \R^m$ be defined as in~\eqref{eq:odf-exponential} and~\eqref{eq:vwap-exponential} respectively. If the components of $\q$ are pairwise independent then the inverse demand functions exhibit no price cross-impacts, i.e., $f_k^\q(\s) = f_k^{q_k}(s_k)$ and $\bar f_k^\q(\s) = \bar f_k^{q_k}(s_k)$ for every $\s \in \R^m_+$ and $k = 1,...,m$.
\end{proposition}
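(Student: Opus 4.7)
The plan is to exploit the factorization of the exponential $\exp(-\alpha\s^\T\q)=\prod_{j=1}^{m}\exp(-\alpha s_j q_j)$ together with the additivity of $V$ under independence of the components. I note that ``pairwise independent'' must be read as mutual independence for the argument to go through, since already for three components the identity $\EP[\exp(-\alpha\s^\T\q)]=\prod_{j}\EP[\exp(-\alpha s_j q_j)]$ requires joint factorization rather than merely pairwise.

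For the VWAP I would proceed by direct cancellation. Set $A_j:=\EP[\exp(-\alpha s_j q_j)]$ and $B_k:=\EP[q_k\exp(-\alpha s_k q_k)]$. Mutual independence gives
\[\EP[\exp(-\alpha\s^\T\q)]=\prod_{j=1}^{m}A_j,\qquad \EP[q_k\exp(-\alpha\s^\T\q)]=B_k\prod_{j\neq k}A_j.\]
Substituting into~\eqref{eq:vwap-exponential}, every factor with $j\neq k$ cancels in the ratio, yielding $\bar f^\q_k(\s)=B_k/A_k=\bar f^{q_k}(s_k)$ as claimed.

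For the order book density the cleanest route is to first establish the additivity $V(\s^\T\q)=\sum_{j=1}^{m}V(s_j q_j)$. By linearity of expectation in the numerator of $V(\s^\T\q)=\EP[(\s^\T\q)\exp(-\alpha\s^\T\q)]/\EP[\exp(-\alpha\s^\T\q)]$ one obtains $V(\s^\T\q)=\sum_{j}s_j\bar f^\q_j(\s)$, and applying the VWAP result just proved gives $V(\s^\T\q)=\sum_{j}s_j\bar f^{q_j}(s_j)=\sum_{j}V(s_j q_j)$. A quotient-rule differentiation of $V(\s^\T\q)$ with respect to $s_k$ reproduces exactly the formula~\eqref{eq:odf-exponential}, which identifies $f^\q_k(\s)=\partial_{s_k}V(\s^\T\q)$. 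Since only the $k$-th summand in the additive representation depends on $s_k$, partial differentiation collapses to $\partial_{s_k}V(s_k q_k)=f^{q_k}(s_k)$, completing the proof.

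The main step is really the additivity of $V$ under independence; once that is in place both assertions are one-liners. The only calculations requiring care are (i) interpreting ``pairwise independent'' as mutual independence so the moment generating function factorizes, and (ii) checking the identification $f^\q_k(\s)=\partial_{s_k}V(\s^\T\q)$ by comparing the quotient-rule derivative with the explicit expression~\eqref{eq:odf-exponential}; both are routine.
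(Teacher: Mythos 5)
Your proof is correct and is exactly the argument the paper compresses into its one-line proof (``independence and properties of the exponential function''): factor the exponential, cancel common factors in the VWAP ratio, and differentiate the resulting additive decomposition $V(\s^\T\q)=\sum_j V(s_j q_j)$ to recover the order book density, checking that $f_k^\q(\s)=\partial_{s_k}V(\s^\T\q)$ against~\eqref{eq:odf-exponential}. Your side remark is also well taken: for $m\ge 3$ the factorization $\EP[\exp(-\alpha\s^\T\q)]=\prod_{j}\EP[\exp(-\alpha s_j q_j)]$ genuinely requires mutual (not merely pairwise) independence, so the hypothesis as stated is slightly weaker than what the argument actually uses.
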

\begin{proof}
This follows directly by independence and properties of the exponential function.
\end{proof}

Though 
\blue{\eqref{eq:V-exponential}}
provides a clear, analytical, structure for the pricing function $V$ and the inverse demand functions, it \emph{only} satisfies those properties that hold generally for such functions \blue{(i.e., Lemma~\ref{lemma:barV}\eqref{lemma:barV-bound}-\eqref{lemma:barV-cont})}.  For instance, it is \emph{not} true that $V$ is monotonic or concave in general (see, e.g., the discussion for the inverse demand function under a Poisson distribution \blue{in Example~\ref{ex:exponential}}).  Similarly, the order book density function $f^q$ does \emph{not} provide any clear structure as it can be negative (see, e.g., the discussion for the inverse demand function under either the normal or Poisson distributions).  However, those properties proven above \blue{within Lemmas~\ref{lemma:f} and~\ref{lemma:barf}} that hold generally (e.g., continuity and monotonicity of the VWAP inverse demand function $\bar f^q$) will hold herein.

The exponential utility setting provides an added benefit; so long as $\q$ has a distribution with a moment generating function, we can easily define the inverse demand functions $\f^\q$ and $\bar\f^\q$.  Notably, this allows us to consider a larger domain than $(L^\infty)^m$, including, e.g., the multivariate normal distribution.  
In those cases, motivated by the previous sections, we will define $V$ using \eqref{eq:V}, which in this setting (with exponential utility and linear $R$ in \eqref{eq:R-lin}) simplifies to \eqref{eq:V-exponential}. We then compute $\f^\q$ and $\bar\f^\q$ \blue{by utilizing \eqref{eq:V-exponential} to compute $V$ and using known closed form moment generating functions.} 

In Example~\ref{ex:exponential} below, we will consider a few well known distributions to provide the structure of the inverse demand functions.  We also wish to highlight a final, discrete, distribution that provides a counterexample for the convexity of these inverse demand functions in general. 
\begin{example}\label{ex:exponential}
Let $\f^\q$ and $\bar\f^\q$ be constructed from the exponential utility function.  Under the following distributions of $\q$ we can determine the structure of the inverse demand functions explicitly.
\begin{itemize}
\item \textbf{Multivariate normal}: If $\q \sim N(\mu,C)$ then $\f^\q(\s) = \mu - 2\alpha C\s$ and $\bar\f^\q(\s) = \mu-\alpha C \s$.  That is, we recover the linear inverse demand function common in the literature (see, e.g., \cite{GLT15,braouezec2017strategic,cont2019monitoring}). This makes this example one of the most important examples presented here. 
Notably, both the order book density and the VWAP inverse demand functions in this setting can take negative prices.  This is unsurprising given the positive probability of negative payoffs for these assets.  Financially, the possible negativity of the order book density function implies there exists some market depth beyond which the appetite for risk is satiated and further liquidations require compensating the purchasing counterparty for accepting this risk.
\item \textbf{Poisson}: If $q \sim \operatorname{Pois}(\lambda)$ then $f^q(s) = (1-\alpha s)\lambda \exp(-\alpha s)$ and $\bar f^q(s) = \lambda\exp(-\alpha s)$.  That is, we recover the exponential inverse demand function common in the literature (see, e.g., \cite{AW_15,feinstein2019leverage}). This is the second most important example in the paper. We, not only, recovered the exponential inverse demand function, but this example also shows that the order book density can become negative even for nonnegative payoffs, and that it need not be convex for all $s\ge0$. As a consequence of the negativity of $f^q$ \blue{(i.e., the marginal price dropping below the essential infimum of the claim's payoff which introduces the possibility for risk-free profits)}, we find that the order book has a finite depth $\alpha^{-1}$ such that any greater liquidations ($s > \alpha^{-1}$) no longer has a meaningful financial interpretation in terms of the order book.
\item \textbf{Bernoulli}: If $q \sim \operatorname{Bern}(p)$ then $f^q(s) = \frac{p^2 + (1-\alpha s)p(1-p)\exp(\alpha s)}{(p + (1-p)\exp(\alpha s))^2}$ and $\bar f^q(s) = \frac{p}{p + (1-p)\exp(\alpha s)}$.  
\item \textbf{Gamma}: If $q \sim \Gamma(k,\theta)$ then $f^q(s) = \frac{k\theta}{(1+\alpha\theta s)^2}$ and $\bar f^q(s) = \frac{k\theta}{1+\alpha\theta s}$. 
\item \textbf{Discrete distribution}: We wish to conclude with a simple distribution that results in nonconvex inverse demand functions $f^q$ and $\bar f^q$.  Let $q \in \{0,1,16\}$ with $\P(q = 0) = 0.02, \P(q = 1) = 0.49, \P(q = 16) = 0.49$. Then $\operatorname{skew}(q) = \frac{16.5584}{56.5411^{3/2}} = 0.0389 > 0$ which implies both $f^q$ and $\bar f^q$ are convex near $0$.  Consider now $s = \frac{1}{\alpha}$, $\bar f^q(s) \approx 0.90$ but $(\bar f^q)''(s) = \alpha^2 \E^\Q[(q-\bar f^q(s))^3] \approx -0.071\alpha^2 < 0$. Similarly it can be shown that the order book density function $f^q$ is nonconvex as well. 
\end{itemize}
\end{example}

\subsection{Power utility}\label{sec:cs-power}
Consider the B\"uhlmann equilibrium construction from Section~\ref{sec:buhlmann} in which every market participant has power utility function $u_i(x) := \frac{x^{1-\eta}-1}{1-\eta}$ if $\eta \neq 1$ and $u_i(x) = \log(x)$ if $\eta = 1$ for constant relative risk aversion $\eta \geq 0$.  As the relative risk aversion in this case is constant, it immediately follows that we can construct a payment function with the function $R(x) = \eta(\log(x)-\log(\X))$ (with $\bbd = \Rplus$), i.e.,
\begin{align}
\label{eq:V-power} V(Z) &= \FIX_v \frac{\EP[Z(\X + Z - v)^{-\eta}]}{\EP[(\X + Z - v)^{-\eta}]}.
\end{align}
We wish to note that, generally, there is no closed form for this construction.  As discussed previously, due to $\bbd = \Rplus$, the domain of $V$ is not the entire space $L^\infty$.  In fact, as proven in Proposition~\ref{prop:V-power} below, $\dom V = \{Z \in L^\infty \; | \; \EP[(Z-\essinf Z)^{1-\eta}] \leq \X\EP[(Z-\essinf Z)^{-\eta}]\}$.  In addition to the properties listed below, the extension $\bar V$ also satisfies all properties that hold generally (e.g., translativity).
\begin{proposition}\label{prop:V-power}
Let the pricing function $V$ be defined as in~\eqref{eq:V-power} for some $\eta \in [0,1]$ and let $\bar V$ be its extension as defined in~\eqref{eq:barV}.
\begin{enumerate}
\item $\dom V = \{Z \in L^\infty \; | \; \EP[(Z-\essinf Z)^{1-\eta}] \leq \X\EP[(Z-\essinf Z)^{-\eta}]\}$.
\item $\bar V$ is nondecreasing and concave.  It is, additionally, Lipschitz continuous in the strong topology and weak* upper semicontinuous.
\item If additionally $\eta > 0$ then $\bar V$ is strictly increasing and strictly concave on $\dom V$.
\end{enumerate}
\end{proposition}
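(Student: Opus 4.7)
The plan is to treat the three parts in order, reducing each to results already available in the excerpt by computing the relevant quantities for the power utility specification $R(x) = \eta(\log x - \log \X)$.

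For part (1), I would first observe that since $\X$ is deterministic, $Z$ and $\X + Z$ are comonotonic, so condition~\eqref{thm:unique-comonotone} of Theorem~\ref{thm:unique} holds and Corollary~\ref{cor:exist-cond} applies: $Z \in \dom V$ if and only if $H_Z(\essinf[\X+Z]) \le \essinf[\X+Z]$ (with the $\liminf$ version used when the former is undefined). Since $\X$ is deterministic, $\essinf[\X+Z] = \X + \essinf Z$, and $\exp(-R(\X+Z - \essinf[\X+Z])) = \X^{\eta}(Z - \essinf Z)^{-\eta}$. Substituting into $H_Z$ and subtracting $\essinf Z$ from both sides of the inequality yields
\[\E[(Z - \essinf Z)^{1-\eta}] \le \X\,\E[(Z-\essinf Z)^{-\eta}].\]
When $\P(Z = \essinf Z) > 0$ and $\eta > 0$, the right-hand side is $+\infty$, which is consistent with Corollary~\ref{cor:Rplus} guaranteeing $Z \in \dom V$.

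For part (2), I verify the hypotheses of Lemma~\ref{lemma:barV}\eqref{lemma:barV-monotone} and \eqref{lemma:barV-concave}. Setting $g(z) := z\exp(-R(\X+z)) = z\X^{\eta}(\X+z)^{-\eta}$, routine differentiation yields
\[g'(z) = \X^{\eta}(\X+z)^{-\eta-1}\bigl[\X + (1-\eta)z\bigr],\qquad g''(z) = -\eta\,\X^{\eta}(\X+z)^{-\eta-2}\bigl[2\X + (1-\eta)z\bigr].\]
For $\eta \in [0,1]$ and $z \ge 0$, both bracketed terms are nonnegative, giving $g' \ge 0$ and $g'' \le 0$. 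Lemma~\ref{lemma:barV}\eqref{lemma:barV-monotone} then delivers nondecreasingness and Lipschitz continuity with constant $1$ in the strong topology, while Lemma~\ref{lemma:barV}\eqref{lemma:barV-concave} delivers concavity and weak* upper semicontinuity.

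For part (3), when $\eta > 0$ the inequalities above are strict on $\R_+$: $g'(z) > 0$ and $g''(z) < 0$. To upgrade the monotonicity to strict monotonicity on $\dom V$, I would use the first-order condition $\E[(Z - \bar V(Z))(\X + Z - \bar V(Z))^{-\eta}] = 0$: if $Z_1 \ge Z_2$ with strict inequality on a set of positive probability while $\bar V(Z_1) = \bar V(Z_2) = v$, then comparing the two fixed-point equations at the common value $v$ and using the strict monotonicity of $z \mapsto (z - v)(\X + z - v)^{-\eta}$ (a consequence of $g' > 0$) contradicts the shift. Strict concavity on $\dom V$ would be obtained by revisiting the concavity proof underlying Lemma~\ref{lemma:barV}\eqref{lemma:barV-concave} with $g'' < 0$: equality in the concavity inequality would force $Z_1 - \bar V(Z_1) = Z_2 - \bar V(Z_2)$ almost surely, which combined with translation invariance (Lemma~\ref{lemma:barV}\eqref{lemma:barV-translative}) contradicts $Z_1 \ne Z_2$.

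The hardest step will be part (3). Parts (1) and (2) are direct computations feeding into existing results, but propagating the strict inequalities $g' > 0$ and $g'' < 0$ into strict properties of the fixed-point map requires a careful rigidity argument ruling out equality in the concavity inequality — essentially tracking where the Jensen-type step used inside Lemma~\ref{lemma:barV}\eqref{lemma:barV-concave} is sharp, and excluding the degenerate case by the equality characterization of a strictly concave function.
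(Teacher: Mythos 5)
Your overall route matches the paper's: part (1) reduces to Corollary~\ref{cor:exist-cond} after computing $\exp(-R(\X+Z-\essinf[\X+Z])) = \X^\eta(Z-\essinf Z)^{-\eta}$, and parts (2)--(3) feed the derivatives of $g(z)=z\X^\eta(\X+z)^{-\eta}$ into Lemma~\ref{lemma:barV}; your formulas for $g'$ and $g''$ are correct. One repair is needed in part (1): the case where $H_Z(\X+\essinf Z)$ is not well defined is not confined to $\P(Z=\essinf Z)>0$ --- it occurs exactly when $\EP[(Z-\essinf Z)^{-\eta}]=\infty$, which can happen with no atom at the essential infimum. In that case Corollary~\ref{cor:Rplus} still applies directly (the relevant exponential is $\X^\eta(Z-\essinf Z)^{-\eta}$, which has infinite expectation), and the inequality defining the set holds trivially because $\EP[(Z-\essinf Z)^{1-\eta}]\le\|Z-\essinf Z\|_\infty^{1-\eta}<\infty$ for $\eta\le 1$; the dichotomy should be stated in terms of finiteness of $\EP[(Z-\essinf Z)^{-\eta}]$ rather than via the atom.

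The genuine gap is the strict-concavity step of part (3). Tracking equality in the Jensen-type step forces $Z_1-\bar V(Z_1)=Z_2-\bar V(Z_2)$ a.s., i.e.\ $Z_1=Z_2+c$ for a constant $c$, and this does \emph{not} contradict $Z_1\ne Z_2$. On the contrary, translativity (Lemma~\ref{lemma:barV}\eqref{lemma:barV-translative}) makes $\bar V$ exactly affine along the direction of constants, so equality in the concavity inequality genuinely occurs whenever $Z_1-Z_2$ is a nonzero constant; no cash-translative functional can be strictly concave in the unrestricted sense. Your rigidity argument therefore establishes strict concavity only between claims whose difference is non-deterministic, and the conclusion must either be read modulo translations or the degenerate direction excluded explicitly. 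The strict-monotonicity argument, by contrast, is sound: if $Z_1\ge Z_2$ with $\P(Z_1>Z_2)>0$ and $v=\bar V(Z_2)$, strict monotonicity of $g$ gives $\EP[(Z_1-v)(\X+Z_1-v)^{-\eta}]>0$, and since $v\mapsto\EP[(Z_1-v)(\X+Z_1-v)^{-\eta}]$ is strictly decreasing past its root, $\bar V(Z_1)>v$.
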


With the construction for $\bar V$, we can consider the inverse demand functions; for simplicity we will first present the setting with a single portfolio liquidated proportionally.  However, these functions do not provide any analytical expression except one w.r.t.\ $\bar V$, i.e.,
\begin{align}
\label{eq:odf-power} f^q(s) &= \begin{cases} \frac{\EP[q(1-\eta\frac{sq-\bar V(sq)}{\X + sq-\bar V(sq)})(\X+sq-\bar V(sq))^{-\eta}]}{\EP[(1-\eta\frac{sq-\bar V(sq)}{\X + sq-\bar V(sq)})(\X+sq-\bar V(sq))^{-\eta}]} &\text{if } s \leq \frac{\X\EP[(q-\essinf q)^{-\eta}]}{\EP[(q-\essinf q)^{1-\eta}]} \\ 
     \essinf q &\text{else}, \end{cases} \\
\label{eq:vwap-power} \bar f^q(s) &= \begin{cases} \frac{\EP[q(\X + sq - \bar V(sq))^{-\eta}]}{\EP[(\X + sq - \bar V(sq))^{-\eta}]} &\text{if } s \leq \frac{\X\EP[(q-\essinf q)^{-\eta}]}{\EP[(q-\essinf q)^{1-\eta}]} \\ 
    \frac{\X}{s} + \essinf q &\text{else}. \end{cases}
\end{align}
Notably, due to the construction of $R$, the properties \blue{provided in Lemma~\ref{lemma:f}} for the order book density function $f^q$ (for a single portfolio being liquidated proportionally) hold for  $\eta\in[0,1]$.  The general properties of \blue{the VWAP} inverse demand function hold for every $\eta \geq 0$ \blue{by Lemma \ref{lemma:barf}}.
\begin{corollary}\label{cor:idf-power}
Consider the order book density function $f^q$ with a single portfolio being liquidated proportionally.  Assume $\P(q > \essinf q) > 0$.  If $\eta \in [0,1]$ then $f^q$ is nonincreasing and bounded from below by $\essinf q$.  If, additionally, $\eta > 0$ then $f^q(s) > \essinf q$ for every $s \in [0,\frac{\X\EP[(q-\essinf q)^{-\eta}]}{\EP[(q-\essinf q)^{1-\eta}]})$.
\end{corollary}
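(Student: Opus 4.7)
The plan is to read Corollary~\ref{cor:idf-power} as a direct specialization of Lemma~\ref{lemma:f} to the power-utility choice $R(x) = \eta(\log x - \log\X)$. Under this choice $\exp(-R(\X+z)) = \X^{\eta}(\X+z)^{-\eta}$, so the whole argument reduces to checking the two structural conditions that drive Lemma~\ref{lemma:f}(b) and~(d): namely, that
$$g(z) := z\exp(-R(\X+z)) = \X^{\eta}\,z\,(\X+z)^{-\eta}$$
is (strictly) nondecreasing and concave on $[0,\infty)$ whenever $\eta\in[0,1]$.

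First I would carry out the two derivative computations. Direct differentiation gives
$$g'(z) = \X^{\eta}\bigl(\X+(1-\eta)z\bigr)(\X+z)^{-\eta-1},\qquad g''(z) = -\eta\,\X^{\eta}\bigl(2\X+(1-\eta)z\bigr)(\X+z)^{-\eta-2}.$$
For $\eta\in[0,1]$ and $z\ge 0$ we have $\X+(1-\eta)z>0$ and $2\X+(1-\eta)z>0$, so $g'>0$ (strictly increasing) and $g''\le 0$ (concave, strictly concave for $\eta\in(0,1]$). Thus both hypotheses of Lemma~\ref{lemma:f}(b) and the hypothesis of Lemma~\ref{lemma:f}(d) are satisfied. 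Part~(d) then yields that $f^q$ is nonincreasing, the first half of part~(b) yields $f^q(s)\ge\essinf q$, and when $\eta>0$ the strict monotonicity of $g$ combines with the assumption $\P(q>\essinf q)>0$ to activate the second half of part~(b), giving $f^q(s)>\essinf q$ for every $s$ with $sq\in\dom V$.

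To finish, I would translate the condition $sq\in\dom V$ into the explicit interval stated in the corollary using Proposition~\ref{prop:V-power}. Since $\essinf(sq)=s\essinf q$ for $s\ge 0$, the characterization $\EP[(Z-\essinf Z)^{1-\eta}]\le \X\,\EP[(Z-\essinf Z)^{-\eta}]$ applied to $Z=sq$ factors as $s^{1-\eta}\EP[(q-\essinf q)^{1-\eta}] \le s^{-\eta}\X\,\EP[(q-\essinf q)^{-\eta}]$, i.e.\ $s\le \X\EP[(q-\essinf q)^{-\eta}]/\EP[(q-\essinf q)^{1-\eta}]$, exactly matching the range in the claim. The only place where anything could go wrong is the derivative computation above — a sign slip would destroy the concavity conclusion — but conceptually the restriction $\eta\in[0,1]$ is precisely the regime in which the relative risk aversion of the harmonic representative agent is bounded by one, which, as noted in Remark~\ref{rem:R2}, is exactly what guarantees that $z\mapsto z\exp(-R(\X+z))$ remains both monotone and concave.
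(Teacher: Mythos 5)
Your proposal is correct and follows the same route as the paper, whose entire proof is the one line ``this is a direct consequence of Lemma~\ref{lemma:f}''; you have simply made explicit the verification that $z\mapsto z\exp(-R(\X+z))=\X^{\eta}z(\X+z)^{-\eta}$ is increasing and concave for $\eta\in[0,1]$ (both derivative formulas check out) and the translation of $sq\in\dom V$ into the stated interval via Proposition~\ref{prop:V-power}.
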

\begin{proof}
This is a direct consequence of Lemma~\ref{lemma:f}.
\end{proof}
We wish to highlight that, as opposed to the exponential utility case above, even if the assets are independent it is not guaranteed that $f_k^\q(\s)$ can be separated into a function $f_k^{q_k}(s_k)$.  This is clear from the construction of the inverse demand functions for a $m$-dimensional random vector $\q$, i.e., for $\s \in \R^m_+$ such that $\frac{\EP[(\s^\T\q - \essinf \s^\T\q)^{1-\eta}]}{\EP[(\s^\T\q - \essinf \s^\T\q)^{-\eta}]} \leq \X$:
\begin{align*}
\f^\q(\s) &= \frac{\EP[\q(1-\eta\frac{\s^\T\q-\bar V(\s^\T\q)}{\X + \s^\T\q-\bar V(\s^\T\q)})(\X+\s^\T\q-\bar V(\s^\T\q))^{-\eta}]}{\EP[(1-\eta\frac{\s^\T\q-\bar V(\s^\T\q)}{\X + \s^\T\q-\bar V(\s^\T\q)})(\X+\s^\T\q-\bar V(\s^\T\q))^{-\eta}]} ,\\
\bar\f^\q(\s) &= \frac{\EP[\q(\X + \s^\T \q - \bar V(\s^\T \q))^{-\eta}]}{\EP[(\X + \s^\T \q - \bar V(\s^\T \q))^{-\eta}]}.
\end{align*}
The induced price cross-impacts implies that there may exist complicated dependencies between prices of (statistically) independent assets.  The cross-impacts for an i.i.d.\ bivariate lognormal setting are displayed in Figure~\ref{fig:cross-impact}; due to the symmetry of the assets only the inverse demand functions for the first asset are provided.  Figure~\ref{fig:cross-impact} displays the contour lines for different price levels as a function of the joint liquidation amounts $\s = (s_1,s_2)$. If no cross-impacts existed, these contour lines would be vertical as this would imply $s_2$ does not impact the price of asset $1$.  Notably, with this interpretation in mind, there are less cross-impacts when $s_1$ is small, but the cross impacts can grow significantly as $s_1$ grows.
\begin{figure}
\centering
\begin{subfigure}[t]{0.45\textwidth}
\centering
\includegraphics[width=\textwidth]{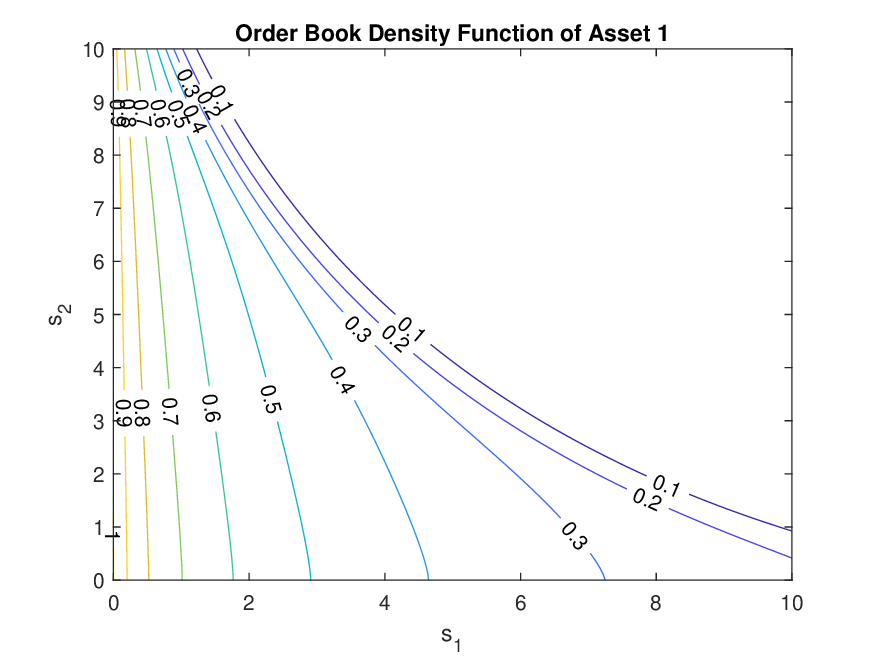}
\caption{Order book density $f_1^\q$ for asset $1$.}\label{fig:cross-impact-f}
\end{subfigure}
~
\begin{subfigure}[t]{0.45\textwidth}
\centering
\includegraphics[width=\textwidth]{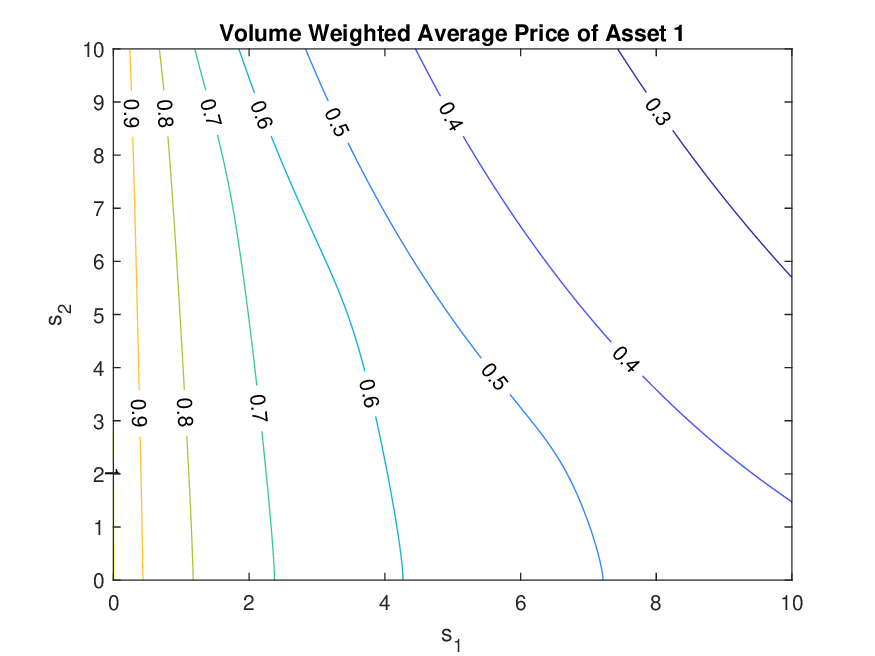}
\caption{Volume weighted average price $\bar f_1^\q$ for asset $1$.}\label{fig:cross-impact-fbar}
\end{subfigure}
\caption{Plot of the inverse demand functions $f_1^\q,\bar f_1^\q$ for the first asset of i.i.d.\ bivariate lognormal payoffs \blue{($Z_i \sim \operatorname{LogN}(-\frac{\sigma^2}{2},\sigma^2)$ at $\sigma = 1$ for $i \in \{1,2\}$) over a financial system of log-utility maximizers ($\eta = 1$) with aggregate assets $\X = 5$}.}\label{fig:cross-impact}
\end{figure}

In addition to the properties satisfied by these inverse demand functions, we also wish to note that, e.g., $\s \mapsto \s^\T\bar\f^\q(\s)$ is nondecreasing and concave for $\eta \in [0,1]$ due to Proposition~\ref{prop:V-power}. 
Further, as with the exponential utility function, neither inverse demand function is convex in general; herein this can be seen with a single asset $q$ at $s^* = \frac{\X\EP[(q-\essinf q)^{-\eta}]}{\EP[(q-\essinf q)^{1-\eta}]}$ for $q > \essinf q$ a.s.  In particular, as displayed in Figure~\ref{fig:lognormal-without}, neither $f^q$ nor $\bar f^q$ is convex at $s^* \approx 2.568$ for $q \sim \operatorname{LogN}(-\frac{\sigma^2}{2},\sigma^2)$ at $\sigma = 0.5$ with $\X = 2$ and $\eta = 1$. 
\blue{Furthermore, motivated by Corollary~\ref{cor:bernoulli}, in Figure~\ref{fig:lognormal-with} we plot the inverse demand functions for the same risk-\blue{sharing} system ($\X = 2$ and $\eta = 1$) for a liquidated lognormally distributed claim (i.e., $\sigma = 0.5$), but with a $0.001\%$ probability of systemic ruin (i.e., $p = 1 - 10^{-5}$ for independent Bernoulli distributed random variable $B[p]$).  Notably the order book density function $f^q$ is continuous with this inclusion of the risk of systemic ruin, though still neither $f^q$ nor $\bar f^q$ are convex mappings.}
\begin{figure}
\centering
\begin{subfigure}[t]{0.45\textwidth}
\centering
\includegraphics[width=\textwidth]{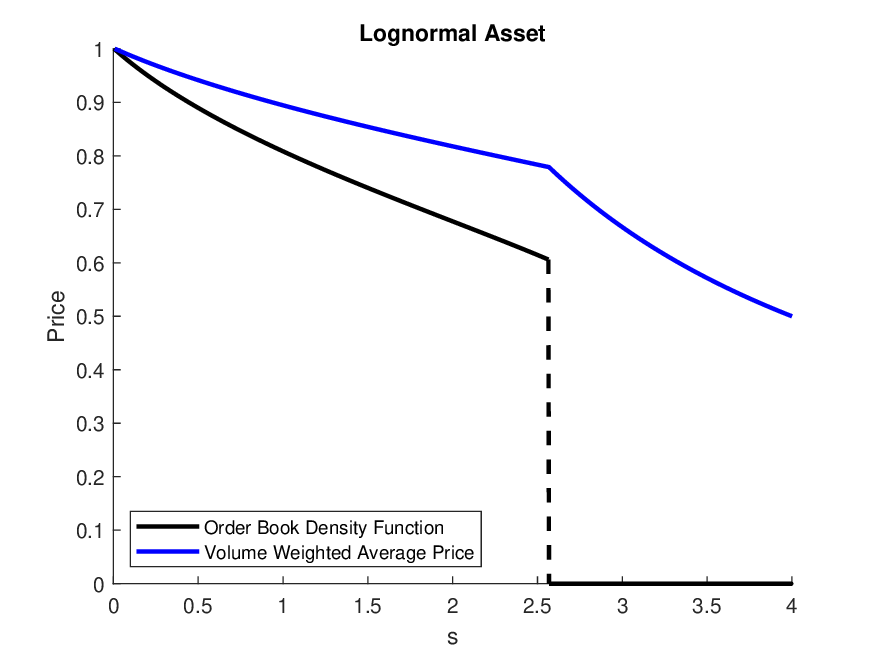}
\caption{Inverse demand functions for lognormal distribution.}\label{fig:lognormal-without}
\end{subfigure}
~
\begin{subfigure}[t]{0.45\textwidth}
\centering
\includegraphics[width=\textwidth]{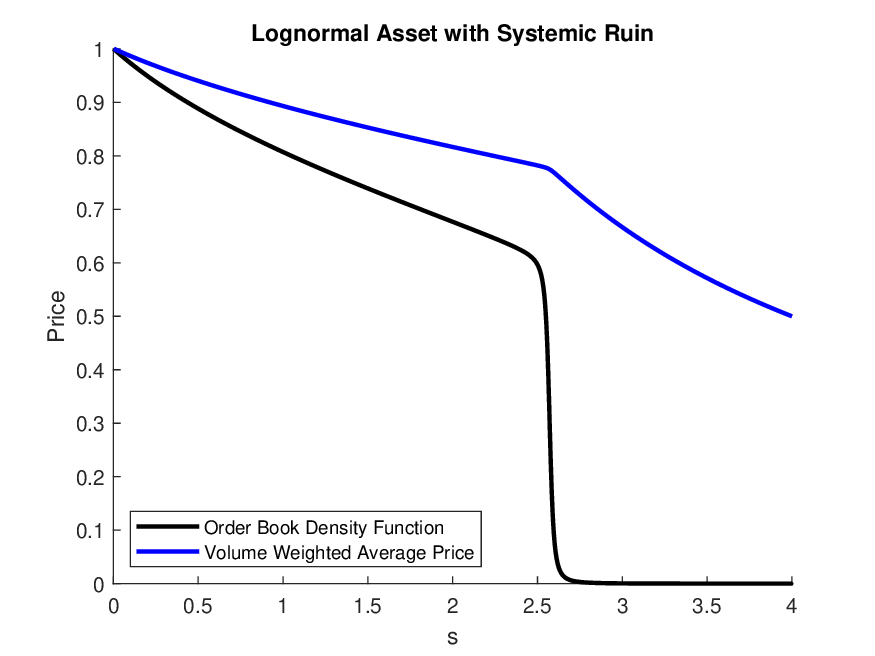}
\caption{Inverse demand functions for lognormal distribution with low probability of systemic ruin.}\label{fig:lognormal-with}
\end{subfigure}
\caption{Plot of the inverse demand functions $f^q,\bar f^q$ for an asset with lognormal payoffs \blue{($Z \sim \operatorname{LogN}(-\frac{\sigma^2}{2},\sigma^2)$ at $\sigma = 0.5$)} with 
\blue{(b)} \blue{and} without \blue{(a)} probability of systemic ruin \blue{over a financial system of log-utility maximizers ($\eta = 1$) with aggregate assets $\X = 2$}.}\label{fig:lognormal}
\end{figure}

We wish to conclude this \blue{section} with a few quick comments on the difficulties inherent in finding analytical forms for the pricing function $\bar V$.  Though numerical computation of $\bar V(Z)$ is straightforward through Monte Carlo simulation, analytical construction is hampered by the need for non-integer moments of a constant plus $Z$.  The combination of these requirements on the distribution generally make an explicit representation of the fixed point of $V$ intractable.   

%

\section{Conclusion}\label{sec:conclusion}
In this work we have introduced an equilibrium model for pricing externally liquidated assets.  This \blue{model} builds upon the seminal work by B\"uhlmann~\cite{buhlmann1980economic,buhlmann1984general} to endogenize the price impacts to find a clearing price in a financial market.  This \blue{approach} is in contrast to the typical approach in the fire sale literature in which inverse demand functions are exogenously given. In order to study these endogenous inverse demand functions we prove the existence and uniqueness of the \blue{the price of the liquidated assets}.  We additionally find that the resulting pricing functions satisfy the axioms of monetary risk measures; further study of this class of risk measures is left for future research.  Utilizing these results, we analyze two special cases -- all \blue{market participants} are exponential or power utility maximizers -- to study analytical structures.  The exponential utility setup provides a direct connection with the Esscher transform and provides analytical structure to the inverse demand functions whereas the power utility setup satisfies useful mathematical properties for any claim being liquidated \blue{(e.g., the monotonicity and concavity of the value of liquidated claims)}.  Importantly, we find an example -- the power utility setting -- in which these inverse demand functions generate price cross-impacts even for statistically independent assets.

\bibliographystyle{plain}
\small{\bibliography{bibtex2}}

\newpage
\setcounter{page}{1}
\appendix
\section{Construction of the (modified) B\"uhlmann equilibrium}\label{sec:buhlmann-appendix}
In Section~\ref{sec:buhlmann}, we introduced the basic details of the B\"uhlmann equilibrium.  For completeness, in this section we wish to present the general construction of the B\"uhlmann equilibrium $(Y,\Q)$.  The arguments presented here follow directly from~\cite{buhlmann1984general}.  

First, recall the setting of the modified B\"uhlmann equilibrium problem.  That is, consider a market of $n$ participants with utility functions $u_i$ and endowments $X_i$ into which some \emph{external} portfolio $Z \in L^\infty$ is sold.  The modified B\"uhlmann equilibrium $(Y,\Q)$ satisfies:
\begin{enumerate}
\item \textbf{Utility maximizing}: $Y_i \in \argmax_{\hat Y_i} \EP\left[u_i\left(X_i + \hat Y_i - \E^\Q[\hat Y_i]\right)\right]$ \blue{with $\EP\left[u_i\left(X_i +  Y_i - \E^\Q[ Y_i]\right)\right]\in\R$} for every $i \in \{1,2,...,n\}$
; and
\item \textbf{Equilibrium transfers}: $\sum_{i = 1}^n Y_i = Z$ for \emph{externally} sold position $Z \in L^\infty$.
\end{enumerate}
If $Z \equiv 0$ then this modified equilibrium coincides exactly with the typical B\"uhlmann equilibrium.
\blue{Recall that for simplicity of exposition, we have assumed that $\X +Z -\E^\Q[Z] \in \D$, and as an ansatz let 
\begin{align}
X_i +  Y_i - \E^\Q[ Y_i]\in\D,~i \in \{1,2,...,n\}.
\label{eq:ansatz}
\end{align}
}
%
The first order condition for utility maximizing implies
\begin{equation}\label{eq:foc}
u_i'(X_i + Y_i - \E^\Q[Y_i]) = \underbrace{\EP\left[u_i'(X_i + Y_i - \E^\Q[Y_i])\right]}_{=: C_i \in \Rplus} \frac{d\Q}{d\P} \quad \text{a.s.}
\end{equation}
By~\cite{borch1960,borch1962}, any equilibrium must, additionally, depend on $\omega \in \Omega$ only through $\gamma:=\X + Z - \E^\Q[Z] = \sum_{i = 1}^n (X_i + Y_i - \E^\Q[Y_i])$.  We must also have that $X_i + Y_i - \E^\Q[Y_i]$ depends on $\om$ only through $\gamma$.   Letting $\Y_i(\gamma) := X_i + Y_i - \E^\Q[Y_i]$ for every $i$ with $\sum_{i = 1}^n \Y_i(\gamma) = \gamma$ and $\phi(\gamma)$ is the associated Radon-Nikodym derivative of $\Q$ w.r.t.\ $\P$.
Therefore~\eqref{eq:foc} can be rewritten as $u_i'(\Y_i(\gamma)) = C_i \phi(\gamma)$.  Taking the derivative (w.r.t.\ $\gamma$) of the logarithm of both sides leads to the relation
\begin{equation}\label{eq:foc-2}
\underbrace{\frac{u_i''(\Y_i(\gamma))}{u_i'(\Y_i(\gamma))}}_{= -\rho_i(\Y_i(\gamma))} \Y_i'(\gamma) = \frac{\phi'(\gamma)}{\phi(\gamma)} \quad \forall i = 1,...,n
\end{equation}
where $\rho_i$ denotes the risk aversion of bank $i$.
As $\sum_{i = 1}^n \Y_i'(\gamma) = 1$ by construction, we recover the relation
\begin{equation}\label{eq:buhlmann-2}
1 = -\frac{\phi'(\gamma)}{\phi(\gamma)}\underbrace{\sum_{i = 1}^n \frac{1}{\rho_i(\Y_i(\gamma))}}_{=: n \rho(\gamma)^{-1}}
\end{equation}
for harmonic average $\rho$ of risk aversions $\rho_i,~1\leq i \leq n$.

Directly from~\eqref{eq:buhlmann-2}, the B\"uhlmann equilibrium $(Y,\Q)$ can be constructed.  As we are primarily concerned with the pricing measure $\Q$ within this work, we will first focus on that measure through the construction of $\phi(\gamma)$.  Specifically, $\phi'(\gamma) = -\frac{1}{n}\rho(\gamma)\phi(\gamma)$ or, equivalently (noting that $\EP[\phi(\gamma)] = 1$) 
\[\phi(\gamma) = \frac{\exp\left(-\frac{1}{n}\int_{c}^\gamma \rho(\xi)d\xi\right)}{\EP\left[\exp\left(-\frac{1}{n}\int_{c}^\gamma \rho(\xi)d\xi\right)\right]}\]
for arbitrary $c \in \D$.  The pricing measure $\Q$ is then constructed as in~\eqref{eq:Q} as $\frac{d\Q}{d\P} = \phi(\X + Z - \E^\Q[Z])$.
For this construction, implicitly we require $\Y_i(\gamma)$ so as to define the risk aversion $\rho$ of the harmonic representative agent.  Utilizing both~\eqref{eq:foc-2} and~\eqref{eq:buhlmann-2}, we recover the differential system
\begin{equation}\label{eq:ycal}
\Y_i'(\gamma) = \frac{1}{n}\frac{\rho(\gamma)}{\rho_i(\Y_i(\gamma))}
\end{equation}
with initial condition $\Y_i(c) \in \R$ such that $\E^\Q[\Y_i(\X+Z-\E^\Q[Z])] = \E^\Q[X_i]$ for every bank $i$.  We wish to note that~\cite{buhlmann1984general} studies the existence and uniqueness of the equilibrium by considering the existence and uniqueness of such an initial condition (with $Z \equiv 0$); such results are replicated within Theorem~\ref{thm:buhlmann}.

If the individual investments $Y_i$ were desired, then this can be found directly from the construction of $\Y_i(\X+Z-\E^\Q[Z])$ given the pricing measure $\Q$.
Specifically, 
\begin{equation*}
Y_i = -X_i + \Y_i(\X + Z - \E^\Q[Z]) + \lambda_i \E^\Q[Z]
\end{equation*}
for $\lambda_i \in \R$ arbitrary such that $\sum_{i = 1}^n \lambda_i = 1$.  By construction of $\Y(\X+Z-\E^\Q[Z])$, it immediately follows that equilibrium transfers $\sum_{i = 1}^n Y_i = Z$, \blue{and if $\bbd = \Rplus$ the ansatz \eqref{eq:ansatz} can be checked as in \cite{aase1993equilibrium}}.

\section{Proof from Section \ref{sec:buhlmann}}
\subsection{Proof of Proposition \ref{prop:representative}}

\begin{proof}
Note that $\rho_i: \D \to \R_+$ is nonnegative for every $i$ by the assumed properties of $u_i: \D \to \R$.  This implies $\rho: \D \to \R_+$ is nonnegative as well by construction in~\eqref{eq:rho}.  Therefore $\Y_i: \D \to \R$ is nondecreasing as its derivative is nonnegative (see~\eqref{eq:ycal}).
By construction, it immediately follows that
\begin{align*}
\rho(\gamma_1) &= n\left(\sum_{i = 1}^n \frac{1}{\rho_i(\Y_i(\gamma_1))}\right)^{-1} \geq n\left(\sum_{i = 1}^n \frac{1}{\rho_i(\Y_i(\gamma_2))}\right)^{-1} = \rho(\gamma_2)
\end{align*}
for $\gamma_1,\gamma_2 \in \D$ such that $\gamma_1 \leq \gamma_2$. 
\end{proof}

\section{Example of nonunique pricing}\label{sec:nonunique}
The extra conditions of Theorem~\ref{thm:unique} are sufficient (but not necessary) for the uniqueness of the equilibrium price.  We present the following counterexample to demonstrate that existence of an equilibrium price does not guarantee uniqueness and, thus, demonstrating some conditions are needed for such a property.  Notably, the following counterexample is under the setting $\bbd = \R$ which, by Theorem~\ref{thm:exists}, the existence of a fixed point for $H_Z$ is guaranteed.

\begin{example}\label{ex:nonunique}
Let $R: \R \to \R$ (i.e., $\bbd = \R$) be such that $R(z) = 1 - e^{-z+2.3}$ for any $z \in \R$.  Consider the discrete probability space $(\Omega := \{\omega_1,\omega_2\} \; , \; 2^\Omega \; , \; \P)$ such that $\P(\omega_1) = 0.01$ and $\P(\omega_2) = 0.99$.  Further, consider the setting in which the system-wide assets are provided by $\X(\omega_1) = 10^{-5}$ and $\X(\omega_2) = 100$, i.e., the financial system is shocked in scenario $\omega_1$.  Trivially $\X$ satisfies Assumption~\ref{ass:X} and $R$ satisfies assumption \ref{assump:R}, therefore this financial system has full domain ($\dom V = L^\infty$) by Theorem~\ref{thm:exists}.  
Consider the specific position to be liquidated: $Z \in L^\infty$ such that $Z(\omega_1) = 2$ and $Z(\omega_2) = 10^{-5}$.  Notably, this setting does \emph{not} satisfy any of the uniqueness conditions of Theorem~\ref{thm:unique}. In fact, there exist \emph{three} equilibrium prices for $Z$ all between $\essinf Z = 10^{-5}$ and $\esssup Z = 2$; up to rounding errors, the set of equilibrium prices of $Z$ is given by:  
\begin{align*}
V(Z) &= \{0.08403 \; , \; 1.38977 \; , \; 1.98985\}.
\end{align*}
The multiplicity of equilibria is visually clear in Figure~\ref{fig:nonunique} in which $H_Z(v)$ is plotted as a function of the initial valuation $v\in[10^{-5},2]$.
\begin{figure}
\centering
\includegraphics[width=0.5\textwidth]{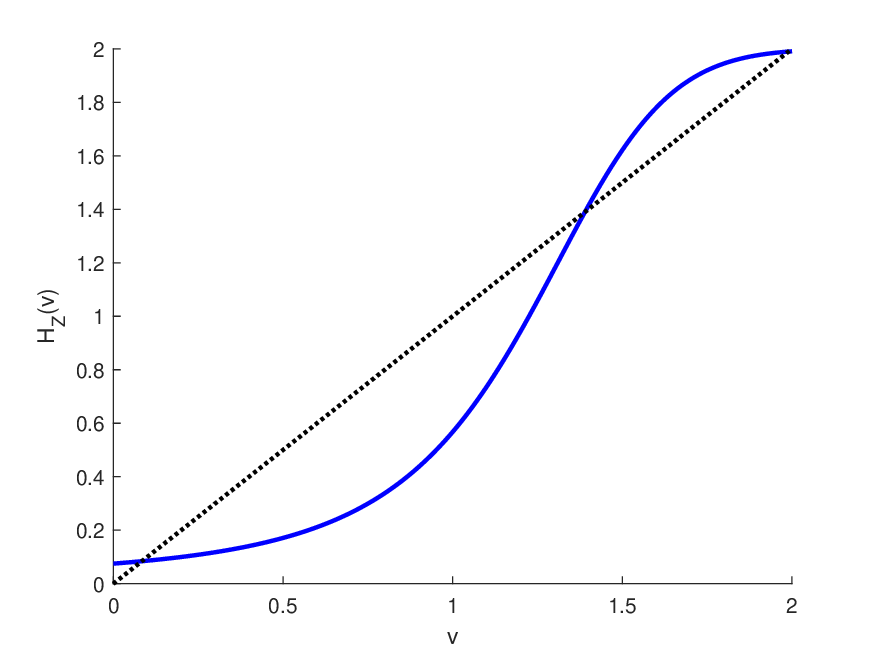}
\caption{Plot of the price $H_Z(v)$ resulting from initial valuation $v\in[10^{-5},2]$ for Example~\ref{ex:nonunique}.}\label{fig:nonunique}
\end{figure}
\end{example}

\section{Proofs from Section \ref{sec:value}}
\subsection{Proof of Theorem \ref{thm:exists}}
\begin{proof}
We first note that $R$ is continuous, and therefore so is $v\mapsto H_Z(v)$. 
\begin{enumerate}
\item If $\bbd =\R$, the existence of the fixed point now follows from Brouwer's fixed-point theorem, because 
$\essinf Z\le H_Z(v)\le\esssup Z$ for all $v\in\blue{\dom H_Z=}\R$.  In particular, these bounds hold for for $v\in[\essinf Z, \esssup Z].$

\item If $\bbd =\Rplus$, we still have that $\essinf Z\le H_Z(\essinf Z)$. \blue{If $\essinf[\X+Z]\not\in \dom H_Z$, and }
if, additionally, $\liminf\limits_{v \nearrow \essinf[\X+Z]} H_Z(v) =: L < \essinf[\X+Z]$, then fix $v^* \in (L , \essinf[\X+Z])$.  By assumption, there exists some $\bar v^* \in [v^* , \essinf[\X+Z])$ such that $H_Z(\bar v^*) < v^* \leq \bar v^*$.  Thus $H_Z(v)-v$ must have a root on $[\essinf Z , \bar v^*)$, which is a fixed point for $H_Z(v)$.
Alternatively, if \blue{$\essinf[\X+Z]\in \dom H_Z$ and if} $H_Z(\essinf[\X+Z]) \leq \essinf[\X+Z]$, then $H_Z(v)-v$ again must have a root on $[\essinf Z , \essinf[\X+Z]]$, which is again a fixed point for $H_Z(v)$.


\end{enumerate}
\blue{In any case, compactness of $V(Z)$ follows from: boundedness as $\essinf Z\le H_Z(v)\le\esssup Z\vee \essinf[\X+Z]$ for any $v \in \dom H_Z$; and closedness as the limit of fixed points of a continuous mapping is also a fixed point.}
\end{proof}

\subsection{Proof of Theorem \ref{thm:unique}}
\begin{proof}
We first consider conditions \eqref{thm:unique-comonotone} and \eqref{thm:unique-linear}.
Consider the derivative of $H_Z$ for a fixed external liquidation $Z$.  \blue{Let $v\in\dom H_Z$.} For simplicity of notation, define $R := R(\X+Z-v)$ and $R' := R'(\X+Z-v)$. 
\begin{align*}
H_Z'(v) &= \frac{\EP[\exp(-R)]\EP[Z R'\exp(-R)] - \EP[Z \exp(-R)]\EP[R' \exp(-R)]}{\EP[\exp(-R)]^2}.
\end{align*}
Recall, by construction, $R' > 0$.  Therefore, $H_Z'(v) \leq 0$ \blue{(the derivative is understood to be the right derivative, if $v$ is on the boundary of $\dom H_Z$)} 
if 
\[\EP[Z\exp(-R(\Zhat))]\EP[R'(\Zhat)\exp(-R(\Zhat))] - \EP[\exp(-R(\Zhat))]\EP[Z R'(\Zhat)\exp(-R(\Zhat))] \geq 0\]
where $\Zhat := \X+Z-v$.
Indeed, let $A(\omega) := \{\bar\omega \in \Omega \; | \; \Zhat(\bar\omega) \leq \Zhat(\omega)\}$. Then
\begin{align*}
&-H_Z'(v) \EP[\exp(-R)]^2=
\EP[Z\exp(-R(\Zhat))]\EP[R'(\Zhat)\exp(-R(\Zhat))] - \EP[\exp(-R(\Zhat))]\EP[Z R'(\Zhat)\exp(-R(\Zhat))]\\
&= \int_{\Omega} Z(\omega)\exp(-R(\Zhat(\omega)))\P(d\omega) \int_{\Omega} R'(\Zhat(\bar\omega))\exp(-R(\Zhat(\bar\omega)))\P(d\bar\omega)\\
&\qquad - \int_{\Omega} \exp(-R(\Zhat(\omega)))\P(d\omega) \int_{\Omega} Z(\bar\omega)R'(\Zhat(\bar\omega))\exp(-R(\Zhat(\bar\omega)))\P(d\bar\omega)\\
&= \int_{\Omega}\int_{\Omega} \exp(-R(\Zhat(\omega)))\exp(-R(\Zhat(\bar\omega))) \left[Z(\omega)-Z(\bar\omega)\right]R'(\Zhat(\bar\omega))\P(d\bar\omega)\P(d\omega)\\
&= \int_{\Omega}\int_{A(\omega)} \exp(-R(\Zhat(\omega)))\exp(-R(\Zhat(\bar\omega))) \left[Z(\omega)-Z(\bar\omega)\right]R'(\Zhat(\bar\omega))\P(d\bar\omega)\P(d\omega)\\
&\qquad + \int_{\Omega}\int_{A(\omega)^c} \exp(-R(\Zhat(\omega)))\exp(-R(\Zhat(\bar\omega))) \left[Z(\omega)-Z(\bar\omega)\right]R'(\Zhat(\bar\omega))\P(d\bar\omega)\P(d\omega)\\
&= \int_{\Omega}\int_{A(\omega)} \exp(-R(\Zhat(\omega)))\exp(-R(\Zhat(\bar\omega))) \left[Z(\omega)-Z(\bar\omega)\right]R'(\Zhat(\bar\omega))\P(d\bar\omega)\P(d\omega)\\
&\qquad + \int_{\Omega}\int_{A(\omega)} \exp(-R(\Zhat(\omega)))\exp(-R(\Zhat(\bar\omega))) \left[Z(\bar\omega)-Z(\omega)\right]R'(\Zhat(\omega))\P(d\bar\omega)\P(d\omega)\\
&= \int_{\Omega}\int_{A(\omega)} \underbrace{\exp(-R(\Zhat(\omega)))\exp(-R(\Zhat(\bar\omega)))}_{> 0} 
\left[Z(\omega)-Z(\bar\omega)\right]\left[R'(\Zhat(\bar\omega))-R'(\Zhat(\omega))\right]\P(d\bar\omega)\P(d\omega).
\end{align*}
Therefore, \blue{for $v\in\dom H_Z$,} $H_Z'(v) \le0$ if
$\left[Z(\omega)-Z(\bar\omega)\right]\left[R'(\Zhat(\bar\omega))-R'(\Zhat(\omega))\right]\ge0$.
This holds if either:
\begin{enumerate}[(a)]
\item $Z$ and $\X+Z$ are comonotonic because $R'$ is non-increasing;
\item $R$ is linear as $R'$ is a constant.
\end{enumerate}
In either case, we have that $\frac{\d}{\d v} (H_Z(v) -v)\leq -1 < 0$, which guarantees at most one root exists.

For cases \eqref{thm:unique-monotone} and \eqref{thm:unique-concave}, we note that $v = H_Z(v)$ if and only if $\Theta_Z(v) := \EP[(Z-v)\exp(-R(\X+Z-v))] = 0$. Additionally, by Proposition~\ref{prop:monotone-concave}, $z\mapsto z\exp(-R(\X+z))$ is almost surely non-decreasing or concave (under \eqref{thm:unique-monotone} and \eqref{thm:unique-concave} respectively) where it is well-defined.  First assume $\P(Z = \essinf Z) = 1$, i.e., $Z$ is a constant a.s.; then trivially the only root is given by $V(Z) = \essinf Z$.  Otherwise $\P(Z > \essinf Z) > 0$.  We will focus on $\frac{\d}{\d v} \Theta_Z(v)$ over feasible $v\in\R$, i.e.,
\begin{align*}
\frac{\d}{\d v}\Theta_Z(v) = -\EP\left[(1-(Z-v)R'(\X+Z-v))\exp(-R(\X+Z-v))\right].
\end{align*}
\begin{enumerate}[(a)]\setcounter{enumi}{2}
\item If $z \mapsto z\exp(-R(\X+z))$ is almost surely non-decreasing, then $z R'(\X + z) \leq 1, ~z\in\D$ a.s.. 
This guarantees $(Z-v)R'(\X+Z-v) \leq 1$ a.s.,\ because $(Z(\omega)-v)R'(\X(\omega)+Z(\omega)-v) \leq 0$ on $\{\omega \in \Omega \; | \; Z(\omega) \leq v\}$ for any feasible $v \in \R$.  As $\Theta_Z(\essinf Z) > 0$, then $\P(Z \leq v^*) > 0$ for any (feasible) price $v^* \in \R$ such that $\Theta_Z(v^*) = 0$.  As a direct consequence $\frac{\d}{\d v}\Theta_Z(v) < 0$ for any $v \geq v^*$ feasible which contradicts the existence of a multiplicity of equilibria.
\item If $z \mapsto z\exp(-R(\X+z))$ is almost surely concave, then $v \mapsto \Theta_Z(v)$ is concave as well.  As $\Theta_Z(\essinf Z) > 0$, the minimal (feasible) price $v^* \in \R$ such that $\Theta_Z(v^*) = 0$ (if it exists) must satisfy $\frac{\d}{\d v}\Theta_Z(v^*) < 0$.  By concavity, for any $v \geq v^*$ feasible must therefore also satisfy $\frac{\d}{\d v}\Theta_Z(v) < 0$ which contradicts the existence of a multiplicity of equilibria.
\end{enumerate}
\end{proof}

\begin{proposition}\label{prop:monotone-concave}
Assume $R$ satisfies Assumption~\ref{assump:R}.  If $z \in \Rplus \mapsto z \exp(-R(\X+z))$ is non-decreasing (concave) for fixed $\X \in \D$ then this mapping is non-decreasing (concave) over its entire domain $\D-\X$.
\end{proposition}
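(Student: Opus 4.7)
The plan is to compute
\[
g'(z) = \bigl(1 - zR'(\X+z)\bigr)\, e^{-R(\X+z)} \quad \text{for } z \in \D - \X,
\]
(this derivative exists because $R$ is differentiable by Assumption~\ref{assump:R}) and then to split the domain at the origin into a ``non-positive'' piece $(\D-\X) \cap (-\infty,0]$ and a ``positive'' piece $\Rplus \cap (\D-\X)$. The hypothesis supplies the desired property on the positive piece, so the task reduces to establishing it on the non-positive piece and pasting the two halves at $z=0$.

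For the \emph{non-decreasing} conclusion, I would observe that for any $z \le 0$ in $\D-\X$, the strict monotonicity of $R$ forces $R'(\X+z) \ge 0$, so $zR'(\X+z) \le 0 < 1$, and hence $g'(z) > 0$. Combined with the hypothesis on $\Rplus \cap (\D-\X)$ and the continuity of $g$ at $z=0$, this yields $g$ non-decreasing on all of $\D-\X$.

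For the \emph{concave} conclusion, since $g$ is differentiable, concavity is equivalent to $g'$ being non-increasing. Writing $\phi(z) := zR'(\X+z)$, the key claim is that $\phi$ is non-decreasing on $(\D-\X) \cap (-\infty,0]$: for $z_1 \le z_2 \le 0$, the concavity of $R$ gives $R'(\X+z_1) \ge R'(\X+z_2) \ge 0$, and since $z_1 \le 0$,
\begin{equation*}
\phi(z_1) = z_1 R'(\X+z_1) \le z_1 R'(\X+z_2) \le z_2 R'(\X+z_2) = \phi(z_2).
\end{equation*}
Hence $1-\phi$ is non-increasing and $\ge 1 > 0$ on this subregion, while $e^{-R(\X+z)}$ is non-increasing and strictly positive; their product $g'$ is therefore non-increasing on the non-positive piece. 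Combined with the hypothesis (which asserts $g'$ non-increasing on the positive piece) and continuity of $g'$ at $0$, we conclude $g'$ is non-increasing on all of $\D-\X$, so $g$ is concave.

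The main subtlety I anticipate is the pasting step at $z=0$: one must verify that the values (resp.\ derivatives) from the two subregions agree so that the combined function is genuinely monotone (resp.\ has a monotone derivative). This is immediate from continuity of $g$ and $g'$ under Assumption~\ref{assump:R} and, notably, requires no second-order differentiability of $R$, so the argument remains valid under the hypotheses as stated.
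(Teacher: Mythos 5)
Your proposal is correct and follows essentially the same route as the paper: differentiate to get $\left[1 - zR'(\X+z)\right]\exp(-R(\X+z))$, observe this is positive for $z\le 0$ (giving monotonicity there), and show the derivative is non-increasing on $z\le 0$ via the monotonicity and sign of each factor (giving concavity there), then paste at $z=0$. The paper states these facts more tersely, but your fuller justification of why $z\mapsto zR'(\X+z)$ is non-decreasing on the non-positive piece is exactly the argument the paper leaves implicit.
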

\begin{proof}
Consider the derivative of $z \exp(-R(\X+z))$ w.r.t.\ $z$, i.e.,
\[\frac{\d}{\d z} z \exp(-R(\X+z)) = \left[1 - z R'(\X+z)\right]\exp(-R(\X+z)).\]
Therefore, $\frac{\d}{\d z} z \exp(-R(\X+z)) > 0$ for any $z \leq 0$; and thus monotonicity holds over the entire domain if it holds for $z \in \Rplus$.
Further, by Assumption~\ref{assump:R}, this derivative is strictly decreasing on $z \leq 0$; therefore concavity holds over the entire domain if it holds for $z \in \Rplus$.
\end{proof}

\subsection{Proof of Corollary \ref{cor:exist-cond}}
\begin{proof}
We have already shown in Theorem \ref{thm:exists} that under the condition
$H_Z(\essinf [\X + Z]) \leq \essinf [\X + Z]$, when  \blue{$\essinf[\X+Z]\in\dom H_Z$,} or
$\liminf_{v \nearrow \essinf [\X + Z]} H_Z(v) < \essinf [\X + Z]$ if otherwise  \blue{$\essinf[\X+Z]\not\in\dom H_Z$,}  
we have existence of the fixed point. Given any of the conditions \eqref{thm:unique-comonotone}-\eqref{thm:unique-concave} of Theorem~\ref{thm:unique} we immediately recover uniqueness as well.  

We will show that this condition it is also necessary for existence under any of the conditions of Theorem~\ref{thm:unique}. 
Assume first that  \blue{$\essinf[\X+Z]\in\dom H_Z$,} and assume that $H_Z({\essinf [\X + Z]}) > {\essinf [\X+Z]}$.  We will demonstrate that this implies $H_Z(v)-v$ (equivalently $\Theta_Z(v) := \EP[(Z-v)\exp(-R(\X+Z-v))]$) has no roots on $[\essinf Z, \essinf [\X+Z] ]$, and therefore no fixed point for $H_Z(v)$ exists. 
Indeed, if either \eqref{thm:unique-comonotone} or \eqref{thm:unique-linear} of Theorem~\ref{thm:unique} hold then we have shown in the proof of Theorem \ref{thm:unique} that $H_Z'(v) \le0$, and therefore
$H_Z(v) > \essinf [\X+Z] \ge v$ for $v\in[\essinf Z, \essinf [\X+Z] ].$ 
Similarly, if \eqref{thm:unique-monotone} or \eqref{thm:unique-concave} of Theorem~\ref{thm:unique} hold then we have shown in the proof of Theorem \ref{thm:unique} that $\Theta_Z(v) \leq 0$ for every $v \geq V(Z)$.  As $H_Z(\essinf[\X+Z]) - \essinf[\X+Z] > 0$, $\Theta_Z(v) > 0$ for every $v \in[\essinf Z,\essinf[\X+Z]]$.

The proof is similar if  \blue{$\essinf[\X+Z]\not\in\dom H_Z$}.
By either \eqref{thm:unique-comonotone} or \eqref{thm:unique-linear} of Theorem~\ref{thm:unique}, we have that $H_Z'(v) \le0$.  Therefore, $H_Z(v) \ge \liminf_{v \nearrow \essinf[\X+Z]} H_Z(v) \geq \essinf [\X + Z] > v$ for all $v < \essinf [\X+Z]$, \blue{$v\in\dom H_Z$}. 
Similarly and as above, under either \eqref{thm:unique-monotone} or \eqref{thm:unique-concave} of Theorem~\ref{thm:unique}, we have that $\Theta_Z(v) > 0$ for all $v < \essinf [\X+Z]$.
In either setting we get that $H_Z(v) >v$ for all $v < \essinf [\X+Z]$, \blue{$v\in\dom H_Z$}, 
and thus no fixed point exists.
%
%
%

\end{proof}

\subsection{Proof of Corollary \ref{cor:Rplus}}
\begin{proof}
%
%
%
Recall $\essinf\X\in\D$ from Assumption \ref{ass:X}. 
To simplify notation throughout this proof, define $\Zhat(v) := \X+Z-v$ and $\tilde Z := \Zhat(\essinf[\X+Z]) = \X+Z-\essinf[\X+Z]$. 
Fix $0<\eps<\essinf\X/2$. By construction, $\EP[\exp(-R(\Zhat(v)))] < \infty$ for $v < \essinf[\X+Z]$ and $\EP[\exp(-R(\tilde Z))\ind_{\{\tilde Z \geq \eps\}}] < \infty$; therefore $\EP[\exp(-R(\tilde Z))\ind_{\{0\leq \tilde Z < \eps\}}] = \infty$ following from the assumption of the corollary.  Note that $Z - \essinf[\X+Z] < \eps-\X < -\eps$ a.s.\ on $\{0\leq \tilde Z < \eps\}$.
Therefore,
\begin{align}
\nonumber&\liminf_{v \nearrow \essinf[\X+Z]} H_Z(v)  = \liminf_{v \nearrow \essinf[\X+Z]} \frac{\EP[Z\exp(-R(\Zhat(v) ))]}{\EP[\exp(-R(\Zhat(v) ))]} \\
\nonumber&\quad=  \liminf_{v \nearrow \essinf[\X+Z]} \frac{\EP[( Z - \essinf [\X+Z]) \exp(-R(\Zhat(v) ))]}{\EP[\exp(-R(\Zhat(v) ))]} +\essinf [\X+Z]\\
\nonumber&\quad=  \liminf_{v \nearrow \essinf[\X+Z]}  \frac{\EP\left[ ( Z - \essinf [\X+Z])\exp\left(-R( \Zhat(v))\right)\left( \ind_{\{0\le \tilde Z < \eps\}} + \ind_{\{ \tilde Z \ge \eps\}}\right)\right]} {\EP\left[ \exp\left(-R( \Zhat(v))\right)\right]} +\essinf [\X+Z]\\
\nonumber&\quad=  \liminf_{v \nearrow \essinf[\X+Z]}  \frac{\EP\left[ ( Z - \essinf [\X+Z])\exp\left(-R( \Zhat(v))\right) \ind_{\{0\le \tilde Z < \eps\}} \right]} {\EP\left[ \exp\left(-R( \Zhat(v))\right)\right]} \\
\nonumber&\qquad+  \liminf_{v \nearrow \essinf[\X+Z]}  \frac{\EP\left[ ( Z - \essinf [\X+Z])\exp\left(-R( \Zhat(v))\right) \ind_{\{ \tilde Z \ge \eps\}}\right]} {\EP\left[ \exp\left(-R( \Zhat(v))\right)\right]} +\essinf [\X+Z]\\
\label{eq:proof:cor:Rplus}&\quad=  \liminf_{v \nearrow \essinf[\X+Z]} \frac{\EP\left[ ( Z - \essinf [\X+Z])\exp\left(-R( \Zhat(v))\right) \ind_{\{0\le \tilde Z < \eps\}} \right]} {\EP\left[ \exp\left(-R( \Zhat(v))\right)\right]} +\essinf [\X+Z] \\
\nonumber&\quad\le \liminf_{v \nearrow \essinf[\X+Z]} \frac{-\frac{\eps}{2} \EP\left[ \exp\left(-R( \Zhat(v))\right) \ind_{\{0\le \tilde Z < \eps\}} \right]} {\EP\left[ \exp\left(-R( \Zhat(v))\right)\ind_{\{0\le \tilde Z < \eps\}}\right]} + \essinf[\X+Z]\\
\nonumber&\quad= -\frac{\eps}{2} + \essinf[\X+Z] < \essinf [\X+Z].
\end{align}
We recover the equality in \eqref{eq:proof:cor:Rplus} as \[\lim_{v \nearrow \essinf[\X+Z]}  \frac{\EP\left[ ( Z - \essinf [\X+Z])\exp\left(-R( \Zhat(v))\right) \ind_{\{ \tilde Z \ge \eps\}}\right]} {\EP\left[ \exp\left(-R( \Zhat(v))\right)\right]} = 0\]
by assumption of this corollary.
\end{proof}

\section{Proofs from Section \ref{sec:extension}}

Within the proof of Lemma~\ref{lemma:barV} and later, we frequently make use of an optimization based representation for $\bar V$. 
\begin{proposition}\label{prop:opt-rep}
Assume $R$ satisfies Assumption~\ref{assump:R}.  
If any of the conditions of Theorem~\ref{thm:unique} holds, then the extension $\bar V$, can equivalently be formulated as:
\begin{align*}
\bar V(Z) &= \sup\{v \geq \essinf Z \; | \; H_Z(v) \geq v, \; \essinf[\X+Z] - v \in \bbd\}\\
    &= \sup\{v \geq \essinf Z \; | \; \EP[(Z-v)\exp(-R(\X+Z-v))] \geq 0, \; \essinf[\X+Z] - v \in \bbd\}
\end{align*}
for any $Z \in L^\infty$ for any choice of $\bbd$.
\end{proposition}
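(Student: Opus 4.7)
The plan is first to observe that the two supremum formulations coincide. Since $\EP[\exp(-R(\X+Z-v))]>0$ for every $v$ with $\essinf[\X+Z]-v\in\bbd$, dividing by this strictly positive quantity preserves signs, so $H_Z(v)\ge v$ is equivalent to $\Theta_Z(v):=\EP[(Z-v)\exp(-R(\X+Z-v))]\ge 0$. I would then denote the common supremum by $S$ and split according to whether $Z\in\dom V$.

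Suppose $Z\in\dom V$. By Theorem~\ref{thm:unique} the fixed point is unique, hence $\bar V(Z)=V(Z)$, and by Corollary~\ref{cor:exist-cond} we have $V(Z)\le\essinf[\X+Z]$, so $V(Z)$ is feasible for the sup, giving $S\ge V(Z)$. To establish $S\le V(Z)$ I would show $\Theta_Z(v)<0$ for every feasible $v>V(Z)$. Under conditions~\eqref{thm:unique-comonotone} or~\eqref{thm:unique-linear}, the proof of Theorem~\ref{thm:unique} already gives $\tfrac{d}{dv}(H_Z(v)-v)\le -1$, so $H_Z-\mathrm{id}$ is strictly decreasing with its unique zero at $V(Z)$. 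Under~\eqref{thm:unique-monotone} that same proof yields $\tfrac{d}{dv}\Theta_Z(v)<0$ for feasible $v\ge V(Z)$, and under~\eqref{thm:unique-concave} it yields concavity of $\Theta_Z$ together with $\Theta_Z'(V(Z))<0$; in each case $\Theta_Z$ strictly decreases through zero at $V(Z)$. Combined with $\Theta_Z\ge 0$ on $[\essinf Z,V(Z)]$, which follows from $\Theta_Z(\essinf Z)\ge 0$, continuity, and the absence of further roots forced by uniqueness, this gives $S=V(Z)=\bar V(Z)$.

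Suppose instead $Z\notin\dom V$. By Theorem~\ref{thm:exists} this forces $\bbd=\Rplus$, and by definition $\bar V(Z)=\essinf[\X+Z]$. The proof of Corollary~\ref{cor:exist-cond} shows that, under any of the conditions of Theorem~\ref{thm:unique}, failure of the existence criterion entails $H_Z(v)>v$, equivalently $\Theta_Z(v)>0$, for every $v\in[\essinf Z,\essinf[\X+Z]]$ on which $H_Z$ is defined. Hence the feasibility set in the definition of $S$ contains this entire interval, and $S=\essinf[\X+Z]=\bar V(Z)$.

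The main obstacle is the case analysis imposed by the four uniqueness conditions: one must verify that $\Theta_Z$ genuinely changes sign at $V(Z)$ (rather than merely touching zero), so that no feasible $v>V(Z)$ sneaks into the defining set of the supremum. This strict transversality is exactly what the four separate arguments within the proof of Theorem~\ref{thm:unique} supply, and my plan is to invoke them directly rather than reprove them. Additional care is needed at the boundary value $v=\essinf[\X+Z]$ when $\bbd=\Rplus$, to ensure it is correctly included in (or excluded from) the feasibility set depending on whether $H_Z$ extends continuously there, matching the two alternative hypotheses in Corollary~\ref{cor:exist-cond}.
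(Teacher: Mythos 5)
Your proposal is correct and follows essentially the same route as the paper's own proof: establish feasibility of $V(Z)$ when $Z\in\dom V$ and rule out feasible $v>V(Z)$ by invoking the monotonicity/sign arguments from the proof of Theorem~\ref{thm:unique}, then handle $Z\notin\dom V$ by noting $H_Z(v)>v$ on the whole interval so the supremum is $\essinf[\X+Z]$. Your extra attention to the boundary point $v=\essinf[\X+Z]$ when $\bbd=\Rplus$ is a reasonable refinement but does not change the argument.
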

\begin{proof}
We wish to note that the two optimization problems are trivially equivalent by construction of $H_Z(v)$.  Further, the conditions of Theorem~\ref{thm:unique} imply: $v \mapsto H_Z(v)$ is monotonic under the first two conditions, $v \mapsto \EP[(Z-v)\exp(-R(\X+Z-v))]$ is monotonic under the third condition and monotonic for $v \geq V(Z)$ (for $Z \in \dom V$) in the fourth condition.

Fix $Z \in L^\infty$.  
If $Z \in \dom V$, i.e., $\#V(Z) = 1$ which we will treat as a scalar value, then we will prove the result by showing that $V(Z)$ is feasible for this optimization problem and $v > V(Z)$ is \emph{not} feasible.  As $V(Z) \geq \essinf Z$ by Lemma~\ref{lemma:barV}\eqref{lemma:barV-bound}, feasibility of $V(Z)$ follows from the construction of the equilibrium pricing problem as $\essinf[\X+Z]-V(Z) \in \bbd$ and $H_Z(V(Z)) = V(Z)$.  Further, the conditions of Theorem~\ref{thm:unique} imply either $H_Z(v) < v$ (for the first two conditions) or $\EP[(Z-v)\exp(-R(\X+Z-v))] < 0$ (for the latter two conditions) for any $v > V(Z)$.  This proves the result. 
If $Z \not\in\dom V$ then, noting that this implies $\bbd = \Rplus$, $H_Z(v) > v$ for every $v \in [\essinf Z,\essinf[\X+Z])$ by $H_Z(\essinf Z) > \essinf Z$ and the nonexistence of an equilibrium price.  Therefore it follows that $\sup\{v \geq \essinf Z \; | \; H_Z(v) \geq v, \; \essinf[\X+Z]-v\in\bbd\} = \essinf[\X+Z] = \bar V(Z)$ and the proof is complete.
\end{proof}

\subsection{Proof of Lemma~\ref{lemma:barV}}

\begin{proof}
\begin{enumerate}
\item First, if $Z \in \dom V$, then for any $v \in V(Z)$ it follows that $v = \EP[Z\exp(-R(\X+Z-v))]/\EP[\exp(-R(\X+Z-v))] \in [\essinf Z,\esssup Z]$ by construction; as such this holds for the minimal price $\bar V(Z)$ as well.  Assume, now, $Z \not\in \dom V$ and, as such, $\bbd = \Rplus$.  Immediately the lower bound $\bar V(Z) = \essinf[\X+Z] > \essinf Z$ holds.  If $\essinf[\X+Z] \leq \esssup Z$ then the upper bound holds; assume $\essinf[\X+Z] > \esssup Z$, then $Z \in \dom V$ because $H_Z(\esssup Z) \leq \esssup Z$ by construction and $\esssup Z$ being a feasible price for $Z$ which forms a contradiction and the proof is complete.
\item This follows immediately by the law invariance of $H_Z(v;\X)$ in $(Z,\X+Z)$. 
\item Note that $Z+z \in \dom V$ if and only if $Z \in \dom V$.  Furthermore, note that by construction $H_{Z+z}(v) = H_Z(v-z)+z$.  First, let $Z \in \dom V$ and $v \in V(Z+z)$.  Therefore $v = H_{Z+z}(v) = H_Z(v-z)+z$.  Take the ansatz that $v-z =: v^* \in V(Z)$, then $v^* = H_Z(v^*)$.  As this satisfies the fixed point problem and $Z \in \dom V$ then $V(Z+z) \subseteq V(Z)+z$.  The converse relation follows comparably, which immediately leads to the conclusion that $\bar V(Z+z) = \bar V(Z)+z$.  If $Z \not\in \dom V$ then $\bar V(Z+z) = \essinf[\X+Z+z] = \essinf[\X+Z] + z = \bar V(Z)+z$.  

\item \begin{enumerate}
    \item Consider $\bbd = \R$ where $\dom V = L^\infty$ by Theorem~\ref{thm:exists}.  Fix $\bar Z \in L^\infty$ and let $\ncal \subseteq \{Z \in L^\infty \; | \; \|Z - \bar Z\|_{\infty} \leq \delta\}$ for some $\delta > 0$ be a closed neighborhood of $\bar Z$.  Define $\bar H: \ncal \times [\essinf\bar Z - \delta , \esssup\bar Z + \delta] \to [\essinf\bar Z - \delta , \esssup\bar Z + \delta]$ be defined as the restriction of $H$, i.e., $\bar H(Z,v) := H_Z(v)$ for any $(Z,v) \in \ncal \times [\essinf\bar Z - \delta , \esssup\bar Z + \delta]$.  (Note that $H_Z(v) \in [\essinf Z , \esssup Z] \subseteq [\essinf\bar Z - \delta,\esssup\bar Z + \delta]$ by construction of $H$ and $\ncal$.)  Therefore, by construction, $V(Z) = \FIX_v \bar H(Z,v)$ for every $Z \in \ncal$.  By~\cite[Lemma C.1]{feinstein2020nash}, $Z \in \ncal \to V(Z)$ is a set-valued upper continuous mapping (i.e., continuous in the upper Vietoris topology).  As a direct consequence of \cite[Lemma 17.30]{AB07}, $Z \in \ncal \mapsto \bar V(Z) = \min V(Z)$ is lower semicontinuous.  As this is true for any (closed) neighborhood around any $\bar Z \in L^\infty$, $\bar V$ must be lower semicontinuous on the entire space $L^\infty$.
Furthermore, if any of the conditions of Theorem~\ref{thm:unique} hold, then uniqueness of the equilibrium price guarantees (scalar) continuity of $V$ which completes the proof in this setting.

    \item Consider $\bbd = \Rplus$.  Fix $\bar Z \in \dom V$ and let $\ncal \subseteq \{Z \in L^\infty \; | \; \|Z - \bar Z\|_{\infty} \leq \delta\}$ for some $\delta > 0$ be a closed neighborhood of $\bar Z$.  Define $\bar H: \ncal \times [\essinf\bar Z - \delta , \esssup\bar Z + \delta] \to 2^{[\essinf\bar Z - \delta , \esssup\bar Z + \delta]}$ be the set valued mapping with graph
\[\operatorname{graph}\bar H := \cl\left\{(Z,v,H_Z(v)) \; | \; Z \in \ncal, \, v \in [\essinf\bar Z - \delta , \esssup\bar Z + \delta], \, \essinf[\X+Z]-v\in\bbd\right\}.\]
(Note that $\bar H(Z,v) = \{H_Z(v)\}$ and $H_Z(v) \in [\essinf Z , \esssup Z] \subseteq [\essinf\bar Z - \delta , \esssup\bar Z + \delta]$ for any $Z \in \ncal$ and $v \in [\essinf\bar Z - \delta , \esssup\bar Z + \delta]$ such that $\essinf[\X+Z]-v\in\bbd$ by construction and continuity of $H$ and using the fact that $\bbd$ is open.)  Define $Z \in \ncal \mapsto \tilde V(Z) := \FIX_v \bar H(Z,v)$; this is a set-valued upper continuous mapping by~\cite[Lemma C.1]{feinstein2020nash}.  As a direct consequence of \cite[Lemma 17.30]{AB07}, $Z \in \ncal \cap \dom\tilde V \mapsto \min\tilde V(Z)$ is lower semicontinuous.  With the convention $\min\emptyset = \infty$ and noting that $\ncal \cap \dom\tilde V$ is closed (by noting that the graph of $\bar V$ is closed due to upper continuity), we can extend this result insofar as $\min\tilde V(\cdot)$ is lower semicontinuous on $\ncal$.
Importantly, by construction, $\bar V(Z) = \min\{\min\tilde V(Z) , \essinf[\X+Z]\}$ for any $Z \in \ncal$ since $\bar V(Z) = \min\tilde V(Z)$ on $Z \in \ncal \cap \dom V$ and $\bar V(Z) \leq \min\tilde V(Z)$ on $Z \in \ncal \cap (\dom V)^c$ (with the convention $\min\emptyset = \infty$). Therefore, $Z \in \ncal \mapsto \bar V(Z)$ is lower semicontinuous as the minimum of a lower semicontinuous and a continuous mapping.  As this is true for any (closed) neighborhood around any $\bar Z \in L^\infty$, $\bar V$ must be lower semicontinuous on the entire space $L^\infty$.

It remains to show that $\bar V$ is continuous if any of the conditions of Theorem~\ref{thm:unique} hold.
    By Proposition~\ref{prop:opt-rep} (modified by translation by $\essinf Z$), we can utilize an optimization representation of $\bar V$.
    By the Berge maximum theorem, if
\begin{align}
&Z \mapsto D(Z) \\
&:= \cl\{v \in [0,\essinf[\X+Z]-\essinf Z) \; | \; \EP[(Z-\essinf Z-v)\e{-R(\X+Z-\essinf Z-v)} ] \geq 0\}
\end{align}
is a set-valued continuous mapping (in the Vietoris topology) then the result holds because $Z \mapsto \essinf Z$ is continuous in the strong topology.  By the closed graph theorem (see, e.g., \cite[Theorem 17.11]{AB07}) and (almost sure) continuity of $z \mapsto z\exp(-R(\X+z))$, $D$ is an upper continuous mapping.  Now, let $\mathcal{V} \subseteq \R_+$ be open in the subspace topology and define $D^-[\mathcal{V}] := \{Z \in L^\infty \; | \; D(Z) \cap \mathcal{V} \neq \emptyset\}$; if $D^-[\mathcal{V}]$ is open then $D$ is lower continuous.  Let $Z \in D^-[\mathcal{V}]$ and, in particular, let $v \in \mathcal{V}$ such that $v \in D(Z)$.
    \begin{enumerate}
    \item If $H_{Z-\essinf Z}(v)-v > 0$ then there exists a neighborhood $\mathcal{N}_Z$ around $Z$ such that $\mathcal{N}_Z \subseteq D^-[\mathcal{V}]$ by continuity of $Z \mapsto H_{Z-\essinf Z}(v)-v$.
    \item If $H_{Z-\essinf Z}(v)-v = 0$ then:
        \begin{enumerate}
        \item If $v > 0$ then, by $\mathcal{V}$ open, take $\epsilon > 0$ so that $v-\epsilon \in \mathcal{V}$.  If Theorem~\ref{thm:unique}\eqref{thm:unique-comonotone}-\eqref{thm:unique-linear} holds then $\hat v \mapsto H_{Z-\essinf Z}(\hat v) - \hat v$ is strictly decreasing; if Theorem~\ref{thm:unique}\eqref{thm:unique-monotone}-\eqref{thm:unique-concave} holds then $\hat v \mapsto \EP[(Z-\essinf Z - \hat v)\exp(-R(\X+Z-\essinf Z-\hat v))]$ is strictly decreasing (in a neighborhood of $v$).  Therefore for $\epsilon$ small, $H_{Z-\essinf Z}(v-\epsilon) - (v-\epsilon) > 0$ and the result follows by continuity as in the prior case.
        \item If $v = 0$ then, by construction, $D^-[\mathcal{V}] = L^\infty$ and the result follows.
        \end{enumerate}
    \end{enumerate}
\end{enumerate}

\item First, we wish to note that the condition imposed herein appears also in Theorem~\ref{thm:unique} and, therefore, the optimization representation of Proposition~\ref{prop:opt-rep} holds. 
    \begin{enumerate}
    \item Let $Z_1 \geq Z_2$.  Note that this implies $\essinf Z_1 \geq \essinf Z_2$ and $\essinf[\X+Z_1] \geq \essinf[\X+Z_2]$.  If $\bar V(Z_2) \leq \essinf Z_1$ then, by the lower bound on $\bar V(Z_1)$ proven above, monotonicity trivially follows.  Assume, now, that $\bar V(Z_2) > \essinf Z_1$.  By a straightforward application of Proposition~\ref{prop:opt-rep}, $\bar V(Z_1) \geq \bar V(Z_2)$ if, and only if, 
\begin{align}
&\{v \geq \essinf Z_1 \; | \; \EP[(Z_1-v)\exp(-R(\X+Z_1-v))] \geq 0, \; \essinf [\X+Z_1] - v \in \bbd\} \\
&\supseteq \{v \geq \essinf Z_2 \; | \; \EP[(Z_2-v)\exp(-R(\X+Z_2-v))] \geq 0, \; \essinf [\X+Z_2]- v \in \bbd\}.
\end{align}
By Proposition~\ref{prop:opt-rep} and the assumption that $\bar V(Z_2) > \essinf Z_1$, these sets are non-empty. By assumption and Proposition~\ref{prop:monotone-concave}, $\EP[(Z_1-v)\exp(-R(\X+Z_1-v))] \geq \EP[(Z_2-v)\exp(-R(\X+Z_2-v))]$ and $\essinf [\X+Z_1]- v \geq \essinf [\X+Z_2] - v$.  Therefore the constraints are more restrictive w.r.t.\ $Z_2$ than $Z_1$ and the result follows.

    \item By translativity and monotonicity, for any $Z_1,Z_2 \in L^\infty$, $\bar V(Z_1) \leq \bar V(Z_2 + \|Z_1 - Z_2\|_\infty) = \bar V(Z_2) + \|Z_1 - Z_2\|_\infty$.  Taking the same inequality but switching $Z_1$ and $Z_2$ proves $|\bar V(Z_1) - \bar V(Z_2)| \leq \|Z_1 - Z_2\|_\infty$. 
    \end{enumerate}

\item First, we wish to note that the condition imposed herein appears also in Theorem~\ref{thm:unique} and, therefore, the optimization representation of Proposition~\ref{prop:opt-rep} holds. 
    \begin{enumerate} 
    \item $\bar V$ is concave if its hypograph 
    \[\operatorname{hypo}\bar V := \{(Z,v) \in L^\infty \times \R \; | \; \EP[(Z-v)\exp(-R(\X+Z-v))] \geq 0, \; \essinf[\X+Z] - v \in \bbd\}\] 
    is convex.  Let $(Z_1,v_1),(Z_2,v_2) \in \operatorname{hypo}\bar V$ and $\lambda \in [0,1]$.  Note that $\essinf[\X + \lambda Z_1 + (1-\lambda)Z_2] - [\lambda v_1 + (1-\lambda)v_2] \geq \lambda [\essinf[\X+Z_1] - v_1] + (1-\lambda) [\essinf[\X+Z_2] - v_2]$. 
    Consider now $(Z,v) \mapsto \EP[(Z-v)\exp(-R(\X+Z-v))]$.  By the concavity assumption on $z \mapsto z\exp(-R(\X+z))$ (and Proposition~\ref{prop:monotone-concave}),
    \begin{align*}
    \EP[&(\lambda(Z_1-v_1)+(1-\lambda)(Z_2-v_2))\exp(-R(\X+\lambda(Z_1-v_1)+(1-\lambda)(Z_2-v_2)))]\\
    &\geq \lambda\EP[(Z_1-v_1)\exp(-R(\X+Z_1-v_1))] + (1-\lambda)\EP[(Z_2-v_2)\exp(-R(\X+Z_2-v_2))] \geq 0. 
    \end{align*} 
    From these properties it is trivial to conclude $\lambda(Z_1,v_1)+(1-\lambda)(Z_2,v_2) \in \operatorname{hypo}\bar V$ and the proof is concluded.

    \item $\bar V$ is weak* upper semicontinuous if and only if $\{Z \in L^\infty \; | \; \bar V(Z) \geq v\}$ is weak* closed for every $v \in \R$.  By~\cite[Proposition 5.5.1]{KS09} and the concavity of $\bar V$, this is true if and only if $\{Z \in L^\infty \; | \; \bar V(Z) \geq v, \; \|Z\|_\infty \leq k\}$ is closed in probability for every $v \in \R$ and $k \in \Rplus$.  Let $Z_n \to Z$ in probability so that $Z_n \in \{Z \in L^\infty \; | \; \bar V(Z) \geq v, \; \|Z\|_\infty \leq k\}$.  First, $\|Z\|_\infty \leq k$ trivially.  Now we wish to show that $\bar V(Z) \geq v$; we will accomplish this separately if $\bbd = \R$ and if $\bbd = \Rplus$.

    Let $\bbd = \R$.  If $v \leq \essinf Z$ then $\bar V(Z) \geq v$ trivially by Property~\eqref{lemma:barV-bound}.  Let $v > \essinf Z$ and define $\mathcal{D}_k := \{Z \in L^\infty \; | \; \|Z\|_\infty \leq k\}$.  By construction, $\bar V(Z) \geq v$ if and only if $\EP[(Z-v)\exp(-R(\X+Z-v))] \geq 0$.  As $Z \in \mathcal{D}_k \mapsto \EP[(Z-v)\exp(-R(\X+Z-v))]$ is continuous w.r.t.\ convergence in probability, the result follows.

    Let $\bbd = \Rplus$.  If $v \leq \essinf Z$ then $\bar V(Z) \geq v$ trivially by Property~\eqref{lemma:barV-bound}.  Let $v > \essinf Z$.  By construction, $\bar V(Z) \geq v$ if and only if $\EP[(Z-v+\epsilon)\exp(-R(\X+Z-v+\epsilon))] > 0$ for every $\epsilon > 0$ and $\X+Z-v \geq 0$ a.s.  (The ``if'' statement holds immediately.  To prove the ``only if'' claim: if $\X+Z-v > 0$ a.s.\ then, by continuity, $\EP[(Z-v)\exp(-R(\X+Z-v))] \geq 0$ as desired; if $\P(\X+Z-v = 0) > 0$ then $v = \essinf[\X+Z]$ (with $\X+Z$ attaining its essential infimum with positive probability) and -- as a consequence -- $\bar V(Z) \geq v$ only if $Z \not\in \dom V$, i.e., $H_Z(v-\epsilon) > v-\epsilon$ for every $\epsilon > 0$.)  As such, the existence of an almost surely converging subsequence implies $\X+Z-v \geq 0$ a.s.  It remains to show that $\EP[(Z-v+\epsilon)\exp(-R(\X+Z-v+\epsilon))] > 0$ for every $\epsilon > 0$.  By continuity of $R$, $\EP[(Z-v+\epsilon)\exp(-R(\X+Z-v+\epsilon))] \geq 0$ for every $\epsilon > 0$.  If there exists some $\epsilon^* > 0$ such that $\EP[(Z-v+\epsilon^*)\exp(-R(\X+Z-v+\epsilon^*))] = 0$ then, by the concavity assumption, $\EP[(Z-v+\epsilon)\exp(-R(\X+Z-v+\epsilon))] < 0$ for every $\epsilon \in (0,\epsilon^*)$ which forms a contradiction and the result follows.
    \end{enumerate}
\end{enumerate}
\end{proof}

\subsection{Proof of Corollary \ref{cor:bernoulli}}
\begin{proof}
To simplify this proof, we will denote $\X_p = B[p](\X-\essinf\X) +\essinf\X$ and $Z_p=B[p](Z-\essinf Z ) +\essinf Z$ for $p\in[0,1]$ throughout. 
First, we may assume that $\P(Z \ne Z_p) >0, ~p\in[0,1)$. Otherwise, $Z=c$ for some constant $c \in \R$, and $V(c)=\hat V(c)=c$.
Second, note that if $R$ satisfies conditions \eqref{thm:unique-linear}-\eqref{thm:unique-concave} of Theorem~\ref{thm:unique}, the uniqueness result holds for the pair $\X_p, Z_p$ for any $p \in [0,1]$.  If Theorem~\ref{thm:unique}\eqref{thm:unique-comonotone} holds for $\X,Z$ (i.e., $Z$ and $\X+Z$ are comonotonic), then the same holds also for $\X_p,Z_p$.
Third, note that by our assumption (and to keep $R$ satisfying Assumption \ref{assump:R}), $R(0)=-\infty$. 
Finally, note that for $0\leq p<1$: 
\begin{align*}
\essinf[\X_p+Z_p]&=\essinf[ B[p](\X+Z-\essinf\X-\essinf Z) +\essinf\X+\essinf Z ] \\
&= \essinf[ B[p](\X+Z-\essinf[\X+Z]) ] + \essinf[\X+Z] = \essinf[\X+Z] .
\end{align*}
Therefore, by Corollary \ref{cor:Rplus}, $Z_p\in\dom V(\cdot;\X_p)$ for any $0<p<1$ since 
\begin{align}
\nonumber &\EP\left[ \exp\left(-R( \X_p+Z_p - \essinf [\X_p+Z_p])\right)\right]\\
\nonumber &= \EP\left[ \exp\left(-R( B[p](\X+Z -\essinf [\X+Z] ) +\essinf[\X+Z] - \essinf[\X+Z])\right)\right]\\
&=\EP\left[ \exp\left(-R( B[p](\X+Z - \essinf [\X+Z] ))\right)\right]=\infty.\label{eq:cor:Rplus-ok}
\end{align} 

To complete this proof, first, for arbitrary $Z \in L^\infty$, we will prove $\lim_{p \nearrow 1} V(Z_p,\X_p)$ exists.  Then we will utilize the definitional representation~\eqref{eq:barV} of $\bar V$ to prove that $\hat V(Z) = \bar V(Z)$ for any $Z \in L^\infty$.

We will prove $\lim_{p \nearrow 1} V(Z_p;\X_p)$ exists by showing that $p \in (0,1) \mapsto V(Z_p;\X_p)$ is monotonic in $p$.  Fix $p \in (0,1)$ and $B[p]$ independent from $Z,\X$.  To simplify notation, let $V := V(Z_p;\X_p)$ and $V' := \frac{\d}{\d p} V(Z_p;\X_p)$.  Note that $V$ is the solution to the fixed point problem $V = H_{Z_p}(V;\X_p)$, i.e.,
\begin{align*}
V &= 
\frac{\EP[(Z-\essinf Z)\exp(-R(\X +  Z - V))]p}{\EP[\exp(-R(\X +  Z - V))]p + \exp(-R(\essinf[\X+Z] - V))(1-p)} +\essinf Z.
\end{align*}
Therefore, assuming the derivative $V'$ exists, it must satisfy: 
\begin{align*}
&V'\left(\EP[\exp(-R(\X +  Z - V))]p + \exp(-R(\essinf[\X+Z] - V))(1-p)\right)\\
&\qquad + (V - \essinf Z)\left(V'\EP[R'(\X+ Z-V)\exp(-R(\X+ Z-V))]p\right) \\
&\qquad+ (V - \essinf Z) \left(\EP[\exp(-R(\X+ Z-V))] -\exp(-R(\essinf[\X+Z]-V))  \right)\\
&\qquad+(V - \essinf Z) V'R'(\essinf[\X+Z]-V)\exp(-R(\essinf[\X+Z]-V))(1-p)\\
&\quad = V'\EP[ (Z-\essinf Z) R'(\X +  Z - V)\exp(-R(\X+ Z-V))]p + \EP[ (Z-\essinf Z)\exp(-R(\X+ Z-V))]\\
&\Rightarrow\; V'\Big(\EP[(1-( Z-V)R'(\X+ Z-V))\exp(-R(\X+ Z-V))]p \\
&\qquad+ (1 + (V-\essinf Z)R'(\essinf[\X+Z]-V))\exp(-R(\essinf[\X+Z]-V))(1-p)\Big)\\
&\qquad = \EP[( Z-V )\exp(-R(\X+ Z-V))] + (V-\essinf Z)\exp(-R(\essinf[\X+Z]-V))\\
&\Rightarrow\; V'\EP[(1-(Z_p-V)R'(\X_p+Z_p-V))\exp(-R(\X_p+Z_p-V))]\\
&\qquad = \EP[ (Z-\essinf Z)\exp(-R(\X+ Z-V))] \\
&\qquad+ (V-\essinf Z)(\exp(-R(\essinf[\X+Z]-V)) - \EP[\exp(-R(\X+ Z-V))]).
\end{align*}
Therefore $V'$ exists if 
\begin{align}
\EP[(1-(Z_p-V)R'(\X_p+Z_p-V))\exp(-R(\X_p+Z_p-V))] \neq 0.
\label{eq:Theta-deriv-constr}
\end{align}
Under conditions~\eqref{thm:unique-monotone}-\eqref{thm:unique-concave} of Theorem~\ref{thm:unique},~\eqref{eq:Theta-deriv-constr} is satisfied by construction as in the proof of Theorem~\ref{thm:unique} as this expression is strictly positive.  In cases~\eqref{thm:unique-comonotone}-\eqref{thm:unique-linear} of Theorem~\ref{thm:unique}, we wish to rewrite \eqref{eq:Theta-deriv-constr}.  Specifically, by construction of $V$,
\begin{align*}
&\EP[(1-(Z_p-V)R'(\X_p+Z_p-V))\exp(-R(\X_p+Z_p-V))]\\
&\quad = \EP[\exp(-R(\X_p+Z_p-V))]\\
&\qquad+\frac{\EP[Z_p\exp(-R(\X_p+Z_p-V))]\EP[R'(\X_p+Z_p-V)\exp(-R(\X_p+Z_p-V))]}{\EP[\exp(-R(\X_p+Z_p-V))]}\\
&\qquad- \frac{\EP[\exp(-R(\X_p+Z_p-V))]\EP[Z_p R'(\X_p+Z_p-V)\exp(-R(\X_p+Z_p-V))]}{\EP[\exp(-R(\X_p+Z_p-V))]}.
\end{align*}
Therefore, as shown in the proof of Theorem~\ref{thm:unique}, this expression is strictly positive.

Therefore, $V' \geq 0$ if, and only if, $\EP[ (Z-\essinf Z)\exp(-R(\X+ Z-V))] + (V-\essinf Z)(\exp(-R(\essinf[\X+Z]-V)) - \EP[\exp(-R(\X+ Z-V))]) \geq 0$.  Since $ Z-\essinf Z \geq 0$ a.s.: $\EP[ (Z-\essinf Z)\exp(-R(\X+ Z-V))] \geq 0$ trivially, $V \geq 0$ by Proposition~\ref{lemma:barV}\eqref{lemma:barV-bound}, and $\exp(-R(\essinf[\X+Z]-V)) \geq \exp(-R(\X+ Z-V))$ by monotonicity of $R$.  Therefore $V' \geq 0$.  In particular, this implies that $p \in (0,1) \to  V(Z_p;\X_p)$ is nondecreasing.  Additionally, $V(Z_p;\X_p) \leq  \essinf [\X+Z]$ for every $p \in (0,1)$ by Corollary \ref{cor:exist-cond}. 
Therefore, by application of the monotone convergence theorem, we can guarantee the existence of $\hat V(Z)$.

We will complete this proof by proving that this limit is equivalent to the form~\eqref{eq:barV}.  Fix $Z \in L^\infty$.  First, we will show $\hat V(Z) = \bar V(Z)$ if $Z \in \dom V$.  Second, we will consider $Z \not\in \dom V$. 
\begin{enumerate}
\item First, fix $Z \in L^\infty$ such that $V(Z)$ exists. 
We will prove $\bar V(Z) = \lim_{p \nearrow 1} V(Z_p;\X_p) $ by showing that $p \in [0,1] \mapsto V(Z_p; \X_p ) $ is continuous. 
By Corollary~\ref{cor:Rplus} and choice of $Z$, $V(Z_p;\X_p) $ exists for any $p \in [0,1)$, as we have shown in \eqref{eq:cor:Rplus-ok}, and $V(Z;\X)$ exists by our assumption. By construction of $\bar V$, we have that $\bar V(Z_p;\X_p)= V(Z_p; \X_p),~p\in[0,1].$
By Lemma~\ref{lemma:barV}\eqref{lemma:barV-bound}, Corollary \ref{cor:exist-cond} and properties of the pricing function under $\bbd := \Rplus$, $V(Z_p;\X_p)  \in [\essinf Z,\min\{\essinf[\X+Z],\esssup Z\}]$ for any $p \in [0,1]$. For any probability $p \in [0,1]$: 
\begin{align}
V(Z_p;\X_p) &= \FIX_v \left\{\bar H(p,v)\right\}\\
\bar H(p,v) &:= \frac{\EP[Z\exp(-R(\X+Z-v))]p + \essinf Z \exp(-R(\essinf[\X+ Z]-v))}{\EP[\exp(-R(\X+Z-v)]p + \exp(-R(\essinf[\X+ Z]-v))(1-p)} \wedge \essinf [\X+Z].
\end{align}
As $\bar H: [0,1] \times [\essinf Z,\min\{\essinf[\X+Z],\esssup Z\}] \to [\essinf Z,\min\{\essinf[\X+Z],\esssup Z\}]$ is jointly continuous and $V(Z_p;\X_p)$ is unique for every $p \in [0,1]$, the fixed point mapping is continuous (see, e.g.~\cite[Lemma C.1]{feinstein2020nash}), i.e., $p \in [0,1] \to V(Z_p; \X_p ) $ is continuous.  
The result now follows as $\hat V(Z;\X) = \lim\limits_{p \nearrow 1} V(Z_p;\X_p) = V(Z;\X).$
%
\item Fix $Z \in L^\infty$ such that $Z \not\in \dom V$.
Consider first the case when 
\blue{$\essinf[\X+Z]\not\in\dom H_Z$}. Then we must have that $\liminf_{v \nearrow \essinf [\X + Z]} H_Z(v) \ge \essinf [\X + Z]$. Because $R$ is increasing, it also follows that $\EP[( Z - \essinf [\X + Z])\exp(-R(\X+ Z-v))] \ge0,$ for all $\essinf Z\le v < \essinf[\X+Z].$ Therefore, 
$
\EP[( Z - v)\exp(-R(\X+ Z-v))]  >0,~\essinf Z\le v < \essinf[\X+Z].
$
This also follows in case 
\blue{$\essinf[\X+Z]\in\dom H_Z$} (in which case we must have $H_Z( \essinf[\X+Z]) >  \essinf[\X+Z],$ 
and thus
$
\EP[( Z - v)\exp(-R(\X+ Z-v))]  >0,~\essinf Z\le v < \essinf[\X+Z],
$
as otherwise we would reach a contradiction to $Z \not\in \dom V$). For any $\epsilon\in\big(0,  \frac{\essinf\X}{2}\big),$
define 
\begin{align}
&p_\epsilon := \\
&\frac{(\essinf\X-\epsilon)\exp(-R(\epsilon))}{(\essinf\X-\epsilon)\exp(-R(\epsilon)) + \EP[(Z - \essinf[\X+Z] +\epsilon)\exp(-R(\X+ Z-\essinf[\X+Z]+\epsilon))]} \in (0,1)
\end{align}
so that $V(Z_{p_\epsilon};\X_{p_\epsilon}) = \essinf[\X+Z]-\epsilon.$  
Using the fact that $V'\ge0$, $V(Z_{p};\X_{p})  \ge  \essinf[\X+Z]-\epsilon$ for any $p>p_{\epsilon}$. Therefore, 
$p_{\epsilon_2} > p_{\epsilon_1}$ for $\epsilon_2 < \epsilon_1$, and
$\lim_{\epsilon \searrow 0} p_\epsilon$ exists and is bounded from above by $1$.  If $\lim_{\epsilon \searrow 0} p_\epsilon = 1$, then we conclude:
\[\lim_{p \nearrow 1} V(Z_p;\X_p) = \lim_{\epsilon \searrow 0} V(Z_{p_\epsilon};\X_{p_\epsilon})= \lim_{\epsilon \searrow 0} (\essinf[\X+Z]-\epsilon)=  \essinf [\X+Z].\] 
Otherwise, if $p_* := \lim_{\epsilon \searrow 0} p_\epsilon < 1$.  Then, by the above construction, $V(Z_p;\X_p) = \essinf[\X+Z]$ for every $p > p_*$ and, as a direct consequence, $\lim_{p \nearrow 1} V(Z_p;\X_p) = \essinf[\X+Z]$ as well.
\end{enumerate}
\end{proof}

\section{Proofs for Section \ref{sec:idf}}
\subsection{Proof of Lemma~\ref{lemma:f}}
\begin{proof}
\begin{enumerate}
\item First we will show that $f^q(s) \in \R$ for every $s \in \R_+$.  This is trivially true if $sq \not\in \dom V$.  Assume $sq \in \dom V$; $f^q(s)$ exists if and only if the denominator ($\EP[(1-[sq - \bar V(sq)]R'(\X+sq-\bar V(sq)))\exp(-R(\X+sq-\bar V(sq)))]$) is nonzero.  In fact, we will demonstrate that this denominator is strictly positive.  Note also that $\bar V(sq) = V(sq)$ by Theorem~\ref{thm:unique}.  
As demonstrated in the proof of Corollary~\ref{cor:bernoulli}, this denominator is strictly positive under any of the conditions of Theorem~\ref{thm:unique}.
%

Second assume $s \in \operatorname{int}\{s \in \R_+ \; | \; sq \in \dom V\}$.  We now wish to consider $\frac{\d}{\d s}V(sq)$.  For simplicity of notation, let $V := V(sq)$, $V' := \frac{\d}{\d s}V(sq)$, $R := R(\X+sq-V)$, and $R' := R'(\X+sq-V)$:
\begin{align*}
&V' \EP[\exp(-R)] - V \EP[(q - V')R'\exp(-R)] = \EP[q \exp(-R)] - \EP[sq(q - V')R'\exp(-R)]\\
&\Rightarrow \; V'\EP[(1 - [sq-V]R')\exp(-R)] = \EP[q(1 - [sq-V]R')\exp(-R)]\\
&\Rightarrow \; V' = \frac{\EP[q(1-[sq-V]R')\exp(-R)]}{\EP[(1 - [sq-V]R')\exp(-R)]}.
\end{align*}
That is, $V' = f^q(s)$ in this case.

Third assume $s \in \operatorname{int}\{s \in \R_+ \; | \; sq \not\in \dom V\}$.  By construction, $\bar V(sq) = \essinf [\X + s q]$.  Immediately this implies $\frac{\d}{\d s}\bar V(sq) = f^q(s)$ in this case by Assumption~\ref{ass:essinf}.

Finally, by continuity of $s \mapsto \bar V(sq)$ (see Lemma~\ref{lemma:barV}\eqref{lemma:barV-cont}) and the fact that $\bar V(0)=0$ the result follows.
\item The assumption implies, for $sq \in \dom V$, $(1 - [sq - V(sq)]R'(\X+sq-V(sq)))\exp(-R(\X+sq-V(sq))) \geq 0$ a.s.\ (resp.\ strictly positive if strict monotonicity).  Therefore $f^q(s) \geq \essinf q$ for every $s \in \R_+$ trivially and if $sq \in \dom V$ with $\P(q > \essinf q) > 0$ then $f^q(s) > \essinf q$.
\item The result follows trivially by continuity of $R'$ and $\bar V$.
\item By the relation in Property~\eqref{lemma:f-exist}, if $s \mapsto \bar V(sq)$ is concave then $f^q$ is nonincreasing.  Therefore the relation holds by Lemma~\ref{lemma:barV}\eqref{lemma:barV-concave}.
\end{enumerate}
\end{proof}

\subsection{Proof of Lemma~\ref{lemma:barf}}
\begin{proof}
\begin{enumerate}
\item $\bar f^q(0) = \EP[q\exp(-R(\X))]/\EP[\exp(-R(\X))]\geq \essinf q$ and, for $s > 0$, $\bar f^q(s) = \frac{\bar V(sq)}{s} \geq \essinf q$ by Lemma~\ref{lemma:barV}\eqref{lemma:barV-bound}.  Now consider $\P(q > \essinf q) > 0$:
    \begin{enumerate}
    \item If $sq \in \dom V$ then it follows that $\bar f^q(s) = \frac{\EP[q\exp(-R(\X+sq-V(sq)))]}{\EP[\exp(-R(\X+sq-V(sq)))]} > \essinf q$.
    \item If $sq \not\in \dom V$ then, by Theorem~\ref{thm:exists}, it must be that $\bbd = \Rplus$ and, in particular, $\essinf \X \in \Rplus$.  Therefore, $\bar f^q(s) = \frac{\essinf [\X + s q]}{s} =\frac{\essinf \X}{s} + \essinf q> \essinf q$.
    \end{enumerate}

\item Recall, $\bar f^q(s) = \bar V(sq)/s$ for $s > 0$ and $\bar f^q(0) = \EP[q\exp(-R(\X))]/\EP[\exp(-R(\X))]$.  By continuity of $\bar V$ (see Lemma~\ref{lemma:barV}\eqref{lemma:barV-cont}), continuity of the inverse demand function holds so long as $\lim_{s \to 0} \bar f^q(s) = \EP[q\exp(-R(\X))]/\EP[\exp(-R(\X)]$.  We will consider two cases: $\bbd = \R$ and $\bbd = \Rplus$.
    \begin{enumerate}
    \item Let $\bbd = \R$.  Then $\bar V = V$ by construction as $\dom V = L^\infty$.  Note that, here, $\bar f^q(s) = V(sq)/s = \EP[q \exp(-R(\X+sq-V(sq)))]/\EP[\exp(-R(\X+sq-V(sq)))]$ for every $s > 0$.  Noting $V(0) = 0$ implies $\lim_{s \searrow 0} \bar f^q(s) = \EP[q\exp(-R(\X))]/\EP[\exp(-R(\X))]$ and the result is proven. 
    \item Let $\bbd = \Rplus$.  
    First we want to consider a small remark on the domain of $V$; if $Z \in L^\infty$ such that $\| Z-\essinf Z\|_{\infty} < \essinf\X/2$ then $Z \in \dom V$ since 
$H_Z(v) \leq \esssup Z < \essinf\X/2 + \essinf Z < \essinf[\X+Z]$ for any $v \in [\essinf Z,\essinf[\X+Z])$. 
    
    If $q$ is deterministic, then by Lemma \ref{lemma:barV}\eqref{lemma:barV-bound} $\bar V(sq) =sq$, and therefore $\bar f^q(s)=q$. Otherwise, if $\esssup q - \essinf q >0$, we have that  $\bar f^q(s) = \frac{\bar V(sq)}s = \frac{V(s(q - \essinf q)) + s \essinf q}{s}$ 
    for $s < \essinf\X/[2(\esssup q - \essinf q)]$.  As with the prior case, for this small $s$,
    \[\bar f^q(s) = \frac{\EP[(q-\essinf q)\exp(-R(\X + s[q - \essinf q] - V(s[q - \essinf q])))]}{\EP[\exp(-R(\X + s[q - \essinf q] - V(s[q - \essinf q])))]} + \essinf q.\]
    Again, noting $V(0) = 0$ then trivially $\lim_{s \searrow 0} \bar f^q(s) = \EP[q\exp(-R(\X))]/\EP[\exp(-R(\X))]$.
    \end{enumerate}

\item Note that $\bar f^q(s) = \sup\{p \in \R_+ \; | \; \EP[(q-\essinf q -p)\exp(-R(\X+s(q-\essinf q-p)))] \geq 0, \; \essinf\X - sp \in \bbd\} + \essinf q$ by using the optimization representation of $\bar V$ provided in Proposition~\ref{prop:opt-rep}.  By construction, $\bar f^q(s) = \bar f^{q-\essinf q}(s) + \essinf q$.  Therefore, monotonicity holds in general if it is true for every random variable $q$ such that $\essinf q = 0$.  Assume $\essinf q = 0$ and let $s_1 \geq s_2$.  Fix $p \in [0,\bar f^q(s_1)]$.  Immediately $\essinf\X - s_1 p \leq \essinf\X - s_2 p$; by $\essinf\X - s_1 p \in \cl\bbd$ the same must be true for $\essinf\X - s_2 p$.  Therefore monotonicity follows if $\EP[(q-p)\exp(-R(\X+s_2(q-p)))] \geq \EP[(q-p)\exp(-R(\X+s_1(q-p)))]$.  This holds because
    \[\frac{\d}{\d s}\EP[(q-p)\exp(-R(\X+s(q-p)))] = -\EP[(q-p)^2 R'(\X+s(q-p))\exp(-R(\X+s(q-p)))] \leq 0.\]

\end{enumerate}
\end{proof}

\section{Proofs for Section \ref{sec:cs}}
\subsection{Proof of Proposition \ref{prop:V-exponential}}
\begin{proof}
Recall that $\alpha = \left(\sum_{i = 1}^n \frac{1}{\alpha_i}\right)^{-1}$.  Let $\alpha^* := \left(\sum_{i = 1}^{n+1} \frac{1}{\alpha_i}\right)^{-1} = \left(\frac{1}{\alpha} + \frac{1}{\alpha_{n+1}}\right)^{-1} < \alpha$.  Consider now $\frac{\d}{\d\alpha}V(Z;\alpha)$.  We will show that $\frac{\d}{\d\alpha}V(Z;\alpha) \leq 0$ and therefore, it follows that $V(Z;\alpha) \leq V(Z;\alpha^*)$ for every $Z \in L^\infty$.
\begin{align*}
\frac{\d}{\d\alpha}V(Z;\alpha) &= \frac{\EP[\exp(-\alpha Z)]\frac{\d}{\d\alpha}\EP[Z \exp(-\alpha Z)] - \EP[Z \exp(-\alpha Z)]\frac{\d}{\d\alpha}\EP[\exp(-\alpha Z)]}{\EP[\exp(-\alpha Z)]^2} \\
&= \frac{\EP[\exp(-\alpha Z)]\EP[-Z^2 \exp(-\alpha Z)] - \EP[Z \exp(-\alpha Z)]\EP[-Z \exp(-\alpha Z)]}{\EP[\exp(-\alpha Z)]^2}.
\end{align*}
Let $x = \alpha Z$. Then $\frac{\d}{\d\alpha}V(Z;\alpha) \leq 0$ if, and only if, $\EP[\exp(-x)]\EP[x^2 \exp(-x)] - \EP[x \exp(-x)]^2 \geq 0$.
\begin{align*}
&\EP[\exp(-x)]\EP[x^2 \exp(-x)] - \EP[x \exp(-x)]^2\\
&= \EP[\exp(-x)]\EP[x^2 \exp(-x)] - \EP[\exp(-\frac{x}{2})x\exp(-\frac{x}{2})]^2\\
&\geq \EP[\exp(-x)]\EP[x^2 \exp(-x)] - \EP[\exp(-\frac{x}{2})^2]\EP[(x\exp(-\frac{x}{2}))^2]\\
&= \EP[\exp(-x)]\EP[x^2 \exp(-x)] - \EP[\exp(-x)]\EP[x^2\exp(-x)] = 0,
\end{align*}
where the inequality above follows from the Cauchy-Schwartz inequality.
\end{proof}

\subsection{Proof of Proposition \ref{prop:V-power}}
\begin{proof}
\begin{enumerate}
\item By Theorem~\ref{thm:unique}, $Z \in \dom V$ if and only if $H_Z(\X+\essinf Z) \leq \X + \essinf Z$ provided 
\blue{$\X+\essinf Z \in \dom H_Z$} or $\liminf_{v \nearrow \X+\essinf Z} H_Z(v) < \X + \essinf Z$ if 
\blue{$\X+\essinf Z\not\in\dom H_Z$}.  First assume 
\blue{$\X+\essinf Z \in \dom H_Z$}, then $H_Z(\X+\essinf Z) \leq \X+\essinf Z$  if and only if $\EP[Z(Z-\essinf Z)^{-\eta}] \leq (\X+\essinf Z)\EP[(Z-\essinf Z)^{-\eta}]$.  Rearranging terms completes the proof.  Now assume 
\blue{$\X+\essinf Z\not\in\dom H_Z$}. This can only happen if $\EP[(Z-\essinf Z)^{-\eta}] = \infty$.  However, by $\eta \leq 1$, $\EP[(Z-\essinf Z)^{1-\eta}] \leq \|Z-\essinf Z\|_{\infty}^{1-\eta} < \infty$.  Therefore $\liminf_{v \nearrow \X + \essinf Z} H_Z(v) = \essinf Z < \X + \essinf Z$ and, thus, $Z \in \dom V$.  Notably, in this case $\EP[(Z-\essinf Z)^{1-\eta}] \leq \X\EP[(Z-\essinf Z)^{-\eta}]$ as well.
\item This follows immediately by Lemma~\ref{lemma:barV} as $z\exp(-R(\X+z)) = z(\X+z)^{-\eta}$.
\item This follows by the same logic as Lemma~\ref{lemma:barV} because $z\exp(-R(\X+z)) = z(\X+z)^{-\eta}$.
\end{enumerate}
\end{proof}

\end{document}